\documentclass[a4paper,onecolumn, superscriptaddress,10pt, accepted=2021-02-06, issue=2, volume=3, shorttitle=papers/compositionality-3-2]{compositionalityarticle}
\pdfoutput=1

\usepackage[utf8]{inputenc}
\usepackage[english]{babel}
\usepackage[T1]{fontenc}
\usepackage{amsmath}

\newtheorem{theorem}{Theorem}
\newtheorem{lemma}{Lemma}
\newtheorem{definition}{Definition}

\newtheorem{corollary}{Corollary}
\newtheorem{example}{Example}
\newtheorem{assumption}{Assumption}
\newtheorem{remark}{Remark}

\newtheorem{notConv}{Notational convention}

\newenvironment{proof}{\noindent\textbf{Proof: }}{\hfill$\square$\\}

\usepackage{tikz}


\usepackage{amssymb}
\usepackage{bm}
\usepackage{bbold}
\usepackage{mathtools}
\usepackage{thmtools}
\usepackage{thm-restate}

\usepackage{array}
\usepackage{pdfpages}
\usepackage{wrapfig}
\usepackage{subfigure}
\usepackage{graphicx}
\usepackage{longtable}
\usepackage{booktabs}
\usepackage{pdflscape} 





















\makeatletter
\def\bstr{b}
\def\bfstr{bf}
\def\cstr{c}
\def\fstr{f}
\def\strLst{A,B,C,D,d,E,F,G,H,I,J,K,L,M,N,O,P,Q,R,S,T,U,V,W,X,Y,Z}

\newcommand{\MkB}[1]{\expandafter\def\csname\bstr#1\endcsname{\mathbb{#1}}}
  \@for\i:=\strLst\do{%
    \expandafter\MkB \i     }
    
\newcommand{\MkBF}[1]{\expandafter\def\csname\bfstr#1\endcsname{\mathbf{#1}}}
  \@for\i:=\strLst\do{%
    \expandafter\MkBF \i     }

\newcommand{\MkCal}[1]{\expandafter\def\csname\cstr#1\endcsname{\mathcal{#1}}}
  \@for\i:=\strLst\do{%
    \expandafter\MkCal \i     }
    
\newcommand{\MkFrak}[1]{\expandafter\def\csname\fstr#1\endcsname{\mathfrak{#1}}}
  \@for\i:=\strLst\do{%
    \expandafter\MkFrak \i     }
    
\makeatother

\newcommand{\Lin}[1]{\mathop{\mathsf{Lin}}(#1)}


\newcommand{\rSpan}[5]{\left(#1\xleftarrow{#2}#3\xrightarrow{#4}#5\right)}


\newcommand{\rSpanAlt}[5]{\left(#1\xhookleftarrow{#2}#3\xhookrightarrow{#4}#5\right)}


\newcommand{\cSquare}[1]{\Box(#1)}


\newcommand{\rMatch}[2]{\mathsf{M}_{#1}(#2)}






\newcommand{\LinAc}[1]{\overline{\mathsf{Lin}}(#1)}

\newcommand{\ac}[1]{\mathsf{#1}} 

\newcommand{\Shift}{\mathsf{Shift}}

\newcommand{\Trans}{\mathsf{Trans}}





\newcommand{\pB}[1]{\mathsf{PB}(#1)}
\newcommand{\pO}[1]{\mathsf{PO}(#1)}
\newcommand{\FPC}[1]{\mathsf{FPC}(#1)}


\newcommand{\mono}[1]{\mathsf{mono}(#1)}

\newcommand{\epi}[1]{\mathsf{epi}(#1)}

\newcommand{\mor}[1]{\mathsf{mor}(#1)}

\newcommand{\iso}[1]{\mathsf{iso}(#1)}

\newcommand{\obj}[1]{\mathsf{obj}(#1)}

\newcommand{\Match}[2]{\mathsf{M}_{#1}(#2)}


\newcommand{\sqMatch}[2]{\mathsf{M}^{sq}_{#1}(#2)}

\newcommand{\sqRMatch}[2]{\mathbf{M}^{sq}_{#1}(#2)}

\newcommand{\comp}[3]{#1 \stackrel{#2}{\blacktriangleleft} #3}

\newcommand{\sqComp}[3]{#1 \stackrel{#2}{\sphericalangle} #3}

\newcommand{\mIO}{\mathop{\varnothing}}




\newenvironment{romanenumerate}{
\begingroup
\renewcommand\labelenumi{(\roman{enumi})}
\renewcommand\theenumi\labelenumi
\begin{enumerate}}{\end{enumerate}
\endgroup}

\def\eqdef{\mathrel{:=}}
\def\iff{\mathrel{\Leftrightarrow}}
\def\iffeq{\mathrel{:\Leftrightarrow}}

\makeatletter
\renewcommand*{\@opargbegintheorem}[3]{\trivlist
      \item[\hskip \labelsep{\bfseries #1\ #2}] \textbf{(#3)}\ \itshape}
\makeatother

\newcommand{\inputtikz}[1]{%
 \ensuremath{\vcenter{\hbox{\includegraphics{diagrams/#1.pdf}}}}%
}

\newcommand{\inputtikzB}[2]{%
 \raisebox{#2cm}{\includegraphics{diagrams/#1.pdf}}%
}

\colorlet{h1color}{blue!40!black} 
\colorlet{h2color}{orange!90!black} 
\colorlet{h3color}{blue!40!white} 
\colorlet{h4color}{green!40!black} 
\colorlet{h5color}{yellow!80!black} 

\usepackage[numbers,sort&compress]{natbib}

\usepackage{hyperref}

\begin{document}
	
\title{Compositionality of Rewriting Rules with Conditions}

\date{}
\author{Nicolas Behr}
\email{nicolas.behr@irif.fr}
\homepage{http://nicolasbehr.com}
\orcid{0000-0002-8738-5040}
\thanks{corresponding author; the work of N.B.~was supported by a \emph{Marie Sk\l{}odowska-Curie Individual fellowship} (Grant Agreement No.~753750 -- RaSiR)}

\author{Jean Krivine}
\email{jean.krivine@irif.fr}
\homepage{https://www.irif.fr/\string~jkrivine/homepage/Home.html}
\orcid{0000-0001-7261-7462}

\affiliation{Universit\'{e} de Paris, CNRS, IRIF, F-75205, Paris, France}

\maketitle

\begin{abstract}
  We extend the notion of compositional associative rewriting as recently studied in the rule algebra framework literature to the setting of rewriting rules with conditions. Our methodology is category-theoretical in nature, where the definition of rule composition operations encodes the non-deterministic sequential concurrent application of rules in Double-Pushout (DPO) and Sesqui-Pushout (SqPO) rewriting with application conditions based upon $\cM$-adhesive categories. We uncover an intricate interplay between the category-theoretical concepts of conditions on rules and morphisms, the compositionality and compatibility of certain shift and transport constructions for conditions, and thirdly the property of associativity of the composition of rules.
\end{abstract}

\section{Introduction and relation to previous work}

Graph rewriting has emerged as a powerful formalism to represent complex systems whose dynamics can be captured by a finite set of rules. %
The \emph{rule-based modeling} approach, originally introduced by V.~Danos and C.~Laneve in the early 2000s~\cite{Danos:2003ab,Danos:2004aa,Danos:2003aa}, has developed into one of the main frameworks for the study of biochemical reaction systems (in the form of the two main frameworks \textsc{Kappa}~\cite{Danos:ab,Boutillier:2018aa} and \textsc{BioNetGen}\cite{Blinov:2004aa,Harris:2016aa}). The approach proposes to model protein-protein interaction networks using graph rewriting models, in which proteins are the vertices of a graph whose connected components denote molecular complexes. %
A paradigmatic application scenario for rule-based methods in the systems biology community is the study of signaling pathways~\cite{Danos:ab,danos2012graphs}, which are highly intricate biochemical reaction networks ubiquitous in living cells.

While the algorithmic aspects of graph rewriting are well-studied, programming language approaches to modeling with graphs are to date still a comparatively underdeveloped topic. %
Contrary to classical term rewriting, the notion of a \emph{match} of a graph rewriting rule and its \emph{effects} on a term (a graph) is subject to various definitions, allowing more or less control over possible rewrites. %
In addition, the mere nature of the graphs that are being rewritten impacts both the algorithmic design and expressiveness of graph rewriting. %
Category theory is a practical toolkit for equipping graphs with well-defined operational semantics. \emph{Double-Pushout (DPO) rewriting}~\cite{CorradiniMREHL97} is a widespread technique, partly because it does not yield side effects when rules are applied (which makes it amenable to static analysis for instance). %
However, when a graph rewriting rule entails node deletion, DPO semantics will not allow a match of such a rule to trigger if the node that is deleted is connected outside the domain of the match (which would yield side-effects). This has limited the practicality of DPO semantics in the context of biological modeling, where more permissive techniques have been employed. \emph{Sesqui-Pushout (SqPO) rewriting}~\cite{Corradini_2006} in particular is the technique that is used to rewrite \textsc{Kappa} graphs~\cite{Danos:ab}.

Quite orthogonal to the issue of defining rule matches and effects, having access to a fine-grained control over rule triggering is a key issue when graph rewriting is used as a modeling language. To this aim, graph rewriting rules have been equipped with \emph{application conditions}~\cite{habel2009correctness,ehrig2014mathcal}, which can be seen as constraints that need to be checked ``on the fly'' when a rewrite rule is applied. 

This paper presents a \emph{compositional} variant of DPO and SqPO-type rewriting for rules with conditions in a general category-theoretical setting. From a mathematical perspective, while the framework of both types of rewriting developed here relies upon the original definitions of DPO-rewriting  (see e.g.~\cite{ehrig2014mathcal}) and of SqPO-rewriting~\cite{Corradini_2006}, new developments are necessary in order to obtain the desired compositionality properties. For rules without conditions, one of the core technical obstacles has proved to be establishing a \emph{compositional associativity} property for sequential compositions of rewriting rules, which for the DPO-type case has been achieved in~\cite{bp2018,bp2019-ext,bdg2016}, and for the SqPO-type case in~\cite{nbSqPO2019}. The latter work also established a novel \emph{compositional concurrency} property for SqPO-type rewriting theories. Lifting these results to the settings of rules with application conditions is the core contribution of this work. 

The main motivation for our search for compositional rewriting theories is two-fold: in the setting of rewriting without conditions, the notion of \emph{rule algebras}~\cite{bdg2016,bdgh2016,bp2018,nbSqPO2019} has been developed as a new mathematical framework to encode the concurrent and combinatorial interactions of rewriting rules, which in particular allows one to develop principled and novel analysis techniques for stochastic rewriting systems~\cite{bdg2016,bp2018,bdg2019}. Especially the recent results of~\cite{bdg2019} hint at the intimate interplay of design choices in constructing rewriting systems with regards to their dynamical properties, and with the potential of greatly improving the tractability of the analysis of such systems via judiciously chosen \emph{constraints} on objects and rewriting rules (such as implemented in the form of rigidity constraints in the \textsc{Kappa} formalism~\cite{danos2008rule}). 

Our second main motivation originates in the desire to analyze rewriting systems \emph{statically}, rather than via simulation-based techniques. While the traditional rewriting theory generally approaches this problem from the viewpoint of \emph{derivation traces} (i.e.\ sequences of rule applications to a given input graph), we posit that a viable alternative approach may consist in focusing instead on \emph{sequential rule compositions}, which in particular when combined with application conditions is anticipated to yield a powerful framework to study the \emph{causality} of rewriting systems. %
For certain specialized applications of DPO-type graph rewriting, namely those in the well-established field %
of \emph{chemical graph rewriting}~\cite{Benk__2003,banzhaf_et_al:DR:2015:4968,Andersen_2016}, such types of analyses have already proven very fruitful~\cite{andersen2018rule,andersen2018towards,Fagerberg_2018,Andersen_2019}. Therefore, we believe that our compositional refinements described in the present paper can provide a significant contribution to future algorithm developments in this field.

\subsection{Related work}

\subsubsection{Traditional rewriting literature}

First and foremost, the work presented in this paper relies heavily upon the rich literature on categorical rewriting theories with its almost 50 year long history. The introduction of the seminal concept of \emph{adhesive categories} by Lack and Soboci{\'{n}}ski in the early 2000s~\cite{ls2004adhesive,lack2005adhesive} and its refinement to $\cM$-adhesive categories in the work of Ehrig et al.~\cite{ehrig2006adhesive,ehrig2010categorical,Habel:2012aa,Braatz:2010aa,ehrig2014mathcal} resulted in a powerful mathematical framework for \emph{Double-Pushout (DPO)} rewriting, and in particular permitted to unify numerous concepts from the earlier graph rewriting literature~\cite{DBLP:conf/gg/1997handbook}. While $\cM$-adhesive categories have been demonstrated to be capable of describing a broad range of data structures of relevance in applications of rewriting theories (see e.g.\ \cite{ehrig:2006aa} and \cite{bp2019-ext} for curated lists of examples), it is only via a second key refinement of the rewriting theory that more elaborate data structures such as e.g.\ chemical molecules may be encoded: the theory of \emph{constraints} and \emph{application conditions}. This theory had been a well-established component of some of the earliest categorical formulations of graph rewriting theories since the 1980s by Habel et al.~\cite{EhrigHabel1986,Habel1996}, and was later generalized to the $\cM$-adhesive setting as utilized in the present paper in~\cite{habel2009correctness,Pennemann:aa,ehrig2014mathcal}.

Building upon this extensive body of work, and motivated by applications to modeling in the life sciences (cf.\ \cite{BK2020} and comments below), we demonstrate in the present paper that by requiring the underlying $\cM$-adhesive category to satisfy certain additional technical assumptions, one obtains a refined variant of DPO-semantics that is well suited for developing novel static analysis techniques for rewriting systems. Referring to Remarks~\ref{rem:Shift} and~\ref{rem:dpoCC} in the main text for the precise technical details of the relationship between our ``refined'' and the ``traditional'' DPO semantics, our work moreover extends the traditional DPO-theory by providing an \emph{associativity theorem} which had not been previously known in the literature (cf.\ Theorem~\ref{thm:assocAC}).

The second type of categorical rewriting semantics considered in this paper is the \emph{Sesqui-Pushout (SqPO)} rewriting theory introduced in~\cite{Corradini_2006}. Despite the importance of this type of semantics especially in the modeling of biochemical reaction systems (cf.\ Section~\ref{sec:boChem}), the approach had been considerably less well-developed than the DPO-variant, lacking in particular a theory of constraints and application conditions 
as well as notions of \emph{concurrency} and \emph{associativity} theorems. While in~\cite{reversibleSqPO} an attempt had been made to identify a special sub-class of SqPO-type rewriting systems over $\cM$-adhesive categories in which the rewriting semantics coincides with that of the DPO-setting, and thus allowing to reuse the theory of application condition from this setting, this special case of \emph{``reversible''} SqPO-rules constitutes too strong a restriction in order to study the rewriting systems of relevance e.g.\ in biochemistry (Section~\ref{sec:boChem}). Building upon our earlier work on SqPO-rewriting in the setting of rules without conditions~\cite{nbSqPO2019}, and taking advantages of the close analogies with our refined DPO-type framework as presented in comprehensive technical detail in the present paper, we develop these missing elements of SqPO-rewriting theory (cf.\ Section~\ref{sec:SqPO}). Referring to~\cite{Corradini_2006} for an extended discussion, it might finally be worthwhile to note that SqPO-semantics in the setting of ($\cM$-) linear rules and ($\cM$-) monic matches (as is the case in the present paper) is well known to coincide with \emph{single-pushout (SPO)} semantics~\cite{loeweSPO}.

\subsubsection{Bio- and organo-chemistry}\label{sec:boChem}

An intriguing trend in modern rewriting theory over the past 15 years has been the development of rule-based modeling approaches in biochemistry (\textsc{Kappa}~\cite{Boutillier:2018aa}, \textsc{BioNetGen}~\cite{Harris:2016aa}) and in organic chemistry (\textsc{M{\O}D}~\cite{Andersen_2016}), which are based upon SqPO- and DPO-type semantics, respectively. While an detailed discussion of the numerous sophisticated theoretical and algorithmic techniques (including in particular various notions of \emph{static analysis techniques}, cf.\ e.g.\ \cite{Danos:ab,danos2008rule,danos2012graphs,Murphy_2010,Petrov_2012,Abou_Jaoud__2016,Camporesi_2017,Feret2012137} and \cite{Benk__2003,banzhaf_et_al:DR:2015:4968,Fagerberg_2018,andersen2018rule,Andersen_2019}) is beyond the scope of the present paper, suffice it here to comment on a salient technical point: notably, many of the developments in the aforementioned frameworks were not explicitly rooted in categorical rewriting theory itself, despite the foundations of the two fields upon this type of theory. We recently demonstrated in~\cite{BK2020} that based upon the results of the present paper, one may not only reformulate equivalently the current implementations of chemical rewriting theories, but one may also take advantage of our novel compositionality and associativity results in order to formulate rule algebra and tracelet theories (see below) to develop new approaches to the static analysis of complex reaction systems and their dynamics. The encodings of chemical molecules and their reactions as reported in~\cite{BK2020} in both the bio- and the organo-chemical settings are based upon typed variants of \emph{undirected simple graphs} that satisfy suitable sets of constraints, which is why we will take undirected simple graphs and rewriting rules thereof as a running example throughout this paper.

\subsubsection{Rule algebra and tracelet theory}

Rule-based modeling approaches for chemical reaction systems are part of a larger class of theories known as \emph{stochastic rewriting theories}. Over the past five years, we have demonstrated that one may express such systems based upon the general theory of \emph{continuous-time Markov chains (CTMCs)}~\cite{norris}, leading to the so-called \emph{stochastic mechanics} framework~\cite{bdg2016,bp2018,bp2019-ext,nbSqPO2019,bdg2019,BK2020}. At the core of this framework is the notion of \emph{rule algebras}, which encode the non-determinism in the choices of admissible matches in sequential compositions of rules via the construction of a certain \emph{binary operation} (on a vector space whose basis is indexed by equivalence classes of linear rules). As explained in detail in~\cite{bp2019-ext,bdg2019}, since the ``blueprint'' of this type of construction in the general mathematics and in particular combinatorics literature is provided by the operations of derivative and multiplication by formal variables acting upon the space of formal power series, the aforementioned binary operation must satisfy two fundamental technical properties: it must be \emph{associative} (in the standard sense of associativity of binary operations), and it must possess a \emph{neutral element}. Referring to loc.\ cit.\ for the precise technical details, the former property requires an \emph{associativity theorem} for sequential rule compositions, while the latter property requires the underlying category to possess a (strict) $\cM$-initial object. Finally, in order to formulate the stochastic rewriting theories themselves, it is necessary to endow the rule algebras with \emph{canonical representations}, the construction of which in turn hinges on the \emph{concurrency theorems} for the respective sort of rewriting. The technical results presented in the present paper have been recently applied in~\cite{BK2020} to establish precisely the rule algebra theories for linear rules with conditions both in DPO- and in SqPO-semantics. 

A second recent line of developments has been the introduction of the theory of \emph{tracelets} in~\cite{behr2019tracelets}, which also relies heavily on the technical results of the present paper. Intuitively, while the aforementioned rule algebraic constructions aim to reason based upon rule algebra products of linear rules (which, in a sense, encode ``sums over all possible ways to compose rules''), the concept of tracelets is more directly related to the classical rewriting-theoretical notion of \emph{derivation traces}~\cite{DBLP:conf/gg/1997handbook,ehrig:2006aa,Corradini_2006}.  We refer the interested readers to~\cite{behr2019tracelets} for an in-depth review of the relationship of tracelet theory with the traditional concepts of concurrency and related static analysis techniques, mentioning here only the fact that yet again this new theory relies upon the notion of associative compositional rewriting theories.

\subsection{A motivating example: rewriting simple graphs}\label{sec:motEx}

Since our main constructions will be somewhat technical, let us start with a simple example in order to provide the readers with an intuitive picture of the main concepts. To this end, consider the task of defining a sound notion of rewriting for \emph{undirected simple graphs}, the type of graphs where at most one undirected edge may exist between any two given vertices of the graph, as opposed to undirected multigraphs. While there exist various attempts in the literature to encode simple graphs directly via a suitable category-theoretical construction (cf.\ e.g.\ 
\cite{adamek1990abstract,Corradini_2006, braatz2008graph, reversibleSqPO,Braatz:2010aa}), the disadvantage of such an approach consists of the fact that these categories are not $\cM$-adhesive, and thus do not permit to profit from the constructions for $\cM$-adhesive categories available in the general DPO-rewriting framework. In fact, constructions of simple graph categories such as $\textbf{RDFGraph}$ of~\cite{braatz2008graph} (for a notion of directed typed simple graphs called RDF graphs) or $\textbf{SGraph}$ of~\cite{adamek1990abstract, Corradini_2006} (of undirected simple graphs) are well known to lack uniqueness of pushout complements (POCs), quintessential in defining DPO-semantics (Section~\ref{sec:DPO}) as well as (compositional) SqPO-semantics (Section~\ref{sec:SqPO}) in the first place. Albeit several elaborate workarounds to this problem have been considered in the literature (such as e.g.\ the \emph{minimal-POC-PO} semantics for RDF-graph rewriting in~\cite{braatz2008graph}), these workarounds introduce alterations to the rewriting semantics that do not necessarily reflect the intuitions one might have for simple graph rewriting from a mathematics or general applied sciences background.

To circumvent these issues, we will instead start our construction from a ``host category'' $\mathbf{uGraph}$ of \emph{undirected multigraphs}\footnote{Interestingly, the category $\mathbf{uGraph}$ constitutes one of the simplest non-trivial examples of a category that is $\cM$-adhesive, but \emph{not} adhesive, unlike the prototypical example of an adhesive category $\mathbf{Graph}$~\cite{ls2004adhesive}.}~\cite{padberg2017towards,bp2019-ext,BK2020} that is itself $\cM$-adhesive and satisfies all requirements to admit compositional rewriting theories of both DPO- and SqPO-type. Imposing the aforementioned structural constraint prohibiting parallel edges via the framework of conditions, we obtain an algorithmically versatile implementation of simple graphs. Following the rewriting theory paradigms, one may envision manipulating graphs via application of \emph{graph rewriting rules}. Intuitively, specifying a rewriting rule amounts to providing the following data:
\begin{itemize}
  \item An \emph{\underline{I}nput pattern} $I$,
  \item an \emph{\underline{O}utput pattern} $O$, and
  \item a \emph{\underline{K}ontext pattern} $K$ together with embedding morphisms $i:K\hookrightarrow I$ and $o:K\hookrightarrow O$.
\end{itemize}
Then in order to apply a rule $r\eqdef(O\xhookleftarrow{o}K\xhookrightarrow{i}I)$ to a graph $G$, one has to 
\begin{enumerate}
\item specify an embedding $m:I\hookrightarrow G$ of the input pattern $I$ into $G$ (i.e.\ one has to select a particular occurrence of the pattern $I$ in $G$), referred to as a \emph{match},
\item in $m(I)\subset G$, replace $m(I)$ with $m\circ i(K)$, and
\item ``glue'' a copy of $O$ onto $m\circ i(K)$ (according to the embedding morphism $o:K\hookrightarrow O$).
\end{enumerate}
For example, one may consider the rewriting rules
\begin{equation}
  e_{+}\eqdef
  \left(%
  \inputtikz{TVEG} \hookleftarrow
\inputtikz{TVG}
\hookrightarrow
\inputtikz{TVG}
\right)\,,\quad
  e_{-}\eqdef
  \left(%
  \inputtikz{TVG} 
  \hookleftarrow
  \inputtikz{TVG} 
  \hookrightarrow 
  \inputtikz{TVEG}  \right)\,,
\end{equation}
which implement the manipulations of \emph{adding} a new edge between two vertices ($e_{+}$) and \emph{deleting} an edge between two vertices ($e_{-}$). However, in the setting of rewriting of \emph{simple} graphs, the above tentative definition for the rewriting of graphs is as of yet incomplete. For example, given the graph below,
\[
  \inputtikz{introExA}\,,
\]
if we were to apply our edge creation rule at a match comprising the vertices marked $1$ and $2$, we would produce the non-simple graph depicted below (with $e'$ the newly produced edge):
\[
  \inputtikz{introExB}
\]
Therefore, in order to ensure that our transformations via application of rules keep intact the constraint of graphs to be simple, we need to endow the rules with \emph{application conditions}. While this approach is well known in the rewriting literature, it turns out that, via a careful re-implementation of the traditional framework, some interesting mathematical structures may be uncovered: rules with conditions are endowed with a structure of composition operation that allows one to synthesize sequences of causally sound rewriting steps \emph{without} reference to a host object. Crucially, this operation can be shown to be \emph{associative}. Contact with the traditional techniques of rewriting is then made in the form of a suitable adaption of the concurrency theorem, whereby a sequence of rule applications to a given object can always be equivalently described by a ``one-shot'' application of a sequential composite of the rules to that object (i.e.\ in a sense characterizing the effect of the sequential rule applications by the application of a single composite rule). The most subtle part of our results then consists in a certain compositional associativity property present in triple sequential compositions of rules.

To relate to the example at hand, in addition to using the standard application condition technique  to express \emph{constraints} on rules (such as the constraint that the edge creation rule should only be applied to a graph if its input is matched to two vertices that are not already linked), we will also be able to compute \emph{causal information}, such as e.g.\ that applying the edge creation rule $e_{+}$ followed by applying the edge deletion rule $e_{-}$ (in a fashion such as to delete the previously created edge) will lead to a rule with no effect, which can only be applied at two vertices that cannot be already linked (transforming the graph identically).  Finally, we will present examples of the property of associativity in Examples~\ref{ex:assocDPO} and~\ref{ex:assocSqPO} in the setting of DPO- and SqPO-type rewriting theories, respectively.

\subsection{Overview of main results and structure of the paper}

The main results of this paper are summarized as follows (cf.\ Figure~\ref{fig:structure}):
\begin{itemize}
\item We provide a self-contained account of all category-theoretical prerequisites and specialized definitions extracted from the rich literature on categorical rewriting theories (Sections~\ref{sec:CTP} and~\ref{sec:cond}; Appendix~\ref{sec:appendix}), intended not least as an entry-point for the general applied category theory audience.
\item For \emph{Double-Pushout (DPO) rewriting} (Section~\ref{sec:DPO}), we identify a set of sufficient assumptions on the underlying categories such that the resulting rewriting theories have suitable \emph{compositionality properties} (in the sense of a certain form of concurrency and associativity properties of sequential rule compositions). This part of our theory is thus to a large extent a careful ``fine-tuning'' of results from the traditional DPO-rewriting literature, combined with a number of original results.
\item For \emph{Sesqui-Pushout (SqPO) rewriting} (Section~\ref{sec:SqPO}), we present the first-of-its-kind category-theoretical compositional SqPO-type rewriting theory for rules with conditions, profiting from fruitful analogies to results in the DPO-type theory as presented in the first part of the paper.
\end{itemize}

\begin{figure}[htp]
\centering
\inputtikz{overview}
\caption{\label{fig:structure}Structure and \textbf{original contributions} of the paper.}
\end{figure}

\section{Category-theoretical preliminaries}\label{sec:CTP}

Rewriting in its modern formulations is a concept that heavily relies on specific types of categorical structures. In this section, we collect all the necessary prerequisites that allow one to formulate consistent frameworks of associative rewriting theories with conditions on objects and morphisms, in both the \emph{Double-Pushout (DPO)} and the \emph{Sesqui-Pushout (SqPO)} approaches. While many of the mathematical details of these setups are by now standard in the literature, we will emphasize the specific additional conditions that are required in order to guarantee \emph{associativity} (in the sense of~\cite{bp2018}) and \emph{compositionality}, where both of these properties concern concurrent compositions of rules with conditions.

\subsection{$\cM$-adhesive categories}\label{sec:MadhCats}

We begin by quoting a number of essential definitions and standard results from the literature, where our main references will be~\cite{Braatz:2010aa,habel2009correctness,ehrig2014mathcal} (see 
also~\cite{radke2016theory,schneider2017symbolic} for some more recent works). Let us first recall the notion of $\cM$-adhesive categories, which is the most general mathematical setting currently known that allows one to define DPO- and SqPO-type rewriting theories, and which generalizes adhesive categories~\cite{lack2005adhesive}.
\begin{definition}[\cite{Braatz:2010aa}, Def.~2.1]\label{def:Madh}
Let $\bfC$ be a category, and let $\cM$ be a class of \emph{monomorphisms}. Then the data $(\bfC,\cM)$ defines an \emph{$\cM$-adhesive category} if the following requirements are satisfied:
\begin{romanenumerate}
\item The class $\cM\subset\mor{\bfC}$ contains all isomorphisms and is \emph{closed under}
\begin{enumerate}
\item \emph{composition},
\begin{equation}
\forall g\circ f\in \mor{\bfC}:\; f\in \cM\land g\in \cM\Rightarrow g\circ f\in \cM\,,
\end{equation}
\item \emph{decomposition},
\begin{equation}
\forall g\circ f\in \cM:\; g\in \cM\Rightarrow f\in \cM\,.
\end{equation}
\end{enumerate}
\item $\bfC$ \emph{has pushouts and pullbacks along $\cM$ morphisms}, i.e.\ pushouts of spans and pullbacks of cospans where at least one of the two morphisms of the (co-) span is in $\cM$ exist. 
\item \emph{$\cM$ morphisms are closed under pushouts and pullbacks:} if in the diagram below $(1)$ is a pushout, then $m\in \cM$ implies $n\in \cM$, while if $(1)$ is a pullback, then $n\in \cM$ implies $m\in \cM$:
\begin{equation}
\inputtikz{MACpb}
\end{equation}
\item \emph{Pushouts along $\cM$-morphisms are $\cM$-van Kampen ($\cM$-VK) squares (also referred to as vertical weak VK squares in the literature):} given a commutative cube in $\bfC$ as shown below, where the bottom square is a pushout along an $\cM$-morphism $m\in \cM$, the back faces are both pullbacks, and if $b,c,d\in \cM$, then the top face is a pushout if and only if the two front faces are pullbacks.
\begin{equation}\label{eq:VKsquareDef}
\inputtikz{VKsquareDef}\,.
\end{equation}
\end{romanenumerate}
For certain applications, it is also of interest to consider a variant of the definition called \emph{weak $\cM$-adhesive categories}, which are categories in which all of the above axioms hold except for the $\cM$-VK property; the latter is modified to the \emph{weak $\cM$-VK property}:
\begin{itemize}
\item[(iv)'] \emph{Pushouts along $\cM$-morphisms are \emph{weak $\cM$-van Kampen ($\cM$-VK) squares}:} given a commutative cube in $\bfC$ as shown in~\eqref{eq:VKsquareDef}, where the bottom square is a pushout along an $\cM$-morphism $m\in \cM$, the back faces are both pullbacks, and if 
\[
(b,c,d\in \cM) \quad \text{or }\quad (f\in \cM)\,,
\]
then the top face is a pushout if and only if the two front faces are pullbacks.
\end{itemize} 
\end{definition}
Strictly speaking, several points in the above definition are redundant, since the closure under isomorphisms $(i)$ and the decomposition property $(ib)$ are known to follow directly from closure under compositions and stability of $\cM$-morphisms under pullbacks (see Appendix~\ref{app:isoM} for the former property). It is moreover worthwhile to note that in some of the earlier literature (cf.\ \cite{ehrig2010categorical} for a review), several authors referred to categories satisfying properties $(i-iv)$ as \emph{\textbf{vertical} weak $\cM$-adhesive categories}, and by analogy to categories satisfying $(i-iii)$ and only the second alternative of $(iv')$ (i.e.\ that the ``horizontal'' morphisms $m$ and $f$ should be in $\cM$) as \emph{\textbf{horizontal} weak $\cM$-adhesive categories}. In practice (cf.\ e.g.\ \cite{bp2019-ext} for a review and a list of examples), most categories considered in the rewriting literature in fact qualify as both horizontal and vertical weak $\cM$-adhesive categories, yet since a few exceptions only satisfy the vertical variant, and since moreover in the relevant proofs only this variant is required, the standard convention has become to refer to the vertical variant simply as $\cM$-adhesive categories~\cite{Braatz:2010aa}. $\cM$-adhesive categories enjoy a number of special properties (some of which referred to in the literature as \emph{high-level replacement (HLR) properties}) that will be important to our main constructions. We collect these properties in Appendix~\ref{app:lem}. 

As advocated in particular in the work of Ehrig~\cite{Braatz:2010aa}, one of the optimal compromises for a general setting in which DPO (and, as we shall see, also SqPO) rewriting theories involving constraints on objects and morphisms  can be formulated efficiently is provided by $\cM$-adhesive categories with certain additional special properties. A central role in this setup is played by the following concepts:

\begin{definition}[$\cM$-initial object; \cite{Braatz:2010aa}, Def.~2.5]\label{def:Minit}
An object $\mIO$ of an $\cM$-adhesive category $\bfC$ is defined to be an \emph{$\cM$-initial object} if for each object $A\in \obj{\bfC}$ there exists a unique monomorphism $i_A:\mIO\hookrightarrow A$, which is moreover required to be in $\cM$. An $\cM$-initial object $\mIO$ is said to be \emph{strict} if for each object $X\in \obj{\bfC}$, every morphism $X\rightarrow \mIO$ must be an isomorphism.
\end{definition}

\begin{lemma}[\cite{Braatz:2010aa}, Fact~2.6]\label{lem:binaryCoproducts}
  If an $\cM$-adhesive category $\bfC$ possesses an $\cM$-initial object $\mIO\in \obj{\bfC}$, then the category has \emph{finite coproducts}, and moreover the coproduct injections are in $\cM$. In particular, the coproduct $A+B$ of two objects $A,B\in\obj{\bfC}$ is then given as the pushout of the span $(A\xleftarrow{i_A}\mIO\xrightarrow{i_B}B)$.
\end{lemma}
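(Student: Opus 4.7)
My plan is to take the nullary coproduct to be $\mIO$ itself, construct the binary coproduct $A+B$ as the pushout of the span $\rSpan{A}{i_A}{\mIO}{i_B}{B}$, and then iterate to obtain all finite coproducts. Existence of the pushout is immediate from axiom (ii) of Definition~\ref{def:Madh}, because $i_A, i_B \in \cM$ by the definition of $\cM$-initial object. Writing $\iota_A, \iota_B$ for the resulting injections, stability of $\cM$ under pushouts (axiom (iii)) forces $\iota_A, \iota_B \in \cM$, which settles the claim that the coproduct injections lie in $\cM$.

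It then remains to verify the coproduct universal property. Given arbitrary $f : A \to X$ and $g : B \to X$, the crucial step is to establish the compatibility equation $f \circ i_A = g \circ i_B$, after which the universal property of pushouts delivers a unique mediating morphism $h : A+B \to X$ with $h \circ \iota_A = f$ and $h \circ \iota_B = g$. Compatibility follows from the fact that $\mIO$ is an initial object of $\bfC$ in the strict sense, so every morphism $\mIO \to X$ coincides with the unique monomorphism $i_X \in \cM$; in particular both $f \circ i_A$ and $g \circ i_B$ equal $i_X$. The iteration from binary to arbitrary finite arity is then a routine induction, and the nullary case is witnessed by $\mIO$ together with the $i_X$.

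The main delicate point I anticipate is bridging from the unique-monomorphism formulation in Definition~\ref{def:Minit} to the full initiality of $\mIO$. The simplest route is to note first that $i_\mIO$ must coincide with $id_\mIO$, since the identity is a monomorphism; then, for any candidate $f : \mIO \to X$, one compares $f$ with $i_X$ by forming the pullback of $f$ along the $\cM$-morphism $i_X$ and invoking stability of $\cM$ under pullbacks (axiom (iii)) together with the $\cM$-van Kampen property (axiom (iv)) to conclude $f = i_X$. Once this bridge is in place, the lemma reduces to two direct applications of the $\cM$-adhesive axioms combined with pushout universality.
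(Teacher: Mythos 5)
The paper itself offers no proof of this lemma — it is quoted verbatim from~\cite{Braatz:2010aa} (Fact~2.6) — so your proposal can only be measured against the standard argument in that reference. Your main line is exactly that argument and is correct: the binary coproduct is the pushout of $A\xleftarrow{i_A}\mIO\xrightarrow{i_B}B$, which exists by axiom~(ii) of Definition~\ref{def:Madh} since $i_A,i_B\in\cM$; the injections are in $\cM$ by stability of $\cM$-morphisms under pushouts; the universal property of the coproduct reduces to that of the pushout once one knows that any two morphisms $\mIO\to X$ coincide; and the extension to all finite arities is a routine induction with $\mIO$ as the nullary case.

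The weak point is the final paragraph. You are right to flag the gap between ``unique monomorphism'' and full initiality, but the bridge you sketch does not close. Forming $P=\pB{\mIO\xrightarrow{f}X\xleftarrow{i_X}\mIO}$ and using stability of $\cM$ under pullbacks does show that the projection $p_1:P\to\mIO$ lying over $f$ is in $\cM$, and the unique monomorphism $i_P:\mIO\to P$ then splits $p_1$, so $p_1$ is an isomorphism. But all this yields is a factorization $f=i_X\circ h$ with $h=p_2\circ p_1^{-1}$ an \emph{arbitrary} endomorphism of $\mIO$; nothing in the axioms forces $h=id_{\mIO}$ (the uniqueness clause only controls \emph{mono}morphisms out of $\mIO$, and $h$ need not be one), so you cannot conclude $f=i_X$. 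The appeal to the $\cM$-van Kampen property is also unsubstantiated: that axiom concerns commutative cubes over pushout squares and plays no role in a single pullback. The correct resolution is simply to read Definition~\ref{def:Minit} as in the cited source (\cite{Braatz:2010aa}, Def.~2.5): for each $A\in\obj{\bfC}$ there is a unique \emph{morphism} $\mIO\to A$, which is additionally required to lie in $\cM$. With that reading $\mIO$ is initial in the ordinary sense, $f\circ i_A=i_X=g\circ i_B$ holds for free, and the rest of your proof goes through unchanged.
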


For later convenience, we present a number of consequences of an $\cM$-adhesive category possessing a (strict) $\cM$-initial object in Appendix~\ref{app:sMinit}.

\begin{definition}[Finite objects, finitary categories, finitary restrictions; \cite{Braatz:2010aa}, Def.~2.8 and Def.~4.1]\label{def:finitary}
Let $\bfC$ be an $\cM$-adhesive category. An object $A\in \obj{\bfC}$ is said to be \emph{finite}, if there exist only finitely many isoclasses of $\cM$-morphisms $B\hookrightarrow A$ into $A$ (i.e.\ if ``$A$ has finitely many $\cM$-subobjects''). $\bfC$ is \emph{finitary} if all its objects are finite. Let $\bfC_{fin}$ denote the full subcatgory of $\bfC$ spanned by finite objects, and let $\cM_{fin}$ denote the class of $\cM$-morphisms between finite objects. Then we refer to $(\bfC_{fin},\cM_{fin})$ as the \emph{finitary restriction} of $\bfC$.
\end{definition}

\begin{theorem}[\cite{Braatz:2010aa}, Thm.~3.14]~
The finitary restriction $(\bfC_{fin},\cM_{fin})$ of an $\cM$-adhesive category $\bfC$ is a finitary $\cM$-adhesive category.
\end{theorem}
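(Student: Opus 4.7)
The plan is to verify each of the four axioms in Definition~\ref{def:Madh} for the pair $(\bfC_{fin},\cM_{fin})$, after which the finitarity property is true by construction since every object of $\bfC_{fin}$ is finite by the definition of the finitary restriction. Throughout, the guiding principle is that $\bfC_{fin}$ is a \emph{full} subcategory of $\bfC$, so all commutative diagrams, and in particular all pushouts and pullbacks computed within $\bfC_{fin}$, coincide with their analogues in $\bfC$ as long as all involved objects are finite. The work thus reduces to showing that the standard $\cM$-adhesive constructions never ``leave'' $\bfC_{fin}$.

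First, for axiom (i), I would check that $\cM_{fin}=\cM\cap\mor{\bfC_{fin}}$ contains all isomorphisms of $\bfC_{fin}$ (every such iso is an iso in $\bfC$, hence lies in $\cM$ by assumption) and is closed under composition and decomposition, both of which are inherited from the corresponding closure properties of $\cM$ in $\bfC$ because composition and factorisation do not alter source or target objects. For axiom (iii), closure of $\cM_{fin}$ under pushouts and pullbacks follows from the closure of $\cM$ in $\bfC$ under these operations, once we have shown the existence of such pushouts and pullbacks within $\bfC_{fin}$. For axiom (iv), the $\cM$-VK property transfers verbatim: any commutative cube as in~\eqref{eq:VKsquareDef} living inside $\bfC_{fin}$ is a commutative cube in $\bfC$, whence the biconditional between the top face being a pushout and the two front faces being pullbacks follows from the $\cM$-VK property of $\bfC$.

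The genuine content lies in axiom (ii), namely that the pushout of a span $\rSpan{B}{m}{A}{f}{C}$ with $m\in\cM_{fin}$ and the pullback of a cospan $\coSpan{B}{m}{D}{g}{C}$ with $m\in\cM_{fin}$ (or $g\in\cM_{fin}$), performed in $\bfC$ between finite objects, again yield finite objects. For the \emph{pullback} case this is the easier direction: in an $\cM$-adhesive category $\cM$-morphisms are stable under pullback, so the pullback apex $A$ sits as an $\cM$-subobject of a finite object (either of $B$ or of $C$, via one of the two legs), and closure of $\cM$ under composition implies that every $\cM$-subobject of $A$ is also an $\cM$-subobject of that finite ambient object; hence $A$ has only finitely many $\cM$-subobjects. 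For the \emph{pushout} case, I would argue that every $\cM$-subobject $X\hookrightarrow D$ of the pushout $D$ of $\rSpan{B}{m}{A}{f}{C}$ is determined by its pullbacks along the two pushout injections $B\to D$ and $C\to D$, which yield $\cM$-subobjects of $B$ and $C$ together with a compatibility datum over $A$; the weak $\cM$-VK property (or equivalently one of the HLR lemmas gathered in Appendix~\ref{app:lem}) identifies $X$ as the pushout of this restricted span, so the assignment from $\cM$-subobjects of $D$ to pairs of compatible $\cM$-subobjects of $B$ and $C$ is injective. Since $B$ and $C$ are finite, there are only finitely many such pairs, and hence $D$ is finite.

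The main obstacle is precisely this last finiteness argument for pushouts: establishing the injective correspondence between $\cM$-subobjects of the pushout and compatible pairs of $\cM$-subobjects of the input objects requires invoking the HLR machinery (pullback-pushout decomposition lemmas and the weak $\cM$-VK property) in a controlled way, and ensuring that the compatibility datum is canonically reconstructible. Once this is in place, the theorem follows by collecting the verifications of axioms (i)–(iv) above and observing that all objects of $\bfC_{fin}$ are finite by definition.
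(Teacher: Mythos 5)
The paper does not prove this statement: it is quoted directly from \cite{Braatz:2010aa} (Thm.~3.14) without proof, so there is no argument in the paper to compare yours against. Your reconstruction is essentially the argument of that reference and is correct in outline: the only substantive content is that finiteness is preserved by pullbacks and pushouts along $\cM_{fin}$-morphisms — for pullbacks because the apex embeds via an $\cM$-projection into a finite object, so its $\cM$-subobjects inject into those of that object by postcomposition; for pushouts because the $\cM$-VK cube exhibits every $\cM$-subobject of the pushout as the pushout of its pullbacks along the two injections, whence the assignment to pairs of $\cM$-subobjects of the feet is injective and the count is finite. One remark on logical order: your opening ``guiding principle'' that (co)limits in $\bfC_{fin}$ coincide with those in $\bfC$ is not automatic, since a full subcategory need not reflect pushouts; it becomes true for the relevant diagrams only once the finiteness-preservation results are established (the $\bfC$-pushout then lies in $\bfC_{fin}$ and, by uniqueness, any $\bfC_{fin}$-pushout agrees with it), so the verifications of axioms (iii) and (iv) should be read as resting on axiom (ii) rather than preceding it.
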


\begin{definition}[Epi-$\cM$-factorizations; cf.\ e.g.\ \cite{habel2009correctness}, Def.~3]\label{def:epiM}
 An $\cM$-adhesive category $\bfC$ is said to possess an \emph{epi-$\cM$-factorization} if every morphism $f$ of $\bfC$ can be factorized into an epimorphism $e$ and a monomorphism $m\in \cM$ such that $f=m\circ e$, and such that this factorization is unique up to isomorphism.
\end{definition}

It is worthwhile to note that a large class of $\cM$-adhesive categories of practical importance (see e.g.\ the examples listed below) indeed possess an epi-$\cM$-factorization.

\begin{example}
Referring to~\cite[Ex.~2.3ff]{Braatz:2010aa} and~\cite[Section~4.2]{ehrig:2006aa} for further details, well-known examples of $\cM$-adhesive categories with $\cM$-initial objects include:
\begin{itemize}
  \item $(\mathbf{Set},\cM_S)$, the category of \emph{sets and (total) set functions}, with $\cM_S$ the class of all injective set morphisms, and with the $\cM_S$-initial object the empty set $\mIO_S\in \obj{\mathbf{Set}}$.
  \item $(\mathbf{uGraph},\cM_U)$, the category of \emph{undirected multigraphs}~\cite{bp2019-ext} and graph homomorphisms, with $\cM_U$ the class of all injective homomorphisms of $\mathbf{uGraph}$, and with $\cM_U$-initial object the empty graph $\mIO_U\in \obj{\mathbf{uGraph}}$.
  \item $(\mathbf{Graph},\cM_G)$, the category of \emph{directed multigraphs}, with $\cM_G$ the class of all injective graph homomorphisms, and with $\cM_G$-initial object the empty graph ${\mIO}_G\in \obj{\mathbf{Graph}}$. 
  \item $(\mathbf{Graph}_{TG},\cM_{TG})$, the category of \emph{typed graphs} and morphisms thereof (constructed as the slice category $\mathbf{Graph}_{TG}\eqdef\mathbf{Graph}/ TG$ for some fixed type graph $TG\in \mathbf{Graph}$), with $\cM_{TG}$ the class of all injective typed graph homomorphisms.
  \item The categories of \emph{Petri nets} and of \emph{elementary Petri nets}~\cite[Ex.~2.3ff]{Braatz:2010aa} are $\cM$-adhesive and $\cM$-initial for certain classes of $\cM$.
\end{itemize}
All categories in the above list possess an epi-$\cM$-factorization, as do their finitary restrictions. For example, in $\mathbf{Graph}$, every graph homomorphism can be factored into the a surjective composed with an injective graph homomorphism. Interestingly, the well-known \textbf{non-examples}, which are certain categories of (typed or untyped) attributed graphs, fail to possess an $\cM$-initial object and an epi-$\cM$-factorization. There exist a number of \emph{functorial constructions} that allow one to construct finitary $\cM$-adhesive categories with the desired properties from known such categories. We refer the reader to~\cite[Sec.~5]{Braatz:2010aa} for the details.
\end{example}

One of the most important additional properties required in view of compositionality of rewriting rules is the following one:
\begin{definition}[$\cM$-effective unions]\label{def:MeffUn}
  Let $\bfC$ be an $\cM$-adhesive category, for $\cM$ a class of monomorphisms. Then $\bfC$ is said to possess \emph{$\cM$-effective unions} if for every commutative diagram as below where all morphisms except $d$ are in $\cM$, where $(1)$ is a pushout and where the exterior square is a pullback,
  \begin{equation}\label{eq:MEU}
    \inputtikz{effectiveUnions}\,,
  \end{equation}
  the morphism $d$ is in $\cM$.
\end{definition}

While \emph{adhesive categories} (which constitute a special case of $\cM$-adhesive categories where $\cM=\mono{\bfC}$) are well known to posses ($\mono{\bfC}$-) effective unions~\cite{lack2005adhesive}, we are not aware of a set of sufficient conditions to ensure the property of $\cM$-effective unions in the general $\cM$-adhesive case. Referring to~\cite{bp2019-ext} for an extended discussion, for some of these more general cases such as for the category $\mathbf{uGraph}$~\cite{BK2020} the following technical result allows one to verify the property:

\begin{theorem}[\cite{bp2019-ext}, Thm.~1.15]\label{thm:RewAux}
  Let $\bfC$ be a \emph{horizontal weak} $\cM$-adhesive category. Then in a diagram of the form~\eqref{eq:MEU}, the morphism $d$ is a monomorphism (however not necessarily in the class $\cM$).
\end{theorem}

Finally, the property of balancedness defined below will be an essential ingredient for our rewriting framework (cf.\ Theorem~\ref{thm:crucialShift}).
\begin{definition}[compare~\cite{lack2005adhesive}, Lem.~4.9] A category is said to be \emph{balanced} if every morphism that is both a mono- and an epimorphism is an isomorphism.
\end{definition}

\subsection{Additional prerequisites for the Sesqui-Pushout (SqPO) framework}\label{sec:SqPOpre}

Referring to~\cite{nbSqPO2019} for a more extensive presentation, we focus here on quoting some necessary background materials, and on discussing the general $\cM$-adhesive setting.

\begin{definition}[Final Pullback Complement (FPC); \cite{Corradini_2006,Loewe_2015}]\label{def:FPC}
Given a commutative diagram of the form
  \begin{equation}\label{eq:FPC}
  \inputtikz{FPBC}\,,
  \end{equation}
  a pair of morphisms $(d,b)$ is a \emph{final pullback complement (FPC)} of a pair $(c,a)$ if 
  \begin{romanenumerate}
    \item $(a,b)$ is a pullback of $(c,d)$, and 
    \item for each collection of morphisms $(x, y, z, w)$ as in~\eqref{eq:FPC}, where $(x,y)$ is pullback of $(c, z)$ and where $a\circ w=x$, there exists a morphism $w^{*}$ with $d\circ w^{*}=z$ and $w^{*}\circ y=b\circ w$ that is unique (up to isomorphisms).
  \end{romanenumerate}
\end{definition}
 
\begin{lemma}[cf\ \cite{Loewe_2015}, Fact~2, and~\cite{Corradini_2006}, Lemma~2ff]\label{lem:FPCfacts}
  For an arbitrary morphism $f:A\rightarrow B$, $(id_B,f)$ is an FPC of $(f,id_A)$ and vice versa. Moreover, every pushout square is also an FPC square. FPCs are unique up to isomorphism and preserve monomorphisms.
\end{lemma}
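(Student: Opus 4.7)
The plan is to verify each of the four claims in the lemma directly from the definition of FPCs in Definition~\ref{def:FPC}, proceeding from the simplest to the most delicate.

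For the first claim, I would start by noting that the square with two parallel copies of $f$ as one pair of sides and two identities as the other trivially commutes and is a pullback, since any cone $(x,y)$ with $f\circ y=id_B\circ x=x$ already equals $(f\circ y,y)$. This supplies condition (i) in Definition~\ref{def:FPC}. For condition (ii), given data $(x,y,z,w)$ as in~\eqref{eq:FPC} with here $c=id_B$ and $a=f$, the required mediator $w^{*}$ is forced to equal $z$: commutativity of the lower triangle $id_B\circ w^{*}=z$ fixes $w^{*}=z$, and then $w^{*}\circ y=z\circ y=x=f\circ w$ combined with the pullback property of the original square yields $w^{*}\circ y=w$ after another short diagram chase. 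The symmetric statement for $(id_B,f)$ versus $(f,id_A)$ follows by exchanging the roles of the two pairs.

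For the second claim, that every pushout square is also an FPC, I would proceed as follows. Given a pushout square, first note that condition (i) (that it is a pullback) is not automatic in general; however, in the $\cM$-adhesive setting one invokes closure of $\cM$ under pushouts together with the standard fact that pushouts along $\cM$-morphisms are also pullbacks (an HLR-property quoted in the appendix). For condition (ii), given a compatible cone $(x,y,z,w)$ with $(x,y)$ a pullback of $(c,z)$, the morphisms $z$ and $d\circ b$ agree after precomposition with the pushout legs (via the exterior commutativities obtained from the pullback over $(c,z)$ and the identity $a\circ w=x$), so the pushout universal property delivers a unique $w^{*}$ with $d\circ w^{*}=z$ and with the required compatibility $w^{*}\circ y=w$, the latter following from pulling back and using monicity of the relevant leg or from the uniqueness clause of the pushout.

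Uniqueness up to isomorphism is the standard abstract-nonsense consequence of a universal property: given two FPCs $(d,b)$ and $(d',b')$ of the same pair $(c,a)$, I would apply the universal property of each with the other playing the role of $(z,y,w,x)$, yielding mutually inverse mediating morphisms $w^{*}\colon Q\to D$ and its counterpart, whose composites must be identities by a final invocation of uniqueness applied to the trivial cone.

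The main obstacle, as expected, will be the last claim that FPCs preserve monomorphisms, so that if $c\in\mathsf{mono}(\bfC)$ then $d\in\mathsf{mono}(\bfC)$. Here I would argue by contradiction or, equivalently, test $d$ on parallel pairs: given $u,v\colon X\to D$ with $d\circ u=d\circ v\eqdef z$, I would pull back $c$ along $z$ twice to produce two candidate pullback data $(x_u,y_u)$ and $(x_v,y_v)$, observe that monicity of $c$ forces these pullback cones to coincide, and then apply uniqueness in clause (ii) of the FPC property to conclude $u=v$. A cleaner packaging is to invoke the fact that final pullback complements are characterised as terminal objects in a suitable comma-type category of complements, whence any morphism into such a terminal object inherits monicity from the corresponding property of the input, but I would favour the direct diagrammatic argument since it keeps the proof self-contained and compatible with the references to~\cite{Corradini_2006,Loewe_2015} already cited.
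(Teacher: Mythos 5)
The paper does not actually prove Lemma~\ref{lem:FPCfacts} --- it is imported verbatim from \cite{Loewe_2015} and \cite{Corradini_2006} with no proof environment --- so there is no in-paper argument to compare against, and I can only assess your proposal on its own terms. Your treatment of the identity squares and of uniqueness up to isomorphism is essentially correct, modulo two small remarks: the compatibility condition in Definition~\ref{def:FPC} has to be read as $w^{*}\circ y=b\circ w$ for the composites to typecheck (your chain ``$w^{*}\circ y=z\circ y=x=f\circ w$'' equates morphisms with different codomains), and the two halves of the ``vice versa'' are not obtained from one another by a formal symmetry: in one case $w^{*}$ is forced because $d$ is an identity, in the other the pullback leg $y$ is an isomorphism because one pulls back along an identity.

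The other two claims are where the proposal has genuine gaps. For ``pushouts are FPCs'', the universal property of a pushout only produces morphisms \emph{out of} the pushout object $C$; it cannot manufacture the required $w^{*}:Q\rightarrow D$, and the phrase ``$z$ and $d\circ b$ agree after precomposition with the pushout legs'' does not typecheck, since $z$ and $d\circ b$ do not even share a domain. The claim is moreover false for arbitrary pushouts (in $\mathbf{Set}$ the pushout of $\{1\}\leftarrow\{1,2\}\rightarrow\{1\}$ is not even a pullback); for pushouts along monomorphisms in an adhesive category the standard proof pulls the entire pushout square back along $z$ and invokes the van Kampen property to exhibit $Q$ itself as a pushout, from whose universal property (not that of $C$) the morphism $w^{*}$ is then extracted using $w$. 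For mono-preservation, your parallel-pair test of $d$ does not close as written. Given $u,v:X\rightarrow D$ with $d\circ u=d\circ v=z$, what you obtain is not ``two pullback cones of $(c,z)$'' (the pullback $P$ of $(c,z)$ is fixed up to isomorphism) but two mediating morphisms $\alpha,\beta:P\rightarrow A$ into the pullback $A$ of $(c,d)$, satisfying $a\circ\alpha=a\circ\beta=x$ but $b\circ\alpha=u\circ y$ and $b\circ\beta=v\circ y$. The uniqueness clause (ii) is relative to a \emph{fixed} $w$, so it yields $u=v$ only once $\alpha=\beta$ is known, and that requires $a$ --- not $c$ --- to be a monomorphism; otherwise the argument is circular, since monicity of $a$ is exactly what one would deduce from monicity of $d$ via the pullback. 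Indeed the implication ``$c$ mono implies $d$ mono'' that your argument targets is false: in $\mathbf{Set}$, the FPC of $(c,a)$ with $a:\{1,2\}\rightarrow\{*\}$ and $c:\{*\}\hookrightarrow\{*,\circ\}$ is $D=\{1,2,\circ\}$ with $d(1)=d(2)=*$, so $d$ is not injective although $c$ is. The correct bookkeeping is that $c$ mono gives $b$ mono immediately from the pullback, while $a$ mono gives $d$ mono; with the monomorphism hypothesis attached to $a$, your diagram chase does go through.
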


If we are working over an \emph{adhesive category} (i.e.\ an $\cM$-adhesive category where $\cM$ coincides with the class of all monomorphisms~\cite{Braatz:2010aa}), the stability of monomorphisms under FPCs as guaranteed by Lemma~\ref{lem:FPCfacts} will be sufficient for our purposes. However, in the more general $\cM$-adhesive setting, we will have to require the following stronger property:

\begin{definition}[Stability of $\cM$-morphisms under FPCs]\label{def:FPCstab}
  Let $\bfC$ be an $\cM$-adhesive category (for $\cM$ a class of monomorphisms). Then $\cM$-morphisms in $\bfC$ are said to be \emph{stable under FPCs} if whenever for a pair of morphisms $(a,b)$ with $a:A\hookrightarrow B$ in $\cM$ and $b:C\rightarrow B$ arbitrary, if $(b',a')$ such that $a\circ b=b'\circ a'$  is the FPC of $(a,b)$, then $b'\in \cM$.
\end{definition}

\subsection{Summary: full set of assumptions for DPO and SqPO rewriting}\label{sec:AssrewritingCats}

Combining all findings of the previous two sections (together with some insights from the constructions presented in the following sections), we present here sets of requirements for associative Double-Pushout (DPO) and Sesqui-Pushout (SqPO) rewriting that admit conditions on both objects and morphisms. As we were primarily motivated by deriving a \emph{sufficient} set of assumptions in order to study rewriting theories in the life-sciences (as in~\cite{BK2020}), we have not attempted to prove that the assumptions provided are strictly \emph{necessary}, albeit the latter point might in itself provide an interesting direction for future work.

\begin{assumption}[Associative DPO rewriting with conditions]\label{as:DPO}
  We assume that $\bfC$ is an \emph{$\cM$-adhesive category} with \emph{epi-$\cM$-factorization}. We also assume that $\bfC$ is \emph{balanced}, possesses a \emph{strict $\cM$-initial object} $\mIO\in \obj{\bfC}$ and \emph{$\cM$-effective unions}.
\end{assumption}

Note that according to~\cite[Lem.~4.9]{lack2005adhesive}, all adhesive categories are balanced, while for the general case the requirement is more non-trivial to demonstrate. An example of the former class which also satisfies all additional assumptions stated above is given by the category $\mathbf{Graph}$ of directed multigraphs~\cite{lack2005adhesive}, while a more general example of an $\cM$-adhesive category satisfying Assumption~\ref{as:DPO} is given by the category  $\mathbf{uGraph}$~\cite{bp2019-ext,BK2020} of undirected multigraphs.

\begin{assumption}[Associative SqPO rewriting with conditions]\label{as:SqPO}
  We assume that $\bfC$ is an \emph{$\cM$-adhesive category} satisfying Assumption~\ref{as:DPO}, and we assume in addition that for all pairs of composable $\cM$-morphisms $A\xhookrightarrow{m}B\xhookrightarrow{n}C$, the final pullback complement exists, and moreover that $\cM$-morphisms are stable under FPCs.
\end{assumption}

According to~\cite{Corradini_2015}, examples of $\cM$-adhesive categories for which the existence of FPCs as required in Assumption~\ref{as:SqPO} is guaranteed are those categories that possess an $\cM$-partial map classifier (cf.\ \cite[Thm.~1 and Sec.~2--5]{Corradini_2015}; compare~\cite{Cockett_2003}). While we refer the interested readers to these references for the full technical details, we mention here that examples of categories possessing an $\cM$-partial map classifier include the $\cM$-adhesive categories $\mathbf{Set}$ and $\mathbf{Graph}$, all presheaf categories, numerous variants of \emph{typed} or \emph{polarized} graphs, and more generally all slice categories $\bfC\downarrow X$ with $\bfC$ a topos and $X\in \obj{\bfC}$. To the best of our knowledge there are no known sufficient conditions to guarantee this stability property, other than a result by J.R.B.~Cockett and S.~Lack~\cite[Prop. 4.16 and Example~4.17]{Cockett_2003}, which however in effect only reaffirms the case of $\bfC$ being an adhesive category. In the general $\cM$-adhesive setting, $\cM$-stability under FPCs will have to be verified at a case-by-case level. Note that the guaranteed existence of final pullback complements in the configurations encountered in SqPO rewriting will drastically simplify the framework, and is in fact necessary to guarantee associativity as discussed in~\cite{nbSqPO2019}.

\section{Conditions on objects and morphisms}\label{sec:cond}

The central concepts of the framework of \emph{conditions} are the following notions of \emph{constraints} (i.e.\ conditions over objects), \emph{application conditions} (i.e.\ conditions over \emph{morphisms}) and the associated notions of satisfiability. We quote the precise definitions from~\cite{ehrig2014mathcal}, and also from~\cite{habel2009correctness}, where some important clarifying details are given (on the notion of satisfiability on objects and morphisms).

\subsection{Core definitions}\label{sec:condCoreDefs}

\begin{definition}[Conditions; 
cf.~\cite{ehrig2014mathcal}, Def.~3.3, and~\cite{habel2009correctness}, Def.~4]\label{def:ac}
(Nested) \emph{conditions} are recursively defined as follows:
\begin{romanenumerate}
\item \emph{Trivial condition:} for every object $P\in \obj{\bfC}$, $\ac{true}$ is a condition over $P$.
\item \emph{``Transported'' conditions:} for every object $P\in \obj{\bfC}$, %
for every\footnote{It is here that our restriction to $\cM$-morphisms in the formulation of conditions reflects our choice of framework, i.e.\ that of $\cM$-satisfiability (for $\cM$-morphisms). We refer the interested readers to~\cite{habel2009correctness} for the proof that this is in fact the most general framework available when working with $\cM$-morphisms in matches and rewriting rules only, i.e.\ generalizing morphisms in conditions of arbitrary morphisms in this setting does not lead to more expressivity.} $\cM$-morphism $(a:P\rightarrow Q)$ and for every condition $\ac{c}_Q$ over $Q$, $\exists(a:P\rightarrow Q,\ac{c}_Q)$ is a condition over $P$.
\item \emph{Negation:} for every condition $\ac{c}_P$ over $P\in \obj{\bfC}$, $\neg\ac{c}_P$ is a condition over $P$.
\item \emph{Conjunction:} given a family of conditions $\{\ac{c}_{P}^{(i)}\}_{i\in I}$ (for some index set $I$) over an object $P\in \obj{\bfC}$, $\land_{i\in I} \ac{c}_{P}^{(i)}$ is a condition over $P$. 
\end{romanenumerate}
The following two \textbf{\emph{shorthand notations}} are customary:
\begin{equation}\label{eq:logicShorthandConds}
\exists a \eqdef \exists (a,\ac{true})\,,
\quad \forall(a,\ac{c})\eqdef \neg \exists(a,\neg \ac{c})\,.
\end{equation}
The precise meaning of the above definitions is specified via the associated notions of \emph{satisfiability}, which are also defined inductively:
\begin{itemize}
\item[\textbf{S1}] Every $\cM$-morphism $p:P\rightarrow P'$ satisfies the trivial condition $\ac{true}$.
\item[\textbf{S2}] Given $\cM$-morphisms $p:P\rightarrow P'$ and $a:P\rightarrow Q$ as well as a condition $\exists(a,\ac{c}_Q)$ over $P$, the morphism $p$ is defined to satisfy the condition $\exists(a,\ac{c}_Q)$ if and only if there exists an $\cM$-morphism $q:Q\rightarrow P'$ such that $q\circ a=p$ and such that $q$ satisfies the condition $\ac{c}_Q$,
\begin{equation}\label{eq:acMorphSat}
\inputtikz{acMorphSat}\,.
\end{equation}
\item[\textbf{S3}] Given an $\cM$-morphism $p:P\rightarrow P'$ and a condition $\ac{c}_P$ over $P$, $p$ satisfies $\neg\ac{c}_P$ if it does not satisfy $\ac{c}_P$. If $\{\ac{c}_{P}^{(i)}\}_{i\in I}$ (for some indexing set $I$) is a family of conditions over $P$, $p$ satisfies $\land_{i\in I}\ac{c}_{P}^{(i)}$ if it satisfies each of the application conditions $\ac{c}_{P_i}$.
\end{itemize}
For an $\cM$-morphism $p:P\rightarrow P'$, we write 
\[
  p\vDash \ac{c}_P
\]
to denote that \emph{$p$ satisfies the condition $\ac{c}_P$}. Two application conditions $\ac{c}_P,\ac{c}_P'$ over some object $P\in \obj{\bfC}$ are \textbf{\emph{equivalent}}, denoted $\ac{c}_P\equiv\ac{c}_P'$, if and only if for all $\cM$-morphisms $p:P\rightarrow H$, i.e.\ for arbitrary $H\in \obj{\bfC}$ with $P$ as an $\cM$-subobject, we find that
\begin{equation}\label{eq:acEqeuiv}
p\vDash\ac{c}_P\iff p\vDash \ac{c}_P'\,.
\end{equation}
Finally, a (nested) condition or \textbf{constraint} on an object $P\in \obj{\bfC}$ is defined as a (nested) condition over the $\cM$-initial object $\mIO$, and the associated notion of satisfiability of constraints as satisfaction of the condition over $\mIO$ by the $\cM$-initial morphism $(i_P:\mIO\hookrightarrow P)\in \cM$~\cite{habel2009correctness}:
\begin{itemize}
\item[\textbf{C1}] Every object $P\in \obj{\bfC}$ satisfies $\ac{true}$.
\item[\textbf{C2}] An object $P\in \obj{\bfC}$ satisfies the condition  $\exists(i_Q:\mIO\rightarrow Q,\ac{c}_Q)$ if there exists an $\cM$-morphism $q:Q\rightarrow P$ such that $q\circ i_Q =i_P$ and $q\vDash \ac{c}_Q$.
\item[\textbf{C3}] Given an object $P\in \obj{\bfC}$ and a condition $\ac{c}_{\mIO}$ over the $\cM$-initial object $\mIO$, $P$ satisfies $\neg\ac{c}_{\mIO}$ if it does not satisfy $\ac{c}_{\mIO}$. 
\item[\textbf{C4}]  If $\{\ac{c}_{\mIO}^{(i)}\}_{i\in I}$ (for some indexing set $I$) is a family of conditions over $\mIO$, $P$ satisfies $\land_{i\in I}\ac{c}_{\mIO}^{(i)}$ if it satisfies each of the conditions $\ac{c}_{\mIO}^{(i)}$.
\end{itemize}
\end{definition}

It may be instructive to the readers to explicitly parse the potentially somewhat counter-intuitive definition of conditions on objects in a concrete example:

\begin{example}
Given an object $P\in \obj{\bfC}$, a condition of the form ``$P$ contains an $\cM$-subobject $Q$'' is expressed in the present framework as $P\vDash \exists(i_Q)=\exists(i_Q:\varnothing \rightarrow Q,\ac{true})$, since by virtue of the definition of satisfiability of conditions on objects, $P\vDash \exists(i_Q)$ iff there exists an $\cM$-morphism $q:Q\rightarrow P$ such that $q\circ i_Q=i_P$: 
\begin{equation}
\inputtikz{subObjAc}
\end{equation}
\end{example}

Next, consider the following example for illustration of the concept of nested conditions. We will, in practice, be interested exclusively in \emph{finite} nested  conditions, i.e. in sequences (or in general directed acyclic graphs) of conditions that ultimately end in an instance of a condition of the form $\exists(x,\ac{true})$. In this sense, the example below is sufficiently generic.
\begin{example}
The condition below (on undirected multigraphs)
\begin{equation}
  \exists\left(a:\inputtikzB{naExA1}{-0.19}\rightarrow 
        \inputtikzB{naExA2}{-0.17},
  \exists\left(b:
  \inputtikzB{naExA3}{-0.17}\rightarrow
  \inputtikzB{naExA4}{-0.17},\ac{true}\right)
    \right)
\end{equation}
parses more explicitly into the diagram
\begin{equation}
\inputtikz{nacIllustration}
\end{equation}
expressing the condition that a morphism $p:\inputtikz{nacExB1}\hookrightarrow G$ satisfies the condition if $G$ contains at least one other vertex $\inputtikz{nacExB2}$ (which is the information encoded in the first part of the condition), and such that $\inputtikz{nacExB3}$ and $\inputtikz{nacExB4}$ are linked by an edge. Moreover, the $\cM$-morphism $q$ in the above diagram automatically exists if the entire condition is satisfied. This is in fact a typical example of \emph{refinement} (or \emph{$\cM$-coverability}~\cite{habel2009correctness}), whereby the condition $\exists(a,\ac{true})$ is refined by the condition $\exists(a,\exists(b,\ac{true}))$.
\end{example}

\subsection{A refined notion of shift construction}\label{sec:shift}

One of the key concepts in the theory of rewriting with conditions is the notion of \emph{shift construction}, which is, in essence, a category-theoretical characterization of the interplay between conditions and extensions of their domains. We introduce here an optimized version of the classical shift construction presented in~\cite{ehrig2014mathcal}) (which is itself a variant of an earlier construction as reviewed in~\cite{habel2009correctness}). Our optimization hinges on the assumed properties of the underlying $\cM$-adhesive categories according to Assumption~\ref{as:DPO}. We believe this optimization will be of key importance in future developments of algorithms and software implementations of our framework. The following theorem is at the basis of our novel construction:

\begin{theorem}\label{thm:crucialShift}
  Given an $\cM$-adhesive category $\bfC$ satisfying Assumption~\ref{as:DPO}, consider a commutative diagram of the form below, 
  \begin{equation}\label{eq:crucialShiftDiag}
  \inputtikz{crucialShiftDiag}
  \end{equation}
  where the square marked $(1)$ is a pushout, where $a_1,a_2\in \cM$ (and thus by stability of $\cM$-morphisms under pushout also $d_1,d_2\in \cM$), ${\color{h2color}e_1},{\color{h2color}e_2}\in \cM$ and ${\color{h1color}X}=\pB{B_1{\color{h2color}\xrightarrow{e_1}E\xleftarrow{e_2}}B_2}$ (and thus by stability of $\cM$-morphisms under pullbacks also ${\color{h1color}b_1},{\color{h1color}b_2}\in \cM$, plus due to the decomposition property of $\cM$ morphisms, also ${\color{h1color}x}\in \cM$). Then the following holds: the morphism ${\color{h2color}e}$ is an epimorphism if and only if the exterior square is a pushout.
\end{theorem}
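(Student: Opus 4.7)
The plan is to prove the two implications separately. The easy direction ($\Leftarrow$) will follow directly from the universal property of the (assumed) pushout on the exterior square $(X,B_1,B_2,E)$: given $f,g:E\to Y$ with $f\circ e=g\circ e$, post-composition with $d_1,d_2$ yields $f\circ e_i=g\circ e_i$, so $f$ and $g$ are mediators of the same cocone under the span $B_1\xleftarrow{b_1}X\xrightarrow{b_2}B_2$, forcing $f=g$ by pushout-uniqueness and hence $e$ epi.

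For the nontrivial direction ($\Rightarrow$), my strategy is to introduce the auxiliary pushout $P\eqdef B_1+_X B_2$ of the pullback's legs and exhibit it as an intermediate object between the pushout $(1)$ and the outer pullback $(X,B_1,B_2,E)$. The universal property of $P$ produces a mediator $p:P\to E$ determined by $p\circ p_i=e_i$, while the universal property of the pushout $(1)$ produces a mediator $q:D\to P$ determined by $q\circ d_i=p_i$---the required cocone identity $p_1\circ a_1=p_2\circ a_2$ holds since $p_i\circ a_i=p_i\circ b_i\circ x$ and $p_1\circ b_1=p_2\circ b_2$. Pushout-uniqueness in $(1)$ applied to $(p\circ q)\circ d_i=p\circ p_i=e_i=e\circ d_i$ then forces the key factorization $e=p\circ q$. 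Applying $\cM$-effective unions (available by Assumption~\ref{as:DPO}) to the pushout $(X,B_1,B_2,P)$ sitting inside the pullback $(X,B_1,B_2,E)$---whose four sides all lie in $\cM$---gives $p\in\cM$. Assuming $e$ is epi, right-cancellability along $q$ makes $p$ epi as well; combined with $p\in\cM$ and the balancedness of $\bfC$, $p$ must be an isomorphism, and the pushout property of $P$ therefore transfers along $p$ to $E$, yielding that the exterior square is a pushout.

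The main obstacle I anticipate is the conception of the argument rather than its execution: once the intermediate pushout $P$ is in place, every subsequent step is tightly determined by the universal properties involved. The substantive categorical input is concentrated in the single implication ``$p\in\cM$ and $p$ epi $\Rightarrow$ $p$ iso'', which is exactly where $\cM$-effective unions and balancedness play their intended roles; the remainder of the proof amounts to routine diagram chasing across the pushout $(1)$, the pullback defining $X$, and the pushout defining $P$.
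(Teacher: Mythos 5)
Your proof is correct. The direction ``$e$ epi $\Rightarrow$ exterior square is a pushout'' is essentially the paper's argument: the paper likewise forms the comparison pushout $F'=\pO{B_1\leftarrow X\rightarrow B_2}$ (your $P$), factors $e$ as $m'\circ f'$ (your $p\circ q$), obtains $m'\in\cM$ from $\cM$-effective unions applied to the pushout sitting inside the defining pullback of $X$, gets $m'$ epi by decomposition of epimorphisms from $e$ epi, and concludes $m'$ iso by balancedness. (The paper additionally proves $f'$ epi by a recursive appeal to the other direction of the theorem, a step your argument correctly identifies as unnecessary, since $p\circ q$ epi already forces $p$ epi.) Where you genuinely diverge is the converse direction: you observe that if the exterior square is a pushout then its injections $e_1,e_2$ are jointly epic, so any $f,g:E\rightarrow Y$ with $f\circ e=g\circ e$ agree on both injections and hence coincide --- a two-line universal-property argument requiring none of the hypotheses of Assumption~\ref{as:DPO}. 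The paper instead takes the epi-$\cM$-factorization $e=m\circ f$, pulls the factorization back along $e_1,e_2$, and invokes the $\cM$-van Kampen property to show the middle object is isomorphic to $E$; this is considerably heavier machinery for the same conclusion. Your route is both shorter and more general for that direction, and correctly isolates all the substantive categorical input ($\cM$-effective unions, balancedness, stability and decomposition of $\cM$-morphisms) in the single implication where it is actually needed.
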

\begin{proof}
\begin{figure}[htp]
\begin{equation}\label{eq:thm3proofDiag}
\vcenter{\hbox{\includegraphics[scale=1]{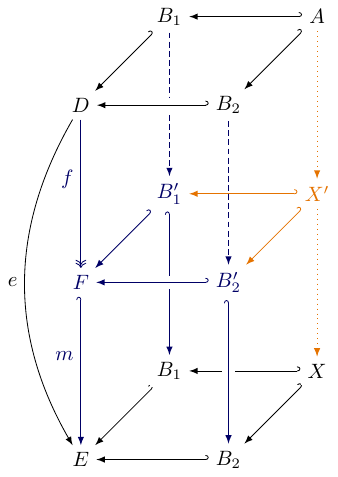}}}
\qquad \qquad 
\vcenter{\hbox{\includegraphics[scale=1]{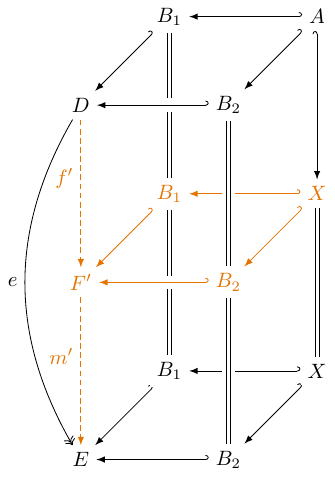}}}
\end{equation}
{\caption*{}}
\end{figure}%
\textbf{``$\Rightarrow$'' direction:} Suppose that the exterior square in~\eqref{eq:crucialShiftDiag} is a pushout. %
Let $e={\color{h1color}m}\circ {\color{h1color}f}$ be the epi-$\cM$-factorization of $e$ (with ${\color{h1color}(f:D\rightarrow F)}\in \epi{\bfC}$ and ${\color{h1color}(m:F\rightarrow E)}\in \cM$). %
Then construct the commutative diagram presented in the left part of~\eqref{eq:thm3proofDiag} as follows: %
form the two pullbacks ${\color{h1color}B_i'}=\pB{{\color{h1color}F\rightarrow} E\leftarrow B_i}$, %
which by the universal property of pullbacks also induces morphisms ${\color{h1color}B_i\rightarrow B_i'}$ (for $i=1,2$); %
according to Lemma~\ref{lem:pbEpiMonoId} of Appendix~\ref{app:epiMcons}, we conclude that ${\color{h1color}B_1'}\cong B_1$ and ${\color{h1color}B_2'}\cong B_2$. %
Next, let ${\color{h2color}X'}=\pB{{\color{h1color}B_1'\rightarrow F\leftarrow B_2'}}$; %
by the universal property of pullbacks, this implies the existence of morphisms $A{\color{h2color}\rightarrow X'}$ %
and ${\color{h2color}X'\rightarrow} X$ (and by decomposition of $\cM$-morphisms, the latter is in $\cM$). %
Since the square just constructed is a pullback, the bottom square a pushout along $\cM$-morphisms (and thus a pullback), %
and since, due to ${\color{h1color}B_i'}\cong B_i$ (for $i=1,2$) and ${\color{h1color}m}\in \cM$, %
Lemma~\ref{lem:Main}\ref{lem:idPB} entails that the bottom left and bottom front vertical squares are pullbacks, %
we conclude via invoking pullback-pullback decomposition (Lemma~\ref{lem:Main}\ref{lem:PBPBdec}) that also the bottom back and bottom right vertical squares are pullbacks. %
Thus by virtue of stability of isomorphisms under pullbacks (Lemma~\ref{lem:Main}\ref{lem:PBiso}), ${\color{h2color}X'}\cong X$. %
Moreover, the $\cM$-van Kampen property entails that the middle horizontal square is a pushout, %
whence by uniqueness of pushouts up to isomorphism, we have that ${\color{h1color}F}\cong E$, %
which proves the claim that $e={\color{h1color}m}\circ {\color{h1color}f}\in \epi{\bfC}$. 

\paragraph{``$\Leftarrow$'' direction:} %
Suppose that $e\in \epi{\bfC}$ and that the exterior square in~\eqref{eq:crucialShiftDiag} is a pullback. %
Construct the commutative diagram depicted in the right part of~\eqref{eq:thm3proofDiag} as follows: %
start by forming the pushout ${\color{h2color}F'}=\pO{B_1\leftarrow X\rightarrow B_2}$, %
which by the universal property of pushouts (recalling that the top square is by assumption also a pushout) %
furnishes morphisms ${\color{h2color}f'}:D{\color{h2color}\rightarrow F'}$ and %
${\color{h2color}m'}:{\color{h2color}F'\rightarrow} E$ such that $e={\color{h2color}m'}\circ {\color{h2color}f'}$. %
By stability of $\cM$-morphisms under pushouts, the morphisms ${\color{h2color}B_1\rightarrow F'}$ and  %
${\color{h2color}B_2\rightarrow F'}$ are in $\cM$. %
The top commutative cube thus precisely satisfies the properties necessary to invoke the theorem in the ``$\Rightarrow$'' direction, in order to conclude that ${\color{h2color}f'}\in\epi{\bfC}$. %
Since by assumption $e\in \epi{\bfC}$, and since $e={\color{h2color}m'}\circ {\color{h2color}f'}$, %
invoking decomposition of epimorphisms yields that ${\color{h2color}m'}\in\epi{\bfC}$. %
On the other hand, since by assumption the bottom square is a pullback, %
invoking the property of $\cM$-effective unions permits us to conclude that ${\color{h2color}m'}\in \cM$. %
As the underlying category is assumed to be balanced, and since $\cM\subseteq\mono{\bfC}$, %
we conclude that ${\color{h2color}m'}\in \iso{\bfC}$, i.e.\ ${\color{h2color}F'}\cong E$, %
which proves that the bottom square is a pushout (by uniqueness of pushouts, and since pushouts along $\cM$-morphisms are also pullbacks).
\end{proof}

An interesting consequence of the above theorem is the following result, which allows one to compare our technical framework more directly with the traditional literature:
\begin{corollary}\label{cor:jE}
In an $\cM$-adhesive category $\bfC$ satisfying Assumption~\ref{as:DPO}, given two $\cM$-morphisms $(c_1:C_1\hookrightarrow D),(c_2:C_2\hookrightarrow D)\in \cM$, let the $\cM$-morphisms 
\[
	{\color{h1color}(c_1':C_1\hookrightarrow \bar{D})}\,,\; {\color{h1color}(c_2':C_2\hookrightarrow \bar{D})}\,,\; {\color{h1color}(d:\bar{D}\hookrightarrow D)}\in \cM
\]
be defined (uniquely up to isomorphisms) as follows: denoting by ${\color{h1color}(C_1\hookleftarrow X\hookrightarrow C_2)}$ the pullback of the cospan $(c_1,c_2)$, define ${\color{h1color}(c_1',c_2')}$ as the pushout of this span, and let ${\color{h1color}d}$ be defined as the induced morphism from ${\color{h1color}\bar{D}}$ into $D$. Then the cospan ${\color{h1color}(c_1',c_2')}$ is \textbf{\emph{jointly epimorphic}}.
\end{corollary}
\begin{proof}
Given that $c_1,c_2,d\in \cM$ (which follows from stability of $\cM$-morphisms under pullback and pushout, and from the property of $\cM$-effective unions, respectively), construct the following commutative diagram:
\begin{equation}
\inputtikz{corPF}
\end{equation}
Here, given the unique $\cM$-morphisms ${\color{h2color}(\mIO\hookrightarrow C_1)}$,  ${\color{h2color}(\mIO\hookrightarrow C_2)}$ and  ${\color{h2color}(\mIO\hookrightarrow X)}$ from the $\cM$-initial object $\mIO$, and with ${\color{h2color}(C_1\hookrightarrow C_1+C_2\hookleftarrow C_2)}$ the pushout of ${\color{h2color}(C_1\hookleftarrow \mIO\hookrightarrow C_2)}$, the existence of the morphism ${\color{h2color}(e:C_1+C_2\rightarrow \bar{D})}$ follows by the universal property of pushouts. Then it follows via invoking Theorem~\ref{thm:crucialShift} that ${\color{h2color}e}$ is an epimorphism.
\end{proof}

We will take advantage of this corollary when discussing the results of Theorem~\ref{thm:Shift} below as well as the notion of DPO-type rule compositions in its various forms in Section~\ref{sec:DPOR}. Theorem~\ref{thm:Shift} below constitutes another interesting consequence of Theorem~\ref{thm:crucialShift}, deriving an algorithmic refinement of the so-called $\Shift$ construction, which allows one to extend application conditions to larger contexts:
\begin{theorem}[Shift construction; compare~\cite{habel2009correctness}, Thm.~5 and Lem.~3, \cite{ehrig2014mathcal} Lem.~3.11]\label{thm:Shift}
Given an $\cM$-adhesive category $\bfC$ satisfying Assumption~\ref{as:DPO}, there exists a \emph{shift construction}, denoted $\Shift$, %
such that for every condition $\ac{c}_P$ over an object $P\in \obj{P}$ and for every $\cM$-morphism $p:P\rightarrow Q$, %
an $\cM$-morphism ${\color{h3color}n}\circ p: P {\color{h4color}\rightarrow H}$ (with ${\color{h3color}n}\in \cM$) satisfies the condition $\ac{c}_P$ %
iff $({\color{h3color}n}:Q{\color{h3color}\rightarrow}{\color{h4color}H})$ satisfies $\ac{Shift}(p,\ac{c}_P)$, referred to as the \emph{shift of $\ac{c}_P$ along $p$}:
\begin{subequations}\label{eq:shiftCondequiv}
\begin{align}
&{\color{h3color}n}\circ p\vDash \ac{c}_P\;\Leftrightarrow\; {\color{h3color}n}\vDash\Shift(p,\ac{c}_P)\,,\\
\intertext{with}
&
\inputtikz{shiftCondequiv}\,.
\end{align}
\end{subequations}
Here, the application condition $\Shift(p,\ac{c}_P)$ is constructed inductively as follows:
\begin{romanenumerate}
\item \emph{Case $\ac{c}_P=\ac{true}$:} 
\begin{equation}
\Shift(p,\ac{true})\eqdef\ac{true}\,.
\end{equation}
\item \emph{Case $\ac{c}_P=\exists({\color{blue}a},{\color{blue}\ac{c}_A})$ (for some ${\color{blue}(a:P\rightarrow A)}\in \cM$ and ${\color{blue}\ac{c}_A}$ an application condition over ${\color{blue}A}\in \obj{\bfC}$):} construct the commutative diagram below, where the square marked $\mathsf{PO}$ is a pushout\footnote{Besides a more concrete characterization of $\Shift$, we also profit from restricting all morphisms involved in this construction (except for the epimorphisms $e$) to be $\cM$-morphisms, which in particular guarantees due to the assumed $\cM$-adhesivity of the underlying category $\bfC$ that the pushout to form the object $S$ always exists.}: 
 \begin{equation}\label{eq:ShiftDiagram}
 \inputtikz{ShiftDiagram}
  \end{equation}
Here, each $\cM$-morphism ${\color{h1color}x:P\rightarrow X}$ such that there exist $\cM$-morphisms ${\color{h1color}p'':X\rightarrow Q}$ and %
${\color{h1color}a'':X\rightarrow{A}}$, and with ${\color{h1color}p''}\circ {\color{h1color}x}=p$ and ${\color{h1color}a''}\circ {\color{h1color}x}={\color{blue}a}$, %
induces an object ${\color{h2color}E}$ and $\cM$-morphisms ${\color{h2color}r:Q\rightarrow E}$ and ${\color{h2color}s:A\rightarrow{E}}$ via taking the pushout of the span $(Q{\color{h1color}\xleftarrow{p''}X\xrightarrow{a''}}A)$. Since objects in $\bfC$ are assumed to be finite, which entails in particular that there are only finitely many $\cM$-subobjects of $Q$ and ${\color{blue}A}$, up to isomorphisms of spans %
(induced by isomorphisms of ${\color{h1color}X}$) there are only finitely many isomorphism classes of spans %
$(Q{\color{h1color}\xleftarrow{p''}X\xrightarrow{a''}}{\color{blue}A})$. Denoting the set of all isomorphism classes of $\cM$-morphism pairs ${\color{h2color}(r,s)}$ thus obtained by $\cE$, we define
\begin{equation}\label{eq:shiftConstruction}
  \ac{Shift}(p,\exists({\color{blue}a},{\color{blue}\ac{c}_A}))\eqdef
  \bigvee\limits_{{\color{h2color}(r,s)}\in \cE} \exists({\color{h2color}r},\ac{Shift}({\color{h2color}s},{\color{blue}\ac{c}_A}))\,.
\end{equation}
\item \emph{Case $\ac{c}_P=\neg\ac{c}_P'$:}
\begin{equation}
\Shift(b,\neg \ac{c}_P')\eqdef\neg \Shift(b,\ac{c}_P')\,.
\end{equation}
\item \emph{Case $\land_{i\in I}\ac{c}_i$:}
\begin{equation}
  \Shift(b,\land_{i\in I}\ac{c}_i)\eqdef\land_{i\in I}\Shift(b,\ac{c}_i)
\end{equation}
\end{romanenumerate}
\end{theorem}
\begin{proof}
We will closely follow the proof strategy of~\cite[Lem.~3.11]{ehrig2014mathcal}, adapted to our variant of the $\Shift$ construction (and proving en passent the equivalence of our construction to the ones of  \cite[Thm.~5 and Lem.~3]{habel2009correctness} and \cite[Lem.~3.11]{ehrig2014mathcal}, see Remark~\ref{rem:Shift} below). The statement is trivially true for the case $\ac{c}_P=\ac{true}$. For the case $\ac{c}_P=\exists({\color{blue}a:P\hookrightarrow A}, {\color{blue}\ac{c}_A})$ of a nested application condition, we will prove the claim by induction over the levels of nesting.

\begin{equation}\label{eq:shiftCondThmProof}
\vcenter{\hbox{\includegraphics{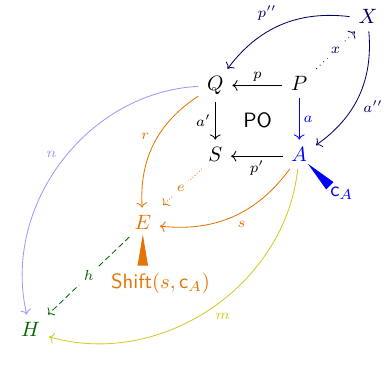}}}
\end{equation}

\paragraph{``$\Rightarrow$'' direction:} Suppose that ${\color{h3color}n}\circ p\vDash \exists({\color{blue}a:P\hookrightarrow A}, {\color{blue}\ac{c}_A})$, and take~\eqref{eq:shiftConstruction} as the \emph{induction hypothesis} in case that ${\color{blue}\ac{c}_A}$ is itself a nested condition. %
By definition of satisfiability, this entails that there exists ${\color{h5color}(m:A\hookrightarrow H)}\in \cM$ such that ${\color{h5color}m}\circ {\color{blue}a} = {\color{h3color}n}\circ p$. %
Construct the span ${\color{h1color}(p'',a'')}$ by taking the pullback of the cospan $({\color{h3color}n},{\color{h5color}m})$, %
which by universal property of pullbacks furnishes a morphism ${\color{h1color}x:P\rightarrow X}$. %
By stability of $\cM$-morphisms under pullbacks, we find that ${\color{h1color}p''},{\color{h1color}a''}\in \cM$, and thus %
by the decomposition property of $\cM$-morphisms also that ${\color{h1color}x}\in\cM$. %
Next, construct the cospan ${\color{h2color}(r,s)}$ via taking the pushout of the span ${\color{h1color}(p'',a'')}$, which %
by stability of $\cM$-morphisms under pushout entails that ${\color{h2color}r},{\color{h2color}s}\in \cM$. %
By the universal property of pushouts, there exist morphisms ${\color{h2color}e:S\rightarrow E}$ and $h:E\rightarrow H$ (see~\eqref{eq:shiftCondThmProof}), with %
${\color{h2color}e}\in \epi{\bfC}$ (via Theorem~\ref{thm:crucialShift}) and $h\in \cM$ (via $\cM$-effective unions). %
By definition of satisfiability, ${\color{h4color}m}\vDash {\color{blue}\ac{c}_A}$ and ${\color{h4color}m}= h\circ{\color{h2color}s}$ together with the induction hypothesis for $\ac{c}_A$ entail that $h\vDash \Shift({\color{h2color}s},\ac{c}_A)$. %
Again invoking the definition of satisfiability, since $h\circ {\color{h2color}r} ={\color{h3color}n}$ and $h\vDash \Shift({\color{h2color}s},{\color{blue}\ac{c}_A})$, we have thus verified that indeed ${\color{h3color}n}\vDash \exists ({\color{h2color}r}, \Shift({\color{h2color}s},{\color{blue}\ac{c}_A})$.

\paragraph{``$\Leftarrow$'' direction:} Let us assume that ${\color{h3color}n}\vDash \Shift(p,\exists({\color{blue}a},{\color{blue}\ac{c}_A}))$, with the shifted condition constructed according to~\eqref{eq:shiftConstruction}. We have to verify that this entails $n\circ p\vDash \exists({\color{blue}a},{\color{blue}\ac{c}_A})$. Combining the assumption with the definition of satisfiability of conditions, there must exist ${\color{h2color}r},{\color{h2color}s},h\in \cM$ such that $h\circ{\color{h1color}r}={\color{h3color}n}$ and $h\vDash \Shift({\color{h2color}s},{\color{blue}\ac{c}_A})$. By the induction assumption, $h\vDash \Shift({\color{h2color}s},{\color{blue}\ac{c}_A})$ entails that the morphism ${\color{h5color}m}\in \cM$ given by ${\color{h5color}m}:= h\circ {\color{h2color}s}$ (see~\eqref{eq:shiftCondThmProof}) satisfies ${\color{h3color}m}\vDash\ac{c}_A$. Since by construction ${\color{h5color}m}\circ a={\color{h3color}n}\circ p$, by definition of satisfiability we have thus demonstrated that indeed ${\color{h3color}n}\circ p\vDash \exists({\color{blue}a},{\color{blue}\ac{c}_A})$.

Finally, the proof of statements $(iii)$ and $(iv)$ is obtained in an analogous fashion.
\end{proof}

\begin{remark}\label{rem:Shift}
Since an algorithmically tractable $\Shift$ construction is quintessential for developing a calculus of compositional rewriting, let us briefly discuss the precise relationship of our refined construction to the pre-existing constructions in the literature, highlighting in particular the nature of the refinement:
\begin{itemize}
\item In the variant of the $\Shift$ construction and corresponding proof strategy presented in~\cite[Lem.~3.11]{ehrig2014mathcal}, the authors base their construction upon assuming \emph{a priori} that a set of cospans
\[
	\cF = \{ (a',p')\in \cE'\mid p'\in \cM\;\land\;  \cSquare{P,Q,{\color{h2color}E},{\color{blue}A}}\text{ commutes}\}
\]
is provided, where no additional properties for the commutative squares $\cSquare{P,Q,{\color{h2color}E},{\color{blue}A}}$ are required. As for the class of cospans $\cE'$, the origin of this class in~\cite{ehrig2014mathcal} is an \emph{$\cE'$-$\cM$-pair factorization} property that is assumed to hold for the category $\bfC$, whereby any cospan $(f_1,f_2)$ of arbitrary morphisms $(f_i:C_i\rightarrow D)\in \mor{\bfC}$ ($i=1,2$) factors uniquely up to isomorphisms as
\[
	f_1=m\circ e_1\,,\; f_2=m\circ e_2\,,\; m\in \cM\,,\; (e_1,e_2)\in \cE'\,.
\]
On the one hand, this construction permits to consider the seemingly more general case of conditions formulated in terms of non-$\cM$-morphisms, which however according to highly technical results presented in~\cite{habel2009correctness} in fact does \emph{not} increase the expressivity of the calculus of conditions. On the other hand, the construction of Ehrig et al.\ is strictly more general than ours in that it permits to consider \emph{non-monic matches} for applications of linear rules. %
Yet for an $\cM$-adhesive category $\bfC$ satisfying Assumption~\ref{as:DPO} and for the case of $\cM$-morphisms as admissible matches, Corollary~\ref{cor:jE} entails that our construction is mathematically equivalent to the one of~\cite[Lem.~3.11]{ehrig2014mathcal}, since in this case the relevant $\cE'$-$\cM$-pair-factorizations of cospans of $\cM$-morphisms are precisely of the form presented in the corollary. %
From an algorithmic standpoint, the class $\cE'$ is typically non-trivial to construct, whereas our construction based upon the results of Theorem~\ref{thm:crucialShift} only requires algorithmically rather concrete notions of ``forming all larger overlaps'' $(Q{\color{h1color}\hookleftarrow X\hookrightarrow}{\color{blue}A})$ starting from a given ``overlap'' $(Q\hookleftarrow \!P\,{\color{blue} \hookrightarrow A})$. In summary, for categories satisfying Assumption~\ref{as:DPO}, our construction thus constitutes a certain algorithmic refinement of the ``traditional'' Ehrig et al.\ construction. 
\item The construction of~\cite[Thm.~5 and Lem.~3]{habel2009correctness} was developed for the setting of $\cM$-morphisms as admissible matches as in our case. This variant of a $\Shift$ construction is based upon a shape of commutative diagram as in~\eqref{eq:shiftCondThmProof} without the subdiagrams involving the object ${\color{h1color}X}$ and morphisms incident to it. Instead, the authors devise an algorithm whereby one iterates over the triples of morphisms $({\color{h2color}r},{\color{h2color}s},{\color{h2color}e})$ such that  ${\color{h2color}r},{\color{h2color}s}\in\cM$, ${\color{h2color}e}\in\epi{\bfC}$ and such that the diagram commutes. Inspecting the proof of Theorem~\ref{thm:Shift}, our refined construction precisely mirrors this algorithmic idea, yet it provides by virtue of Theorem~\ref{thm:crucialShift} also a concrete construction principle for such triples of morphisms (which was not available in~\cite{habel2009correctness}).
\item Finally, it is worthwhile noting that the complexity of the $\Shift$ construction may be significantly reduced\footnote{We would like to thank one of the anonymous reviewers for drawing our attention to this result.} in an alternative setting within which conditions as well as satisfiability are defined not in terms of $\cM$-morphisms, but in terms of arbitrary morphisms~\cite[Lem~.1]{navarro2016logic}. However, this generalized setting is unfortunately not the setting relevant to the type of compositional rewriting theories considered in the present paper (i.e.\ those that satisfy both concurrency and associativity properties), which is why we indeed must necessarily rely upon the $\cM$-morphism variant of the $\Shift$ construction as presented in Theorem~\ref{thm:Shift}.
\end{itemize}
\end{remark}

As a first application of our refined $\Shift$ construction, let us consider a special situation for shifts that plays a role later on in the theory of compositions of rewriting rules:
\begin{lemma}[$\Shift$ along coproduct injections]\label{lem:shiftCop}
Let $\bfC$ be an $\cM$-adhesive category satisfying Assumption~\ref{as:DPO}. Let $P,Q\in \obj{\bfC}$ be objects, and %
let $\exists({\color{blue}a:P\rightarrow A},{\color{blue}\ac{c}_A})$ be a condition over $P$. Then $\Shift(P\rightarrow P+Q,\exists({\color{blue}a:P\rightarrow A},{\color{blue}\ac{c}_A}))$ is computed via the following type of diagram:
  \begin{equation}\label{eq:ShiftDiagramCop}
  \inputtikz{ShiftDiagramCop}
  \end{equation}
\end{lemma}
\begin{proof}
Consider the following specialization of Lemma~\ref{lem:MintoCop}: starting from the diagram depicted in~\eqref{eq:crucialShiftDiag}, any $\cM$-morphism $f:X\rightarrow P+Q$ decomposes into the form $[f_p,f_q]:X_P+X_Q\rightarrow P+Q$ (where $X_P+X_Q\cong X$ and $f_p,f_q\in \cM$), and analogously the morphism $x:P\rightarrow X$ decomposes into the pair of $\cM$-morphisms $[g,h]:P'+P''\rightarrow X_P+X_Q$ (with $P'+P''\cong P$). But since by commutativity of the diagram in~\eqref{eq:ShiftDiagramCop} $[f_p,f_q]\circ [g,h]=in_P$, it follows that $P'\cong P$ and $P''\cong \mIO$, which upon invoking Theorem~\ref{thm:Shift} proves the claim.
\end{proof}

\begin{example} As an illustrative application of Lemma~\ref{lem:shiftCop}, consider the following explicit computation in the category $\mathbf{uGraph}$ of undirected multigraphs (which satisfies Assumption~\ref{as:DPO}, with $\mIO$ the empty graph):
\begin{equation}
\vcenter{\hbox{\includegraphics[scale=0.7,page=1]{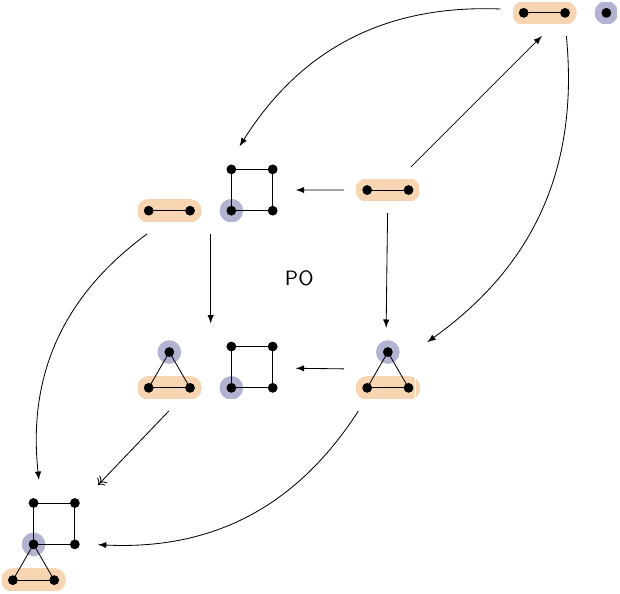}}}
\end{equation}
The image demonstrates how the shift along the embedding of the two-vertex graph with an edge (highlighted in {\color{h2color!60!white}orange}) into a disjoint union with a ``square'' graph yields a condition over this disjoint union of graphs that tests for a disjoint pattern, but also (via the only other possible non-trivial overlap up to isomorphisms, along an additional disjoint vertex marked in {\color{h1color!60!white}blue}) an alternative condition that tests for a non-disjoint pattern:
\begin{equation}\gdef\siScale{0.5}
\begin{aligned}
&\Shift\left(
\vcenter{\hbox{\includegraphics[scale=\siScale,page=5]{images/shiftOplusIllustr.pdf}}}\rightarrow \vcenter{\hbox{\includegraphics[scale=\siScale,page=4]{images/shiftOplusIllustr.pdf}}},
\exists\left(
\vcenter{\hbox{\includegraphics[scale=\siScale,page=5]{images/shiftOplusIllustr.pdf}}}\rightarrow \vcenter{\hbox{\includegraphics[scale=\siScale,page=6]{images/shiftOplusIllustr.pdf}}},\ac{true}
\right)
\right)\\
&\qquad =\quad  \exists\left(
\vcenter{\hbox{\includegraphics[scale=\siScale,page=4]{images/shiftOplusIllustr.pdf}}}\rightarrow \vcenter{\hbox{\includegraphics[scale=\siScale,page=3]{images/shiftOplusIllustr.pdf}}},\ac{true}
\right)\quad 
\bigvee \quad \exists\left(
\vcenter{\hbox{\includegraphics[scale=\siScale,page=4]{images/shiftOplusIllustr.pdf}}}\rightarrow \vcenter{\hbox{\includegraphics[scale=\siScale,page=2]{images/shiftOplusIllustr.pdf}}},\ac{true}
\right)\,.
\end{aligned}
\end{equation}
\end{example}

Finally, we will require two additional technical lemmas of key importance to our framework of compositional rewriting. Both results rely on the notion of \emph{equivalences} of conditions (see Definition~\ref{def:ac} and equation~\eqref{eq:acEqeuiv}).

\begin{lemma}[Units for Shift]\label{lem:shiftUnit}
  For every object $P\in \obj{\bfC}$ and for every condition $\ac{c}_P$ over $P$, we have that
  \begin{equation}
    \Shift(id_P:P\rightarrow P,\ac{c}_P)\equiv \ac{c}_P\,.
  \end{equation}
\end{lemma}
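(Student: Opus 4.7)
The proof will proceed by structural induction on the condition $\ac{c}_P$, following the inductive definition of $\Shift$ given in Theorem~\ref{thm:Shift}. The base case $\ac{c}_P = \ac{true}$ is immediate from the defining clause $\Shift(id_P, \ac{true}) \eqdef \ac{true}$, and the cases $\ac{c}_P = \neg \ac{c}_P'$ and $\ac{c}_P = \land_{i \in I} \ac{c}_i$ reduce directly to the inductive hypothesis since $\Shift$ commutes with these connectives by definition, and the equivalence $\equiv$ is obviously preserved under $\neg$ and $\land$.

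The only non-trivial case is $\ac{c}_P = \exists(a: P \rightarrow A, \ac{c}_A)$. Here the plan is to show that, when $p = id_P$, the set $\cE$ indexing the disjunction in the definition of $\Shift(id_P, \exists(a, \ac{c}_A))$ collapses (up to isomorphism) to a single term represented by the trivial span $P \xleftarrow{id_P} P \xrightarrow{a} A$. The key observation is that every span $Q \xleftarrow{p''} X \xrightarrow{a''} A$ contributing to $\cE$ comes equipped with an $\cM$-morphism $x: P \rightarrow X$ such that $p'' \circ x = p = id_P$. Since $p'' \in \cM \subseteq \mono{\bfC}$ and $p''$ is simultaneously a split epimorphism (with section $x$), every split epi that is a mono is an isomorphism in any category, so $p''$ is an iso with $x = (p'')^{-1}$. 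Consequently $a'' = a \circ x^{-1}$ is determined, and the span is isomorphic to $P \xleftarrow{id_P} P \xrightarrow{a} A$.

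The pushout of this trivial span is $A$ itself, with $r = a$ and $s = id_A$. Therefore the defining formula of Theorem~\ref{thm:Shift} reduces in this case to
\begin{equation}
\Shift(id_P, \exists(a, \ac{c}_A)) \;\equiv\; \exists(a: P \rightarrow A, \Shift(id_A, \ac{c}_A))\,,
\end{equation}
and invoking the inductive hypothesis $\Shift(id_A, \ac{c}_A) \equiv \ac{c}_A$ yields the desired equivalence $\Shift(id_P, \exists(a, \ac{c}_A)) \equiv \exists(a, \ac{c}_A)$. The well-posedness remark at the end of the proof of Theorem~\ref{thm:Shift} ensures that picking any representative of the unique isomorphism class of spans yields an equivalent condition.

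The main obstacle, in my view, is purely a matter of being careful with the combinatorics of the index set $\cE$: one must verify that no non-trivial span $(p'', a'')$ with $p'' \in \cM$ can satisfy $p'' \circ x = id_P$ for some $x \in \cM$, which is precisely where the identification of $\cM$ as a class of monomorphisms (as guaranteed by Assumption~\ref{as:DPO}) is essential. Once this is in place, the rest of the argument is a routine structural induction.
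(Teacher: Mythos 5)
Your proof is correct, but it takes a genuinely different route from the paper's. The paper proves the lemma in one line \emph{semantically}: it specializes the defining property~\eqref{eq:shiftCondequiv} of $\Shift$ to $p=id_P$, so that $n\circ id_P=n\vDash\ac{c}_P\iff n\vDash\Shift(id_P,\ac{c}_P)$ holds for every $\cM$-morphism $n:P\rightarrow H$, which is literally the definition~\eqref{eq:acEqeuiv} of $\equiv$ --- no inspection of the concrete construction is needed. You instead argue \emph{syntactically} by structural induction, showing that the index set $\cE$ collapses to the single isomorphism class of the trivial span $P\xleftarrow{id_P}P\xrightarrow{a}A$ (via the standard fact that a split epimorphism which is a monomorphism is an isomorphism, using $\cM\subseteq\mono{\bfC}$), whose pushout returns $r=a$ and $s=id_A$. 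Both arguments are sound; the only step you leave implicit is that $\equiv$ is a congruence for the constructor $\exists(a,-)$, which is routine from the definition of satisfaction. What each buys: the paper's argument is shorter and robust against changes in the implementation of $\Shift$, since it relies only on the soundness statement of Theorem~\ref{thm:Shift}; your argument establishes the strictly stronger fact that the refined construction \emph{literally} reproduces the input condition up to canonical isomorphism, which is the more useful statement from the point of view of an algorithmic or software implementation of the $\Shift$ operation.
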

\begin{proof}
  This follows directly from the definition of $\Shift$ according to Theorem~\ref{thm:Shift}, by specializing~\eqref{eq:shiftCondequiv} in the form
\begin{subequations}
\begin{align}
&n\circ id_P=n\vDash \ac{c}_P\;\Leftrightarrow\; n\vDash\Shift(id_P,\ac{c}_P)\,,\\
\intertext{with}
&\inputtikz{shiftUnit}\,.
\end{align}
\end{subequations}
\end{proof}

\begin{lemma}[Compositionality of Shift; compare~\cite{GOLAS2014}, Fact~3.14]\label{lem:shiftComp}
Let $X\in \obj{\bfC}$ be an object, $\ac{c}_X$ an application condition over $X$, and let $f:X\rightarrow Y$ and $g:Y\rightarrow Z$ be two morphisms of $\bfC$. Then the following equivalence of conditions holds:
\begin{equation}
\Shift(g,\Shift(f,\ac{c}_X))\equiv \Shift(g\circ f,\ac{c}_X)
\end{equation}
\end{lemma}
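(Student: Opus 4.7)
My plan is to bypass a structural induction on $\ac{c}_X$ and instead argue purely \emph{semantically} via the defining equivalence \eqref{eq:shiftCondequiv} of the $\Shift$ construction, together with the definition of equivalence of conditions \eqref{eq:acEqeuiv}. Throughout, $f$ and $g$ should be read as $\cM$-morphisms (as required for $\Shift$ to be defined), so that by closure of $\cM$ under composition (axiom $(ia)$ of Definition~\ref{def:Madh}) we have $g\circ f\in\cM$, and all three conditions appearing in the lemma are defined on the common object $Z$.

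By \eqref{eq:acEqeuiv}, it suffices to show that an arbitrary $\cM$-morphism $n:Z\rightarrow H$ satisfies $\Shift(g,\Shift(f,\ac{c}_X))$ if and only if it satisfies $\Shift(g\circ f,\ac{c}_X)$. First I would apply \eqref{eq:shiftCondequiv} to the outer shift, viewing $n$ as a test morphism and $\Shift(f,\ac{c}_X)$ as the condition being shifted along $g$, obtaining
\[
n\vDash\Shift(g,\Shift(f,\ac{c}_X))\iff n\circ g\vDash\Shift(f,\ac{c}_X).
\]
Next, viewing $n\circ g$ as a test morphism (which lies in $\cM$ since $n,g\in\cM$), I would apply \eqref{eq:shiftCondequiv} a second time to the inner shift, giving $n\circ g\vDash\Shift(f,\ac{c}_X)\iff(n\circ g)\circ f\vDash\ac{c}_X$. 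Associativity of composition in $\bfC$ then yields $(n\circ g)\circ f=n\circ(g\circ f)$, and one final application of \eqref{eq:shiftCondequiv}, this time in the reverse direction with $p=g\circ f$ and test morphism $n$, converts $n\circ(g\circ f)\vDash\ac{c}_X$ into $n\vDash\Shift(g\circ f,\ac{c}_X)$, completing the chain of equivalences.

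There is no real obstacle once one commits to the semantic route: the only bookkeeping is verifying at each step that the relevant composite lies in $\cM$ so that Theorem~\ref{thm:Shift} applies, which follows uniformly from closure of $\cM$ under composition. The alternative approach — a syntactic induction on $\ac{c}_X$ — would require, in the existential case $\ac{c}_X=\exists(a,\ac{c}_A)$, an explicit comparison between the pushouts arising from two successive applications of the construction in \eqref{eq:ShiftDiagram} and the pushouts arising from a single application along $g\circ f$, essentially a pushout-pasting argument combined with a reindexing of the disjunction over isomorphism classes of overlap spans. The semantic proof avoids this bookkeeping altogether, which is precisely why I would favor it.
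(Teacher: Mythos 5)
Your proposal is correct and coincides with the paper's own proof: both argue semantically by testing against an arbitrary morphism into $H$ and applying the defining equivalence~\eqref{eq:shiftCondequiv} of Theorem~\ref{thm:Shift} three times to obtain the chain $n\vDash\Shift(g,\Shift(f,\ac{c}_X))\iff n\circ g\vDash\Shift(f,\ac{c}_X)\iff n\circ g\circ f\vDash\ac{c}_X\iff n\vDash\Shift(g\circ f,\ac{c}_X)$. Your explicit remarks on closure of $\cM$ under composition are a harmless addition that the paper leaves implicit.
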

\begin{proof}
  The proof follows by a repeated application of Theorem~\ref{thm:Shift}. The equivalence holds if for all morphisms $c:Z\rightarrow H$, we find that
\begin{equation}\label{eq:ShiftComp}
  c\vDash \Shift(g,\Shift(f,\ac{c}_X))\;\Leftrightarrow\;
  c\vDash \Shift(g\circ f,\ac{c}_X)\,.
\end{equation}
Starting from the diagram below,
\begin{equation}
\inputtikz{shiftComp}\,,
\end{equation}
we may calculate:
\begin{equation}
\begin{aligned}
  &&c&\vDash \Shift(g,\Shift(f,\ac{c}_X))\\
  &\overset{\eqref{eq:shiftCondequiv}}{\iff}&\quad 
  c\circ g&\vDash \Shift(f,\ac{c}_X)\\
  &\overset{\eqref{eq:shiftCondequiv}}{\iff}&\quad
  c\circ g\circ f &\vDash \ac{c}_X\\
  &\overset{\eqref{eq:shiftCondequiv}}{\iff}&\quad
  c&\vDash \Shift(g\circ f,\ac{c}_X)\,.
\end{aligned}
\end{equation}
\end{proof}

\section{Compositional associative Double-Pushout rewriting with conditions}\label{sec:DPO}

In this section, we will develop a variant of \emph{Double Pushout (DPO) rewriting} for \emph{linear rules}, which may be obtained from the well-known ``traditional'' DPO-rewriting framework of Ehrig et al.~\cite{ehrig1973} with its nearly 50 years of developments~\cite{ehrig1991parallelism,CorradiniMREHL97,DBLP:conf/gg/1997handbook,lack2005adhesive,ehrig:2006aa,ehrig2006adhesive,ehrig2010categorical,Habel:2012aa,Braatz:2010aa,ehrig2014mathcal} via requiring the underlying $\cM$-adhesive categories to satisfy Assumption~\ref{as:DPO}. As we will demonstrate, this particular specialization of the original theory yields a semantics for rewriting with certain additional properties, which we refer to as \emph{compositionality} and \emph{associativity}. The former property is a statement on the existence of a certain form of \emph{sequential rule composition} (cf.\ Definition~\ref{def:compACrefined}), which via a variant of the classical \emph{concurrency theorem} (cf.\ Theorem~\ref{thm:conc}) opens the possibility to develop algorithms for the static analysis of rule application sequences. The \emph{associativity} phenomenon for sequential compositions of linear rules with conditions (cf.\ Theorem~\ref{thm:assocAC}) is an original result of the present paper, adding to the aforementioned tool-set of static analysis the possibility to analyze sequences of more than two rule applications (ultimately leading to the notions of \emph{tracelet analysis}~\cite{behr2019tracelets} and \emph{rule algebras}~\cite{bdg2016,bp2018,bp2019-ext,nbSqPO2019,BK2020}).

\begin{notConv}
Since many of the constructions and results presented in the following constitute variants of well-known results from the rich DPO-rewriting literature, by \textbf{convention} a mention of ``\textbf{compare}'' or ``\textbf{cf.}'' without additional comments indicates that the results coincide with those in the literature (modulo possibly some notational adjustments and minor modifications). On several occasions, we nevertheless provide full proofs of some of the essential statements, since these will provide the ``blueprint'' for the corresponding constructions in the SqPO-setting presented in Section~\ref{sec:SqPO}. Our choices of notations follows mostly those taken in~\cite{bp2018}.
\end{notConv}

\subsection{DPO-rewriting in $\cM$-adhesive categories}\label{sec:DPOR}

Contrary to the traditional graph rewriting literature, we prefer to interpret spans as encoding a partial injective map going from the \emph{right} leg to the left leg, rather than the other way around. This notational convention as well as the focus on \emph{linear rules} (i.e.\ rules based upon spans of $\cM$-morphisms as opposed to \emph{non-linear} rules based upon generic morphisms) was motivated by our work on the rule algebra and related tracelet constructions~\cite{bdg2016,bp2018,bp2019-ext,nbSqPO2019, behr2019tracelets,BK2020}. Furthermore, all constructions presented in the following are naturally defined only up to isomorphisms, whence the choice to consider isomorphism classes of linear rules.

\begin{definition}[Linear rules]
  Let $\bfC$ be an $\cM$-adhesive category. We denote by $\Lin{\bfC}$ the \emph{set of linear rules}, defined as the set of isomorphism classes
  \begin{equation}
    \Lin{\bfC}\eqdef\left.\left\{
        r=\rSpan{O}{o}{K}{i}{I} 
        \right\vert O,K,I\in \obj{\bfC},\; o,i\in \cM
      \right\}\diagup_{\cong}\,.
  \end{equation}
 Here, we define $r=(O\leftarrow K\rightarrow I)$ and $r'=(O'\leftarrow K'\rightarrow I')$ as isomorphic when there exist isomorphisms $\omega:O\rightarrow O'$, $\kappa:K\rightarrow K'$ and $\iota:I\rightarrow I'$ that make the evident diagram commute. Thus a (representative of a) linear rule $r\in \Lin{\bfC}$ is a span of $\cM$-morphisms $o,i\in \cM$ with \emph{\underline{O}utput object} $O$, \emph{\underline{K}ontext object} $K$ and \emph{\underline{I}nput object} $I$.
\end{definition}

 The precise interpretation of the concept of linear rules is provided in the form of the following main definition of DPO rewriting:
\begin{definition}[DPO rewriting]\label{def:DPOR}
  Let $r\eqdef\rSpan{O}{o}{K}{i}{I}\in \Lin{\bfC}$ be a linear rule, let $X\in \obj{\bfC}$ be an object, and let $m:I\rightarrow X \in \cM$ be an $\cM$-morphism\footnote{For our construction of a DPO-type rule algebra framework, we will only be interested in admissible matches that are in the class $\cM$, even though DPO rewriting in its most general form  as e.g.\ discussed in~\cite{ehrig2014mathcal} would permit also non-monic matches.}. Then $m$ is defined to be an \emph{admissible match} for the application of $r$ to $X$, if and only if the diagram below is constructible:
  \begin{equation}\label{eq:DPOappl}
  \inputtikz{DPOappl}
  \end{equation}
  Here, the square marked ${\mathsf{POC}}$ must be constructible as a \emph{pushout complement}, while if this square exists the square marked ${\mathsf{PO}}$ is always constructible as a \emph{pushout} (cf.\ Assumption~\ref{as:DPO}), whence the moniker \emph{Double-Pushout (DPO) rewriting} is derived. In this case, we refer to $r_m(X)\in \obj{\bfC}$ as the \emph{rewrite of $X$ via the rule $r$ along the (admissible) match $m$}. We introduce the notation $\rMatch{r}{X}$ for the \emph{set of admissible matches} for the application of the rule $r$ to the object $X$:
  \begin{equation}\label{eq:MplainDPOrules}
    \rMatch{r}{X}\eqdef\{(m:I\rightarrow X)\in \cM\mid \text{POC $(1)$ in~\eqref{eq:DPOappl} exists}\}\,.
  \end{equation}
  For compatibility with the standard DPO rewriting literature, we will sometimes use the notation
  \[
    r_m(X)\xLeftarrow[{r,m}]{} X
  \]
  in order to explicitly reference the information contained in~\eqref{eq:DPOappl}. Moreover, the morphism $m^{*}$ is referred to as the \emph{comatch of $m$} (under the application of linear rule $r$ to the object $X$).
\end{definition}

In order to provide a quick intuitive illustration of the DPO rewriting concept, consider the edge creation rule described in the introduction:
\begin{equation}\label{eq:ePlus}
  e_{+}\eqdef \rSpan{%
  \inputtikzB{ePlusA}{-0.25}
  }{}{\inputtikzB{ePlusB}{-0.25}
    }{}{\inputtikzB{ePlusC}{-0.25}
}
\end{equation}
Here, as customary in the graph rewriting literature, the structure of the linear rule (in this case a span of undirected multigraphs) is indicated via the  labeling indices, i.e.\ in the present case reflecting that the vertices marked $A$ and $B$ are preserved in applications of this rule, and that the edge marked $E$ is created. An example for an admissible match and the respective rule application to the example graph
\[
  X\eqdef \inputtikzB{aMatch}{-0.24}
\]
along the example match $m$ (which sends the vertices $A$ and $B$ of $I$ to the vertices $1$ and $2$ of $X$, respectively) is depicted below:
\begin{equation*}
\vcenter{\hbox{\includegraphics{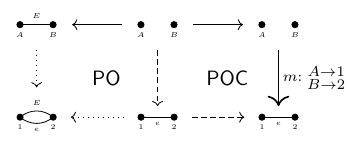}}}
\end{equation*}

\subsection{From conditions to application conditions for rewriting rules}\label{sec:CtoAC}

Conditions formulated for objects and for morphisms interact in a straightforward manner with the concept of rewriting rules, which requires two key constructions: the \emph{shift construction}, as introduced in Theorem~\ref{thm:Shift}, and the so-called \emph{transport construction}. We will follow the standard literature on $\cM$-adhesive categories (cf.\ e.g.~\cite{ehrig2014mathcal}) in defining the latter construction\footnote{In this definition, there is, \textit{a priori}, a choice to be made about the ``direction'' of the transport; our chosen convention agrees with the one in the literature and will prove convenient in our later applications to notions of compositionality of rules with application conditions.}.

\begin{definition}[Transport of conditions over rules; cf.\ \cite{ehrig2014mathcal}, Construction~3.15]\label{def:Trans}
Let 
\[
r\eqdef \rSpan{O}{o}{K}{i}{I}\in \Lin{\bfC}
\]
be a linear rule, and let $\ac{c}_O$ be an application condition over $O$. Then we define a \emph{transport construction} $\Trans$ such that $\Trans(r,\ac{c}_O)$ is an application condition over $I$, and which is constructed inductively as follows:
\begin{romanenumerate}
\item Case $\ac{c}_O=\ac{true}$:
\begin{equation}
  \Trans(r,\ac{true})\eqdef\ac{true}\,.
\end{equation}
\item Case $\ac{c}_O=\exists(a,\ac{c}_O')$ with $a:O\rightarrow O'$: if the diagram below
\begin{equation}\label{eq:transExists}
\inputtikz{transExists}
\end{equation}
is constructible, i.e.\ if the pushout complement marked $\mathsf{POC}$ exists, we define
\begin{equation}
  \Trans(r,\exists(a,\ac{c}_O'))\eqdef\exists(a^{*},\Trans(r',\ac{c}_O'))\,,
\end{equation}
with $r'\eqdef\rSpan{O'}{o'}{K'}{i'}{I'}$. Otherwise, we define
\begin{equation}
  \Trans(r,\exists(a,\ac{c}_O'))\eqdef\ac{false}\,.
\end{equation}
\item Case $\ac{c}_O=\neg \ac{c}_O'$: 
\begin{equation}
  \Trans(\neg \ac{c}_O')\eqdef\neg \Trans(\ac{c}_O')\,.
\end{equation}  
\item Case $\ac{c}_O=\land_{j\in J}\, \ac{c}_{O}^{(j)}$: 
\begin{equation}
  \Trans(r,\land_{j\in J}\, \ac{c}_{O}^{(j)})\eqdef\land_{j\in J} \Trans(r,\ac{c}_{O}^{(j)})\,.
\end{equation}  
\end{romanenumerate}
\end{definition}
It is straightforward to verify that the transport construction is invariant under the various possible isomorphisms involved in the relevant constructions of pushouts and pushout complements, for precisely the same reasons as those ensuring the invariance of the shift construction under isomorphisms as detailed in the proof of Theorem~\ref{thm:Shift}.

\begin{lemma}[Property of transport along DPO-type rules; cf.\ \cite{ehrig2014mathcal}, Lemma~3.14]\label{lem:trans}
In an $\cM$-adhesive category $\bfC$ satisfying Assumption~\ref{as:DPO}, let $r=(O\leftarrow K\rightarrow I)\in \Lin{\bfC}$ be a linear rule, and let $\ac{c}_O$ be an application condition over $O$. Then for any DPO-admissible match $(m:I\rightarrow X)\in\Match{r}{X}$ of the rule $r$ into an object $X\in \obj{\bfC}$, and if $m^{*}$ denotes the comatch of $m$, one finds that
\begin{subequations}\label{eq:trans}
\begin{align}
&m^{*}\vDash \ac{c}_O\iff m\vDash \Trans(r,\ac{c}_O)\,,\\
\intertext{with}
&\inputtikz{trans}\,.
\end{align}
\end{subequations}
\end{lemma}

The transport construction allows us to choose, without loss of generality,  a ``standard position'' for the application conditions in a linear rule, where we fix the following conventions:

\begin{definition}[Standard form for DPO-type rules with conditions and admissible matches]\label{def:SF}
  Let $\LinAc{\bfC}$ denote the \emph{set of linear rules with application conditions in standard form}, whose elements $R\in\LinAc{\bfC}$ are defined to be of the form
  \begin{equation}
    R= (r,\ac{c}_I)\,,\quad r=\rSpan{O}{o}{K}{i}{I}\in\Lin{\bfC}\,.
  \end{equation}
  Consequently, we introduce the notion of \emph{admissible matches} for applications of rules with application conditions to objects as follows; let $X\in \obj{\bfC}$ be an object, $R\in \LinAc{\bfC}$  as above a rule with application conditions, and $m:I\rightarrow X$ an element of $\cM$. Then we refer to $m$ as an \emph{admissible match} if and only if $m$ satisfies the application condition,
  \[
    m\vDash \ac{c}_I\,,
  \]
  and if the diagram below is constructible:
\begin{equation}\label{eq:Rapp}
\inputtikz{Rapp}\,.
\end{equation}
Equivalently, admissibility of $m$ thus amounts to admissibility with respect to the linear rule without application conditions (i.e.\ in the sense of~\eqref{eq:MplainDPOrules}) combined with satisfaction of the application condition, resulting in the following compact formula for the \emph{set of admissible matches} $\Match{R}{X}$:
\begin{equation}\label{eq:defAdmRac}
  \Match{R}{X}\eqdef\{m\in M_r(X)\mid m\vDash \ac{c}_I\}\,.
\end{equation} 
\end{definition}

For later convenience, we will employ the shorthand notation $\dot{\equiv}$ to signify \emph{``equivalence under the constraint of admissibility''}~\footnote{Following the standard notational convention in the rewriting literature, we choose to not make the linear rule with respect to which admissibility is required explicit in our notation $\dot{\equiv}$, since we will only utilize this form of equivalence in situations where the nature of this rule will be clear from the given context.}:
\begin{definition}
Let $r=\rSpan{O}{o}{K}{i}{I}\in \Lin{\bfC}$ be a linear rule and $\ac{c}_I,\tilde{\ac{c}}_I$  conditions over $I$. Then we define
\begin{equation}\label{def:dotEquiv}
  \left(\ac{c}_I \dot{\equiv} \tilde{\ac{c}}_I\right)\iffeq \left(\forall X\in \obj{\bfC}:\forall m\in \Match{r}{X}:\quad m\vDash \ac{c}_I\iff m\vDash \tilde{\ac{c}}_I\right)\,.
\end{equation}
\end{definition}

As a further refinement, the $\Trans$ construction enables a certain form of compression of application conditions for linear rules.

\begin{definition}[Compressed standard form for conditions]\label{def:csfc}
Let $R\eqdef (r=(O\leftarrow K\rightarrow I),\ac{c}_I)\in \LinAc{\bfC}$ be a linear rule with application conditions. Then we define the \emph{compressed standard form} for $\ac{c}_I$ as 
\begin{equation}
  \dot{\ac{c}}_I\eqdef \Trans(r,\Trans(\bar{r},\ac{c}_I))\,,
\end{equation}
where $\bar{r}\eqdef(I\leftarrow K\rightarrow O)$. 
\end{definition}
The intuition behind the above definition of \emph{``equivalence up to non-transportable subconditions''} is that while it is perfectly possible to define arbitrary conditions of the form $\exists(a:I\rightarrow A,\ac{c}_A)$ over the input $I$ of a linear rule, only those conditions will contribute in applications of the linear rule via matches that are transportable via $\Trans$, since by definition of $\Trans$ the operation $\Trans(r,\Trans(\bar{r},\exists(a:I\rightarrow A,\ac{c}_A))$ in effect tests whether or not the relevant pushout complement exists such that an admissible match of the rule could satisfy $\exists(a:I\rightarrow A,\ac{c}_A)$. This also implies that
\begin{equation}
  \ac{c}_I\, \dot{\equiv}\, \dot{\ac{c}}_I\,,
\end{equation}
thus motivating the notation $\dot{\ac{c}}_I$. A further illustration of this phenomenon is provided in Example~\ref{ex:assocDPO}.\\

We conclude this subsection by stating a number of technical lemmas that are necessary in order to derive our novel \emph{associative compositional DPO rewriting} framework as presented in the following subsection, which concern certain important properties of the $\Trans$ construction and of the compatibility of the $\Shift$ and $\Trans$ constructions:

\begin{lemma}[Units for Trans]\label{lem:transUnit}
    Let $X\in \obj{\bfC}$ be an arbitrary object and $\ac{c}_X$ a condition over $X$. Then with $r_{id_X}\eqdef\rSpan{X}{id_X}{X}{id_X}{X}\in \Lin{\bfC}$ the ``identity rule on $X$'', we find that
    \begin{equation}
        \Trans(r_{id_X},\ac{c}_X)\equiv \ac{c}_X\,.
    \end{equation}
\end{lemma}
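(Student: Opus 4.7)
The plan is to prove the lemma by structural induction on the condition $\ac{c}_X$, following the four cases of the inductive definition of $\Trans$ in Definition~\ref{def:Trans}. The base case $\ac{c}_X = \ac{true}$ is immediate from the first clause of the definition, which stipulates $\Trans(r,\ac{true}) = \ac{true}$. The cases of negation and conjunction reduce trivially to the inductive hypothesis because $\Trans$ commutes with $\neg$ and $\bigwedge_{j\in J}$ by construction, and the equivalence relation $\equiv$ is preserved by Boolean combinations (this follows directly from the pointwise definition of satisfiability on morphisms).

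The only nontrivial case is $\ac{c}_X = \exists(a : X \to X', \ac{c}_X')$. Here one must analyse the diagram~\eqref{eq:transExists} specialized to $r = r_{id_X}$. Because both legs of $r_{id_X}$ are identity morphisms, the pushout complement of the $L$-shape $X' \xleftarrow{a} X \xleftarrow{id_X} X$ always exists and is given, up to canonical isomorphism, by $K' = X'$ with $o' = id_{X'}$ and the left vertical morphism $K \to K'$ equal to $a$; this uses nothing more than the fact that the identity on $X$ is an isomorphism together with the uniqueness of pushout complements along isomorphisms (a consequence of the $\cM$-adhesive machinery, cf.\ the standard lemmas collected in Appendix~\ref{app:lem}). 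The pushout on the right of~\eqref{eq:transExists} then yields $I' = X'$, $i' = id_{X'}$, and $a^{*} = a$, so that the rule $r'$ constructed on the bottom row is precisely $r_{id_{X'}}$.

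Feeding this back into the defining equation of $\Trans$ for the existential case gives
\begin{equation}
  \Trans(r_{id_X},\exists(a,\ac{c}_X')) = \exists\bigl(a,\Trans(r_{id_{X'}},\ac{c}_X')\bigr)\,,
\end{equation}
and the inductive hypothesis applied to $\ac{c}_X'$ (a condition over $X'$) yields $\Trans(r_{id_{X'}},\ac{c}_X') \equiv \ac{c}_X'$. Since $\Shift$- and $\Trans$-style constructions respect $\equiv$ within the existential wrapper (if $\ac{c}_1 \equiv \ac{c}_2$ then $\exists(a,\ac{c}_1) \equiv \exists(a,\ac{c}_2)$, a direct consequence of the definition of satisfiability in Definition~\ref{def:ac}), we conclude $\Trans(r_{id_X},\exists(a,\ac{c}_X')) \equiv \exists(a,\ac{c}_X')$, completing the inductive step.

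The only step that requires any genuine care is the identification of the pushout complement and pushout along the identity legs of $r_{id_X}$; once this is granted, the rest is bookkeeping. I expect no obstacle beyond verifying that the canonical choice of $K' = X'$ with $o' = id_{X'}$ really does satisfy the universal property of the POC — which is immediate since pushouts along identities are identities, and by Lemma~\ref{lem:FPCfacts} (whose pushout analogue is classical) these complements are unique up to isomorphism, placing us within the isomorphism invariance of $\Trans$ already noted after Definition~\ref{def:Trans}.
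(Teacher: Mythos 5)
Your proof is correct, but it takes a genuinely different route from the paper's. The paper argues semantically: it invokes the characterizing property of $\Trans$ from Lemma~\ref{lem:trans}, namely $m^{*}\vDash\ac{c}_X\Leftrightarrow m\vDash\Trans(r,\ac{c}_X)$, and observes that for $r_{id_X}$ every $\cM$-morphism $m:X\rightarrow Y$ is an admissible match (the pushout complement always exists by Lemma~\ref{lem:Main}\ref{lem:fPOPB}) with $m^{*}=m$ by stability of isomorphisms under pushouts; the biconditional then collapses directly to the definition of $\equiv$. You instead unfold the inductive \emph{construction} of $\Trans$ from Definition~\ref{def:Trans} and check clause by clause that the identity rule reproduces the condition: the only real work is identifying the pushout complement and pushout along identity legs as $K'=I'=X'$ with $r'=r_{id_{X'}}$, which is exactly square $(A)$ of Lemma~\ref{lem:Main}\ref{lem:fPOPB} plus uniqueness of pushout complements (Lemma~\ref{lem:Main}, part (ii)(b) --- your citation of Lemma~\ref{lem:FPCfacts} here is a slight misattribution, though the needed fact is indeed in Appendix~\ref{app:lem}). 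Your approach is longer but self-contained at the level of the construction and does not presuppose Lemma~\ref{lem:trans} (which the paper quotes from the literature without proof); it also yields the marginally stronger observation that $\Trans(r_{id_X},\ac{c}_X)$ agrees with $\ac{c}_X$ syntactically up to the canonical isomorphisms, not merely up to satisfaction by the same morphisms. The paper's approach buys brevity and avoids having to verify that $\equiv$ is a congruence for $\neg$, $\wedge$ and the existential wrapper, though that verification is, as you note, immediate from Definition~\ref{def:ac}.
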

\begin{proof}
The proof follows directly from the property of the $\Trans$ construction stated in Lemma~\ref{lem:trans}, whereby one finds for arbitrary admissible matches $(m:X\rightarrow Y)\in\Match{r_{id_X}}{X}$ that
\begin{subequations}\label{eq:transSat}
\begin{align}
&m^{*}=m\vDash \ac{c}_X\; \Leftrightarrow\; m\vDash \Trans(r_{id_X},\ac{c}_X)\,,\\
\intertext{with}
&\inputtikz{transSat}\,.
\end{align}
\end{subequations}
Here, the pushout complement in the squares marked $\mathsf{POC}$ always exists by virtue of Lemma~\ref{lem:Main}\ref{lem:fPOPB}, and $m^{*}=m$ as well as $r_m(X)=Y$ follow due to stability of isomorphisms under pushouts.
\end{proof}

\begin{lemma}[Compositionality of Trans]\label{lem:transComp}
  Given two composable spans of $\cM$-morphisms
\[
  r\equiv \rSpan{C}{b}{B}{a}{A}\; \quad \text{and }\quad
  s\equiv\rSpan{E}{d}{D}{c}{C}\,,
\]
define their composition via pullback as
  \begin{equation}
  \inputtikz{transComp}\qquad s\circ r\eqdef (E\xleftarrow{d\circ f}F\xrightarrow{a\circ e}A)\,,
  \end{equation}
  which is again a span of $\cM$-morphisms (by stability of $\cM$-morphisms under pullbacks and compositions), and thus $r,s,s\circ r\in \Lin{\bfC}$. Let $\ac{c}_E$ be a condition over $E$. Then we find that
  \begin{equation}
    \Trans(r,\Trans(s,\ac{c}_E))\,\dot{\equiv}\, \Trans(s\circ r,\ac{c}_E)\,.
  \end{equation}
\end{lemma}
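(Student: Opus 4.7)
The plan is to proceed in direct analogy with the proof of Lemma~\ref{lem:shiftComp}, namely by invoking the satisfaction characterization of $\Trans$ provided by Lemma~\ref{lem:trans} rather than doing a structural induction on $\ac{c}_E$. The key external ingredient is the classical DPO concurrency theorem for linear rules without conditions in $\cM$-adhesive categories (cf.\ e.g.~\cite{ehrig2014mathcal}), which, under Assumption~\ref{as:DPO}, guarantees that an admissible match for the composed rule $s\circ r$ canonically decomposes into sequential admissible matches for $r$ and $s$ with compatible comatches.

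Concretely, I would fix an arbitrary object $X\in\obj{\bfC}$ and an admissible match $m\in \Match{s\circ r}{X}$. By the classical concurrency theorem, admissibility of $m$ for $s\circ r$ is equivalent to: (i) $m$ being admissible for $r$, producing the comatch $m^{*}:C\rightarrow r_m(X)$; and (ii) $m^{*}$ being admissible for $s$, producing a further comatch $(m^{*})^{*}:E\rightarrow s_{m^{*}}(r_m(X))$. Moreover, the object $s_{m^{*}}(r_m(X))$ is canonically isomorphic to $(s\circ r)_m(X)$, under which isomorphism $(m^{*})^{*}$ is identified with the direct comatch $m^{\circ}:E\rightarrow (s\circ r)_m(X)$ obtained by applying $s\circ r$ via $m$ in a single step.

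Two iterated applications of Lemma~\ref{lem:trans} then yield
\[
m\vDash \Trans(r,\Trans(s,\ac{c}_E))\iff m^{*}\vDash \Trans(s,\ac{c}_E)\iff (m^{*})^{*}\vDash \ac{c}_E,
\]
while a single application gives
\[
m\vDash \Trans(s\circ r,\ac{c}_E)\iff m^{\circ}\vDash \ac{c}_E.
\]
Since $(m^{*})^{*}=m^{\circ}$ under the canonical identification, the two characterizations coincide for every admissible $m$, which is exactly the $\dot{\equiv}$-equivalence claimed (recalling the definition of $\dot{\equiv}$ from~\eqref{def:dotEquiv} as equivalence under the admissibility constraint).

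The main obstacle is the invocation of the classical concurrency theorem in the present $\cM$-adhesive setting; while standard, its constituent steps (decomposition of the POC along $s\circ r$ into a POC for $r$ followed by a POC for $s$, and preservation of the resulting comatches) rely on pushout-pushout decomposition, $\cM$-stability under pushouts and pullbacks, and the $\cM$-van Kampen property. If one prefers to avoid this external appeal, an alternative route is a structural induction on $\ac{c}_E$: the trivial, negation, and conjunction cases follow mechanically from the corresponding clauses of $\Trans$, and the $\exists(u:E\rightarrow U,\ac{c}_U)$ case reduces to showing that the iterated POC+PO construction along $s$ then $r$ produces a morphism $A\rightarrow W$ isomorphic to the one produced by the single POC+PO along $s\circ r$, and that the residual rules compose (via pullback) to the residual of $s\circ r$; this diagrammatic identification is precisely where the $\cM$-van Kampen property and $\cM$-effective unions are used, after which the inductive hypothesis closes the argument.
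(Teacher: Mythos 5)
Your proposal is correct and follows essentially the same strategy as the paper's proof: both reduce the claim to the fact that a single DPO step along $s\circ r$ decomposes into a step along $r$ followed by a step along $s$ applied at the comatch (with the iterated comatch agreeing with the direct one), and then conclude by iterating the satisfaction characterization of $\Trans$ from Lemma~\ref{lem:trans}. The only real difference is that the paper establishes this decomposition inline by an explicit diagram chase (pullback--pullback decomposition, the $\cM$-van Kampen property, and pushout--pushout decomposition) rather than citing the concurrency theorem, of which only the full-overlap special case is needed here --- a case you should note explicitly, since the overlap $\pB{C\to X_1\leftarrow C}$ collapses to $C$ precisely because comatches are $\cM$-morphisms and hence monic.
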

\begin{proof} The proof relies upon the property of the transport construction stated in Lemma~\ref{lem:trans} as well as on the $\cM$-adhesivity of the underlying category $\bfC$. We proceed by constructing the following commutative diagram in two different ways for the two directions of the proof:
  \begin{equation}\label{eq:TransAux}
  \inputtikz{TransAux}
\end{equation}

  \paragraph{``$\Rightarrow$'' direction:} Suppose that $m\in \Match{r}{X}$ and that the comatch $m^{*}$ of $m$ satisfies $m^{*}\in \Match{s}{X'}$ (with $X'=r_m(X)$). Then according to Lemma~\ref{lem:trans}, this implies that
  \[
       m^{**}\vDash \ac{c}_E \quad \Leftrightarrow \quad 
       m^{*}\vDash \Trans(s,\ac{c}_E) \Leftrightarrow \quad 
       m\vDash \Trans(r,\Trans(s,\ac{c}_E))\,.
  \]
  We have to demonstrate that $m\in \Match{s\circ r}{X}$ as well as
  \begin{equation}
      m\vDash \Trans(r,\Trans(s,\ac{c}_E))\quad \Rightarrow \quad m\vDash \Trans(s\circ r,\ac{c}_E)\,.
  \end{equation}
Admissibility of $m$ and $m^{*}$ entails that the squares formed in the back row of~\eqref{eq:TransAux} (the ones drawn in black and {\color{blue}blue}) are constructible as pushouts and pushout complements, respectively. Construct the objects $F$ and $F'$ as pullbacks,
  \[
      F=\pB{D\xrightarrow{c}C\xleftarrow{b}B}\quad \text{and }\quad
      F'=\pB{D'\rightarrow X'\leftarrow B'}\,,
  \]
  which by the universal property of pullbacks induces a unique arrow $F\rightarrow F'$. By stability of $\cM$-morphisms under pullbacks and by the $\cM$-morphism decomposition property, respectively, all morphisms thus constructed are found to be in $\cM$. Invoking pullback-pullback decomposition (Lemma~\ref{lem:Main}\ref{lem:PBPBdec}) and the $\cM$-van Kampen property twice (cf.\ Definition~\ref{def:Madh}), we conclude that the induced squares $\cSquare{F,F',D',D}$ and $\cSquare{F,F',B',B}$ are in fact pushouts. Thus by pushout composition, the front left and right ``curvy'' faces (in orange) are pushouts. This entails that $m\in \Match{s\circ r}{X}$, and thus by definition of $\Trans$ that indeed
  \[
      m\vDash \Trans(s\circ r,\ac{c}_E)\,.
  \]

\paragraph{``$\Leftarrow$'' direction:} Suppose that $m\in \Match{s\circ r}{X}$, which by Lemma~\ref{lem:trans} implies that if $m^{**}$ is the comatch of $m$ under the application of the rule $s\circ r$, then
\[
    m^{**}\vDash \ac{c}_E\quad \Leftrightarrow \quad m \vDash \Trans(s\circ r,\ac{c}_E)\,.
\]
Admissibility of $m$ for the rule $s\circ r$ applied to the object $X$ entails that the ``curvy'' front right and front left squares of the diagram~\eqref{eq:TransAux} drawn in black and {\color{orange}orange} are constructible as pushout complement and pushout, respectively. We may then construct the remaining parts of the diagram via forming the pushouts
  \[
    D'=\pO{D\leftarrow F\rightarrow F'}\,,\;
    B'=\pO{B\leftarrow F\rightarrow F'}\,,\; 
    X'=\pO{C\leftarrow B\rightarrow B'}\,,
  \]
which uniquely induces the remaining arrows drawn in {\color{blue}blue} (and where we could have equivalently defined $X'$ as $X'=\pO{C\leftarrow D\rightarrow D'}$). By virtue of three applications of pushout-pushout decomposition (Lemma~\ref{lem:Main}\ref{lem:POPOdec}), we conclude that all squares in the back of diagram~\eqref{eq:TransAux} thus constructed are pushouts. Furthermore, stability of $\cM$-morphisms under pushouts implies that all newly constructed morphisms are in $\cM$. Since thus the back part of the diagram encodes two DPO rewrite steps with $m\in \Match{r}{X}$, $m^{*}$ the comatch of $m$ under application of the rule $r$, $m^{*}\in \Match{s}{r_m(X)}$, and since $m^{**}$ is also the comatch of $m^{*}$ under application of $s$, we find by Lemma~\ref{lem:trans} that 
  \[
      m^{**}\vDash \ac{c}_E \quad \Leftrightarrow \quad
      m^{*}\vDash\Trans(s,\ac{c}_E) \quad \Leftrightarrow \quad
      m\vDash \Trans(r,\Trans(s,\ac{c}_E))\,,
  \]
  which concludes the proof.
\end{proof}

The compositionality of the $\Trans$ construction in particular permits an efficient encoding of the reduced standard form of application conditions:

\begin{corollary} 
Let $R\eqdef ((O\xleftarrow{o} K\xrightarrow{i} I),\ac{c}_I)\in \LinAc{\bfC}$ be a linear rule with application conditions. Then the compressed standard form $\dot{\ac{c}}_I$ for $\ac{c}_I$ according to Definition~\ref{def:csfc} evaluates to 
\begin{equation}
  \dot{\ac{c}}_I\,\dot{\equiv}\, \Trans(I\xleftarrow{i} K\xrightarrow{i} I,\ac{c}_I)\,.
\end{equation}
\end{corollary}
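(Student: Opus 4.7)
The plan is to unfold the definition of the compressed standard form, apply the compositionality of $\Trans$ (Lemma~\ref{lem:transComp}) to collapse the nested transport into a single transport along a composite span, and then observe that this composite span simplifies drastically because the common leg used in the composing pullback is a monomorphism.

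First I would rewrite $\dot{\ac{c}}_I$ via Definition~\ref{def:csfc} as $\Trans(r,\Trans(\bar{r},\ac{c}_I))$, verifying that the types match: $\ac{c}_I$ sits on the output of $\bar{r}$, so the inner transport yields a condition on $O$ (the input of $\bar{r}$, equivalently the output of $r$), and the outer transport produces a condition on $I$ (the input of $r$). At this point compositionality of $\Trans$ collapses the nested expression into a single $\Trans(\bar{r}\circ r,\ac{c}_I)$, up to the admissibility-relative equivalence $\dot{\equiv}$ of Definition~\ref{def:dotEquiv}.

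The remaining task is to identify the composite span $\bar{r}\circ r$. By the pullback-based span composition recalled in Lemma~\ref{lem:transComp}, this is formed by taking the pullback of the cospan $K\xrightarrow{o}O\xleftarrow{o}K$ built from the two legs pointing to the shared object $O$. Here I would invoke the fact that $o\in\cM$ is a monomorphism: the pullback of a monomorphism along itself is canonically the identity span, i.e.\ the apex is $K$ itself with both projections equal to $\mathrm{id}_K$ (any two arrows into $K$ that agree after post-composition with $o$ must already coincide). Post-composing these identity projections with the outer legs $i:K\to I$ of $r$ and $\bar{r}$ yields $\bar{r}\circ r = (I\xleftarrow{i}K\xrightarrow{i}I)$, from which the claimed equivalence follows immediately.

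I do not anticipate a real obstacle: compositionality of $\Trans$ is the main external input and is already established, while the pullback-of-a-mono-against-itself collapse is a one-line consequence of the universal property of pullbacks. The only point requiring care is orienting $r$ and $\bar{r}$ so that the appropriate legs meet in the pullback realising their composition, and then checking that the outer legs of the composite are indeed both $i$ (coming from the two $K\xrightarrow{i}I$ legs that survive the composition) rather than $o$.
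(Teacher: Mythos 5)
Your proposal is correct and is essentially the argument the paper intends (the corollary is stated as an immediate consequence of Lemma~\ref{lem:transComp}): unfold Definition~\ref{def:csfc}, collapse the nested transports via compositionality of $\Trans$ up to $\dot{\equiv}$, and note that the composing pullback of the cospan $K\xrightarrow{o}O\xleftarrow{o}K$ is the identity span on $K$ because $o\in\cM$ is a monomorphism, so the composite span is $(I\xleftarrow{i}K\xrightarrow{i}I)$. No gaps.
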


\begin{lemma}[Compatibility of Shift and Trans; compare~\cite{GOLAS2014}, Fact~3.14]\label{lem:ST}
Given the data as in the commutative diagram below,
\begin{equation}\label{eq:transCST}
\inputtikz{transCST}\,,
\end{equation}
letting $r=\rSpan{O}{o}{K}{i}{I}$ and $r'=\rSpan{O'}{o'}{K'}{i'}{I'}$, we have that for all objects $X$ and for all admissible matches $n\in \Match{r'}{X}$ of $r'$ into $X$,
\begin{equation}
n\vDash \Shift(p,\Trans(r,\ac{c}_{O}))
  \Leftrightarrow n\vDash \Trans(r',\Shift(p^{*},\ac{c}_{O}))\,,
\end{equation}
which we can write more compactly as
\begin{equation}
\Shift(p,\Trans(r,\ac{c}_{O}))\,\dot{\equiv}\,\Trans(r',\Shift(p^{*},\ac{c}_{O}))\,.
\end{equation}
\end{lemma}
\begin{proof} Let us fix an object $X$ and some admissible match $n\in \Match{r'}{X}$.

\paragraph{``$\Rightarrow$'' direction:} Suppose that $(n:I'\rightarrow X)\in \Match{r'}{X}$ satisfies $n\vDash \Shift(p,\Trans(r,\ac{c}_{O}))$ (which by definition of satisfaction of conditions entails in particular that $\Trans(r',\Shift(p^{*},\ac{c}_{O}))\not\equiv \ac{false}$). %
Since $n$ is by assumption an admissible match of $r'$, we can rewrite $X$ by applying $r'$ along $n$, %
resulting in the diagram below (where the top part is inserted from the assumption of the lemma):
\begin{equation}\label{eq:TransProofA}
\inputtikz{TransProofA}
  \end{equation}
By composition of pushout squares, we conclude that the $\cM$-morphism $m=n\circ p$ is an admissible match for $r$, %
which entails that $r'_n(X)\cong r_m(X)$. %
By definition of $\Shift$, $n\vDash \Shift(p,\Trans(r,\ac{c}_{O}))$ implies that $m=n\circ p$ satisfies $m\vDash \Trans(r',\ac{c}_O)$, %
and moreover that $\Trans(r',\ac{c}_O)\not\equiv \ac{false}$. %
Since we found that $m\in \Match{r}{X}$, %
by definition of $\Trans$ we have that the comatch $m^{*}:O\rightarrow r_m(X)$ of $m$ (which is by construction of DPO rule applications in $\cM$) has the property $m^{*}=n^{*}\circ p^{*}\vDash \ac{c}_O$. %
The latter implies that $n^{*}\vDash \Shift(p^{*},\ac{c}_O)$. %
Since by assumption $n\in \Match{r'}{X}$, %
and since $n^{*}$ is the comatch of $n$, %
we finally conclude that indeed $n\vDash \Trans(r',\Shift(p^{*},\ac{c}_O))$.

\paragraph{``$\Leftarrow$'' direction:} The proof is entirely analogous to the previous case (starting from the observation that $n\in \Match{r'}{X}$ together with the data provided in~\eqref{eq:transCST} entails that $m\in \Match{r}{X}$).
\end{proof}

\subsection{A refined notion of sequential compositions of DPO-type rules with conditions}\label{sec:rCompRefined}

In this subsection, we will present a refinement of the notion of sequential rule compositions known from the traditional rewriting literature~\cite{ehrig2014mathcal} obtained via requiring the underlying $\cM$-adhesive category $\bfC$ to satisfy Assumption~\ref{as:DPO}, which as we will demonstrate guarantees a number of additional technical properties of this type of composition. Referring to Remark~\ref{rem:dpoCC} for the precise technical disambiguation, the motivation for our construction consisted in the idea that the operation of \emph{sequential composition} of two linear rules should lift to an operation permitting to compose arbitrary numbers of linear rules, in the sense that the composition operation should satisfy a certain abstract \emph{associativity} property (cf.\ Section~\ref{sec:assocDPO}). From a technical standpoint, ensuring this notion of \emph{compositionality} permits to develop novel static analysis techniques for rule-based systems, such as the \emph{DPO-type rule algebra framework}~\cite{bdg2016,bp2018,bp2019-ext} for the study of stochastic rewriting systems, which has recently been extended to the setting of linear rules with conditions based upon the present article in~\cite{BK2020}. The second main application of compositional rewriting has been the development of the theory of \emph{tracelets}~\cite{behr2019tracelets}.

\begin{definition}\label{def:compACrefined}
  Let $\bfC$ be a category satisfying Assumption~\ref{as:DPO}. Let ${\color{blue}R_j\equiv (r_j,\ac{c}_{I_j})}\in\LinAc{\bfC}$ be two linear rules with application conditions ($j=1,2$), and let 
  \[
  {\color{h1color}\mu_{21}\equiv\rSpan{I_2}{m_2}{M_{21}}{m_1}{O_1}}
  \]
  be a span of monomorphisms (i.e.\ $m_{1},m_{2}\in \cM$). If the diagram below is constructible,
  \begin{equation}\label{eq:defDPOcomp}
  \inputtikz{compACrefined}
  \end{equation}
  where
  \begin{equation}\label{eq:acIcomp}
   {\color{h2color} \ac{c}_{I_{21}}}
   \eqdef\Shift(p_1,{\color{blue}\ac{c}_{I_1}})\;\bigwedge\;
   \Trans\left({\color{h1color}N_{21}}\leftarrow K_1'\rightarrow {\color{h2color}I_{21}},
   \Shift({\color{h1color}m_2'},{\color{blue}\ac{c}_{I_2}})\right)\,,
  \end{equation}
  and if $\ac{c}_{I_{21}}\not\equiv \ac{false}$, then we call $\mu_{21}$ an \emph{admissible match} for the rules with conditions $R_2$ and $R_1$, denoted 
  \[
    \mu_{21}\in M_{R_2}(R_1)\,.
  \]
  In this case, we introduce the notation $\comp{R_2}{\mu_{21}}{R_1}$ to denote the composite,
  \begin{equation}\label{eq:defCompRC}
    \comp{R_2}{\mu_{21}}{R_1}\eqdef{\color{h2color}\left(O_{21}\xleftarrow{o_{21}}
    K_{21}\xrightarrow{i_{21}}I_{21},\ac{c}_{I_{21}}\right)}\,.
  \end{equation}
\end{definition}

\begin{remark}\label{rem:dpoCC}
The variant of rule composition provided in Definition~\ref{def:compACrefined} (i.e.\ starting from an ``overlap'' of two rules encoded as an $\cM$-monic span, followed by taking a pushout of this span) follows the philosophy put forward in~\cite{Boehm:1987aa,lack2005adhesive} (sometimes referred to as \emph{D-concurrent} composition). However, for the setting of DPO-rewriting over $\cM$-adhesive categories for rules with conditions, an alternative construction, referred to as \emph{E-concurrent} composition, had been the de facto standard in the rewriting literature, developed by Ehrig et al. (cf.\ e.g.\ \cite{ehrig2014mathcal}, Definition~4.13). More precisely, based upon the assumption that the underlying $\cM$-adhesive category $\bfC$ possesses an \emph{$\cE'$-$\cM$-pair-factorization} (cf.\ Remark~\ref{rem:Shift}), the starting point of constructing an \emph{E-concurrent} rule composition according to~\cite{ehrig2014mathcal} consists in picking a \emph{cospan} ${\color{h1color}(m_2',m_1')}\in \cE'$, as opposed to picking a \emph{span} of $\cM$-morphisms ${\color{h1color}(m_2,m_1)}$ and taking the pushout to obtain the cospan ${\color{h1color}(m_2',m_1')}$ as in~\eqref{eq:defDPOcomp} for the \emph{D-concurrent} approach. While the \emph{E-concurrent} approach possesses the technical advantage that it does not require the category $\bfC$ to possess $\cM$-effective unions, nor that rules must be linear nor matches necessarily in $\cM$ (and thus applying to a broader class of rewriting systems), the reliance upon the $\cE'$-$\cM$-factorizations is also responsible for certain algorithmic disadvantages.  In close analogy to the discussion provided in Remark~\ref{rem:Shift}, exhaustively enumerating all possible cospans in $\cE'$ suitable for the composition of two rewriting rules is algorithmically quite non-trivial in the general case, while enumerating the (isomorphism classes of $\cM$-) monic spans (i.e.\ the ``overlaps'') of two linear rules is comparatively straightforward (c.f.\ e.g.\ \cite{behr2020commutators} for a recent implementation via the \textsc{Microsoft Z3} SMT-solver). On the other hand, utilizing yet again the results of Corollary~\ref{cor:jE}, our refinement of \emph{D-concurrent} compositions as well as the respective variant of the \emph{concurrency theorem} (see Theorem~\ref{thm:conc}) can be demonstrated to constitute a special case of the \emph{E-concurrent} constructions of Ehrig et al.~\cite{ehrig2014mathcal}: for an $\cM$-adhesive category satisfying Assumption~\ref{as:DPO}, a cospan ${\color{h1color}(m_2',m_1')}$ of $\cM$-morphisms constructed via taking the pushout of a span of $\cM$-morphisms ${\color{h1color}(m_2,m_1)}$ is in particular jointly epimorphic, and thus constitutes a special case of a cospan in $\cE'$ (cf.\ \cite[Fact~3.7]{Braatz:2010aa}, applied under the assumption of $\cM$-effective unions).

\end{remark}

The definition of the composition operation $\comp{.}{.}{.}$ entails a number of highly non-trivial effects in practical computations, which originate from the interplay of admissibility of matches for rules without application conditions and the requirements on the induced composite application conditions. One of the most striking such results, well known also from the traditional graph rewriting literature~\cite{ehrig2014mathcal}, is the following:

\begin{lemma}[Trivial matches]\label{lem:trivMatch}
  By definition of the notion of admissible matches, for any two linear rules with application conditions $R_j\equiv(r_j,\ac{c}_{I_j})\in\LinAc{\bfC}$ ($j=1,2$), the \emph{trivial match}
  \[
    \mu_{\mIO}\eqdef\rSpanAlt{I_2}{}{\mIO}{}{O_1}
  \]
  is an admissible match $\mu_{\mIO}\in M_{R_2}(R_1)$ if and only if the composite condition $\ac{c}_{I_{21}}$ does not evaluate to $\ac{false}$. 
  \end{lemma}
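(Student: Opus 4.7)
The plan is to verify that for the trivial overlap $\mu_{\mIO} = (I_2 \hookleftarrow \mIO \hookrightarrow O_1)$, every geometric part of the diagram in Definition~\ref{def:compACrefined} is automatically constructible, so that the only possible obstruction to admissibility is non-triviality of the composite application condition $\ac{c}_{I_{21}}$.

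First I would note that both legs of $\mu_{\mIO}$ are $\cM$-morphisms by Definition~\ref{def:Minit}, since under Assumption~\ref{as:DPO} the category $\bfC$ possesses a strict $\cM$-initial object $\mIO$. By Lemma~\ref{lem:binaryCoproducts}, the pushout $N_{21}=\pO{I_2\leftarrow \mIO\rightarrow O_1}$ exists and coincides with the binary coproduct $I_2+O_1$, and the morphisms $m_1':O_1\to N_{21}$, $m_2':I_2\to N_{21}$ are the coproduct injections (and thus in $\cM$ by the same lemma).

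Next I would check that the two pushout complements (producing $K_1'$ and $K_2'$) are always constructible in the trivial-match case. Since $m_2'\circ i_2 = in_{I_2}\circ i_2$ factors through the coproduct injection, the square
\begin{equation*}
\begin{mycd}
K_2 \ar[r,"i_2"] \ar[d,"in_{K_2}"'] & I_2 \ar[d,"in_{I_2}"] \\
K_2+O_1 \ar[r,"i_2 + id_{O_1}"'] & I_2+O_1
\end{mycd}
\end{equation*}
is evidently a pushout (it is a pushout along $\mIO\rightarrow O_1$ extended by the identity), exhibiting $K_2'\cong K_2+O_1$ as the required pushout complement; the argument for $K_1'\cong K_1+I_2$ is symmetric. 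All remaining pushouts (forming $I_{21}$ and $O_{21}$) and the pullback (forming $K_{21}$) then exist by $\cM$-adhesivity. Consequently the entire underlying span of $\comp{R_2}{\mu_{\mIO}}{R_1}$ is well-defined, the composite condition $\ac{c}_{I_{21}}$ is constructible by Theorem~\ref{thm:Shift} and Definition~\ref{def:Trans}, and the only remaining criterion from Definition~\ref{def:compACrefined} is $\ac{c}_{I_{21}}\not\equiv\ac{false}$, yielding the claim. I do not anticipate any real obstacle: the main content is the observation that coproduct injections furnish particularly well-behaved pushout squares, so the geometric side of admissibility is automatic for the trivial match and only the logical side can fail.
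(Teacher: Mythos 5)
Your proof is correct and follows essentially the same route as the paper: identify all objects in the composition diagram as binary coproducts via the strict $\cM$-initial object, observe that the pushout complements always exist because squares of the form $K_2\rightarrow I_2$, $K_2+O_1\rightarrow I_2+O_1$ are pushouts, and reduce admissibility to the non-falsity of $\ac{c}_{I_{21}}$. The only cosmetic difference is that the paper delegates the pushout-complement step to its appendix Lemma~\ref{lem:POCsepcial} (proved by pushout-pushout decomposition), whereas you argue the same fact inline.
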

\begin{proof}
  The proof follows directly from the definition of the composition operation $\comp{.}{.}{.}$, namely by construction of the following diagram:
\begin{equation}\label{eq:trivCompAux}
\inputtikz{trivCompAux}
  \end{equation}
By virtue of Lemma~\ref{lem:binaryCoproducts} and Lemma~\ref{lem:POCsepcial}, the pushout complements marked $\mathsf{POC}$ in the diagram above always exist. To determine whether the trivial match $\mu_{\mIO}$ is an admissible match, it then remains to evaluate the composite condition $\ac{c}_{I_{21}}$, which according to~\eqref{eq:acIcomp} of Definition~\ref{def:compACrefined} reads
 \begin{equation}\label{eq:acIcompTM}
 \begin{aligned}
   {\color{h2color} \ac{c}_{I_{21}}}
   &\eqdef\Shift(p_1:I_1\rightarrow I_2+I_1,\ac{c}_{I_1})\\
   &\qquad \bigwedge\;
   \Trans\left({\color{h1color}I_2+O_1}\leftarrow I_2+K_1\rightarrow {\color{h2color}I_2+I_1},
   \Shift({\color{h1color}m_2':I_2\rightarrow I_2+O_1},\ac{c}_{I_2})\right)\,.
  \end{aligned}
  \end{equation}
Thus the claim follows, since the above condition may evaluate to $\ac{false}$ in general, such as in the case where $\ac{c}_{I_2}=\neg \exists(I_2\rightarrow I_2+O_1,\ac{true})$.
\end{proof}

Nevertheless, it is possible to exhibit one special rule for which $\mu_{\mIO}$ is always an admissible match:

\begin{lemma}[Neutral element for $\comp{.}{.}{.}$]\label{lem:neut}
  By definition of $\comp{.}{.}{.}$, the \emph{trivial rule}
  \[
    R_{\mIO}\eqdef(\mIO\leftarrow \mIO \rightarrow \mIO,\ac{true})
  \]
  is the (left- and right-) \emph{neutral element} for $\comp{.}{.}{.}$.
  \end{lemma}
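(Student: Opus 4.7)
The plan is to specialize the composition diagram of Definition~\ref{def:compACrefined} to the case where $R_2 = R_{\mIO}$ (for left-neutrality) and $R_1 = R_{\mIO}$ (for right-neutrality), and verify in each case that every piece of data reduces to the obvious one: the underlying span is preserved, and the composite condition is equivalent to $\ac{c}_I$.

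For left-neutrality, set $R_2 = R_{\mIO}$ and $R_1 = R = (O\leftarrow K\rightarrow I,\ac{c}_I)$. The overlap span $\mu_{\mIO}$ forces $M_{21} = \mIO$, so $N_{21} = \pO{\mIO\leftarrow \mIO \rightarrow O} \cong O$ (the coproduct with an initial object). On the $R_2$-side the POC along $\mIO\rightarrow \mIO\rightarrow O$ yields $K_2' \cong O$, hence $O_{21}\cong O$. On the $R_1$-side we are computing the POC of $K\rightarrow O\xrightarrow{id_O} O$, which by Lemma~\ref{lem:Main}\ref{lem:fPOPB} (as already invoked in the proof of Lemma~\ref{lem:trivMatch}) exists and gives $K_1' \cong K$ with $K_1\to K_1'$ an isomorphism, and thus $I_{21}\cong I$. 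The pullback $K_{21} = \pB{O\xleftarrow{id_O}O\leftarrow K}$ gives $K_{21}\cong K$. So the underlying span is (up to isomorphism) exactly $O\leftarrow K\rightarrow I$.

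For the condition, observe that the induced $p_1 : I \to I_{21}\cong I$ is an isomorphism (in fact the identity up to the isomorphisms above), so by Lemma~\ref{lem:shiftUnit} we get $\Shift(p_1,\ac{c}_{I_1}) \equiv \ac{c}_I$. Since $\ac{c}_{I_2} = \ac{true}$, both the inner $\Shift$ and outer $\Trans$ in the second conjunct of~\eqref{eq:acIcomp} evaluate to $\ac{true}$ directly from the base cases of Theorems~\ref{thm:Shift} and Definition~\ref{def:Trans}. Thus $\ac{c}_{I_{21}}\equiv \ac{c}_I$, and the composite is isomorphic to $R$.

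For right-neutrality the argument is dual: take $R_1 = R_{\mIO}$, $R_2 = R$. One finds $N_{21}\cong I$, $K_2' \cong K$, $K_1'\cong I$, and the composite span is again $O\leftarrow K\rightarrow I$. The morphism $m_2':I\rightarrow N_{21}$ is an isomorphism, so Lemma~\ref{lem:shiftUnit} gives $\Shift(m_2',\ac{c}_{I_2}) \equiv \ac{c}_I$. The rule $N_{21}\leftarrow K_1'\rightarrow I_{21}$ reduces, up to isomorphism, to the identity rule $r_{id_I}$, so Lemma~\ref{lem:transUnit} yields $\Trans(r_{id_I},\ac{c}_I)\equiv \ac{c}_I$. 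Together with $\ac{c}_{I_1} = \ac{true}$ (making the first conjunct of~\eqref{eq:acIcomp} trivial), we conclude $\ac{c}_{I_{21}}\equiv \ac{c}_I$. The only real bookkeeping is tracking the canonical isomorphisms $N_{21}\cong I$, $K_{2}'\cong K$, $K_{1}'\cong I$ through the definition, which is routine given the hypotheses on $\mIO$ being a (strict) $\cM$-initial object; everything else collapses automatically through Lemmas~\ref{lem:shiftUnit} and~\ref{lem:transUnit}.
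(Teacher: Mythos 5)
Your proof is correct and follows essentially the same route as the paper's: specialize the composition diagram of Definition~\ref{def:compACrefined} to the trivial rule on either side, observe that the overlap is forced to be $\mu_{\mIO}$ and that all coproducts/pushout complements collapse onto the components of the nontrivial rule, and then reduce the composite condition~\eqref{eq:acIcomp} to $\ac{c}_I$ via the unit properties of $\Shift$ and $\Trans$. The only cosmetic difference is that you cite Lemmas~\ref{lem:shiftUnit} and~\ref{lem:transUnit} explicitly (applied to canonical isomorphisms rather than literal identities, which is harmless since all constructions are defined up to isomorphism), whereas the paper performs the same collapse by direct inspection of the specialized diagram~\eqref{eq:trivCompAux}.
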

\begin{proof}
  The proof follows from a specialization of~\eqref{eq:trivCompAux} in the proof of Lemma~\ref{lem:trivMatch}, by specializing either of the two linear rules involved to the trivial rule. Note first that on the level of rules without application conditions, the only admissible match between the trivial rule and another linear rule is the trivial match $\mu_{\mIO}$. Let us then compute the condition $\ac{c}_{I_{21}}$ of the composite for the case $r_1=r_{\mIO}$ and for generic $r_2=(O_2\leftarrow K_2\rightarrow I_2)$, which reads according to \eqref{eq:acIcomp} of Definition~\ref{def:compACrefined} reads
 \begin{equation}\label{eq:acIcompTMa}
 \begin{aligned}
   {\color{h2color} \ac{c}_{I_{21}}}
   &=\Shift(p_1:\mIO\rightarrow I_2,\ac{true})\\
   &\qquad \bigwedge\;
   \Trans\left({\color{h1color}I_2}\leftarrow I_2\rightarrow {\color{h2color}I_2},
   \Shift({\color{h1color}q_2:I_2\rightarrow I_2},\ac{c}_{I_2})\right)\\
   &\equiv \ac{c}_{I_2}\,.
  \end{aligned}
  \end{equation}
Thus for every linear rule $r_2$ with application condition $\ac{c}_{I_2}\not\equiv \ac{false}$, we have $\mu_{\mIO}\in \Match{R_2}{R_{\emptyset}}$. For the remaining case, consider that $r_1=(O_1\leftarrow K_1\rightarrow I_1)$ is an arbitrary linear rule with application condition $\ac{c}_{I_1}\not\equiv \ac{false}$. Again, admissibility of $\mu_{\mIO}$ as a match of the rules without conditions follows by a specialization of~\eqref{eq:trivCompAux}, so it remains to compute the composite condition $\ac{c}_{I_{21}}$: 
  \begin{equation}\label{eq:acIcompTMb}
 \begin{aligned}
   {\color{h2color} \ac{c}_{I_{21}}}
   &=\Shift(p_1:I_1\rightarrow I_1,\ac{c}_{I_1})\\
   &\qquad \bigwedge\;
   \Trans\left({\color{h1color}O_1}\leftarrow K_1\rightarrow {\color{h2color}I_1},
   \Shift({\color{h1color}q_2:\mIO\rightarrow O_1},\ac{true})\right)\\
   &\equiv \ac{c}_{I_1}\,.
  \end{aligned}
  \end{equation}
\end{proof}

\subsection{Concurrency theorem for DPO-type rules with conditions}\label{sec:DPOconcur}

We will need the following \emph{concurrency theorem}, which is a variant of a result of~\cite{Habel:2012aa} adapted to our refined notion of rule compositions, and which for the case of rules without conditions was introduced in~\cite{bp2019-ext}:

\begin{restatable}[Concurrency theorem, compare Thm.~4 of~\cite{Habel:2012aa-ext} and Thm.~2.7 of~\cite{bp2019-ext}]{theorem}{thmConc}\label{thm:conc}
Let $\bfC$ be an $\cM$-adhesive category satisfying Assumption~\ref{as:DPO}, $X_0\in \obj{\bfC}$ an object, and ${\color{blue}R_j\equiv (r_j,\ac{c}_{I_j})}$ be two linear rules with application conditions ($j=1,2$). Then there exists the following \emph{bijection}: 
\begin{romanenumerate}
  \item \emph{``Synthesis'':} For every sequence of rule applications
    \begin{equation}\label{eq:concTHM1}
      X_2\xLeftarrow[R_2,m_2]{} X_1 \xLeftarrow[R_1,m_2]{} X_0
    \end{equation}
    along admissible matches $m_1\in M_{R_1}(X_0)$ and $m_2\in M_{R_2}(X_1)$ with $X_1=r_{1_{m_1}}(X_0)$, there exist admissible matches $\mu_{21}\in M_{R_2}(R_1)$ of the linear rule $R_2$ into $R_1$ and $m_{21}\in M_{R_{21}}(X_0)$, with $R_{21}\equiv(r_{21},\ac{c}_{I_{21}})$, $r_{21}=\comp{R_2}{\mu_{21}}{R_1}$, and an application condition $\ac{c}_{I_{21}}$ computed as
    \begin{equation}\label{eq:concThmcompAC}
    \ac{c}_{I_{21}}=\Shift(I_1\rightarrow I_{21},\ac{c}_{I_1})\;\bigwedge\;
    \Trans\left(\rSpanAlt{N_{21}}{}{K_1'}{}{I_{21}},
      \Shift(I_2\rightarrow N_{21},\ac{c}_{I_1})\right)\,,
    \end{equation}
    where the morphisms and objects in this formula depend (uniquely up to isomorphism) on the input data, such that $X_2\cong R_{21_{m_{21}}}(X_0)$.
  \item \emph{``Analysis'':}  For every admissible match $\mu\in M_{R_2}(R_1)$ and for every rule application
    \begin{equation}
      X_2 \xLeftarrow[R_{21},m_{21}]{}X_0
    \end{equation}
    with $m_{21}\in M_{R_{21}}(X)$ and $R_{21}=\comp{R_2}{\mu_{21}}{R_1}$, there exists a pair of admissible matches such as in~\eqref{eq:concTHM1} which transform $X_0$ via $X_1$ into the same (up to isomorphism) object $X_2$.
\end{romanenumerate}
\end{restatable}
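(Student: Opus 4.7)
The plan is to reduce the statement to three nearly independent pieces: (a) the bijection at the level of underlying spans and matches (i.e.\ for rules \emph{without} application conditions), (b) the reformulation of $m_1\vDash \ac{c}_{I_1}$ as a condition on the induced composite match $m_{21}:I_{21}\to X_0$, and (c) the reformulation of the intermediate match $m_2\vDash \ac{c}_{I_2}$ (originally living over $X_1=r_{1_{m_1}}(X_0)$) as a condition on $m_{21}$. Piece~(a) is already available off-the-shelf from \cite{bp2019-ext}, Thm.~2.7: it supplies, for every admissible-on-the-underlying-rule pair $(m_1,m_2)$, a unique (up to iso) admissible span $\mu_{21}\in M_{r_2}(r_1)$ together with a match $m_{21}\in M_{r_{21}}(X_0)$, and conversely. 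I would start the proof by invoking this theorem verbatim, thereby fixing the entire pushout/pullback scaffolding of Definition~\ref{def:compACrefined} (including the $\cM$-morphisms $p_1:I_1\to I_{21}$, $m_2':I_2\to N_{21}$ and the rule $\rSpanAlt{N_{21}}{}{K_1'}{}{I_{21}}$ appearing in~\eqref{eq:concThmcompAC}).

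For piece~(b), the rule composition construction directly furnishes $p_1:I_1\to I_{21}$ with $m_1=m_{21}\circ p_1$. Theorem~\ref{thm:Shift} then yields, with no further work, the equivalence
\[
m_1\vDash \ac{c}_{I_1}\iff m_{21}\vDash \Shift(p_1,\ac{c}_{I_1}),
\]
which is precisely the first conjunct of~\eqref{eq:concThmcompAC}. For piece~(c) the argument is two-stage. First, the inner $\Shift(I_2\to N_{21},\ac{c}_{I_2})$ is there exactly so that, via Theorem~\ref{thm:Shift} again, satisfaction of $\ac{c}_{I_2}$ by the morphism $m_2:I_2\to X_1$ is equivalent to satisfaction of $\Shift(m_2',\ac{c}_{I_2})$ by the induced $\cM$-morphism $N_{21}\to X_1$. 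Second, inspection of the concurrent diagram shows that this last morphism is the comatch of $m_{21}$ under the application of the linear rule $\rSpanAlt{N_{21}}{}{K_1'}{}{I_{21}}$ to $X_0$; hence the defining property of $\Trans$ (Lemma~\ref{lem:trans}) converts the preceding equivalence into
\[
m_2\vDash \ac{c}_{I_2}\iff m_{21}\vDash \Trans\!\bigl(\rSpanAlt{N_{21}}{}{K_1'}{}{I_{21}},\Shift(m_2',\ac{c}_{I_2})\bigr),
\]
which is exactly the second conjunct of~\eqref{eq:concThmcompAC}. Taking the conjunction and using the inductive definition of satisfaction for $\land$ then establishes both directions.

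The main obstacle I foresee is not logical but bookkeeping: one has to certify that the pushout complement required by the use of $\Trans$ in piece~(c) indeed exists in the relevant direction. In the synthesis direction this comes for free from the two DPO rewriting squares. In the analysis direction, starting only from the single DPO step $X_2\xLeftarrow[R_{21},m_{21}]{} X_0$, one must conversely produce the intermediate object $X_1$ and the pushout complement that witnesses admissibility of $m_2$. This is where piece~(a) is genuinely needed, since \cite{bp2019-ext}, Thm.~2.7, provides exactly the required decomposition of the composite DPO step into two sequential DPO steps; the two pullback-pullback decomposition and pushout-pushout decomposition arguments collected in Appendix~\ref{app:lem} are what make this decomposition canonical up to isomorphism. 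Once the scaffolding is in place, the equivalences above run verbatim in reverse, and admissibility modulo $\dot{\equiv}$ collapses to genuine equivalence because the compatibility lemma for $\Shift$ and $\Trans$ (Lemma~\ref{lem:ST}) guarantees that the transported shift condition is well-posed on exactly the set of admissible matches under consideration.
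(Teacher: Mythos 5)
Your proposal is correct and follows essentially the same route as the paper's own proof: both invoke the plain-rule concurrency theorem of~\cite{bp2019-ext} to fix the diagrammatic scaffolding, then translate $m_1\vDash \ac{c}_{I_1}$ into the first conjunct via the factorization $m_1=m_{21}\circ p_1$ and Theorem~\ref{thm:Shift}, and translate $m_2\vDash\ac{c}_{I_2}$ into the second conjunct via $\Shift$ along $m_2'$ followed by $\Trans$ along $\rSpanAlt{N_{21}}{}{K_1'}{}{I_{21}}$, using that the relevant back squares of the concurrency diagram are pushouts. Your extra remarks on certifying the pushout complement for the $\Trans$ step are a useful explicit spelling-out of what the paper handles implicitly through the constructed diagram.
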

\begin{proof}
Referring the interested readers to~\cite{bp2019-ext} for the precise details, note first that at the level of linear rules without application conditions, the concurrency theorem of~\cite{bp2019-ext} entails the parts of the above statements pertaining to the existence of the admissible matches of ``plain'' rules. The concrete technical construction of the proof provided in~\cite{bp2019-ext} is summarized by the diagram below, where all vertical squares are pushouts (and where we have marked the relevant conditions for later convenience). The aforementioned proof consisted in verifying that the parts of the diagram marked in {\color{h1color}blue} can be uniquely constructed from the parts of the diagram colored in {\color{h2color}orange} and vice versa:
\begin{equation}\label{eq:concurrencyDPOproof}
\vcenter{\hbox{\includegraphics[width=0.9\textwidth]{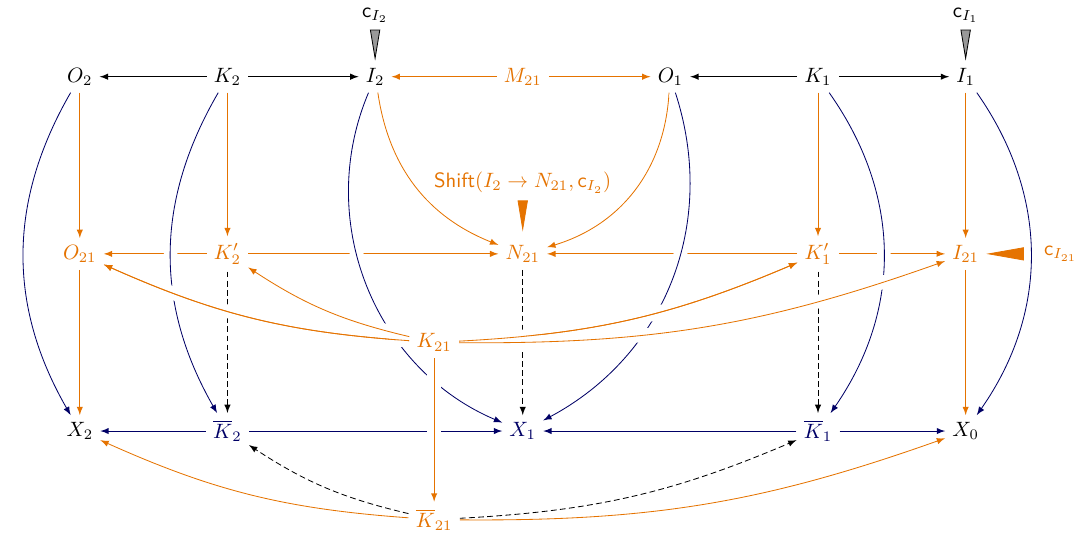}}}
\end{equation}
It thus remains to verify the part of the claim pertaining to the relevant conditions of the rules. 

\paragraph{``Analysis'' part of the proof:} Suppose that we are given  admissible matches ${\color{h1color}(m_1:I_1\rightarrow X_0)}\in M_{R_1}(X)$ and ${\color{h1color}(m_2:I_2\rightarrow X_1)}\in M_{R_2}(X_1)$ with $X_1=r_{1_{{\color{h1color}m_1}}}(X_0)$. Admissibility entails in particular that ${\color{h1color}m_1}\vDash \ac{c}_{I_1}$ and ${\color{h1color}m_2}\vDash \ac{c}_{I_2}$. By construction of the diagram in~\eqref{eq:concurrencyDPOproof}, we have that ${\color{h1color}m_1}$ and ${\color{h1color}m_2}$ factor as
\[
\begin{aligned}
    {\color{h1color}m_1}&={\color{h1color}(X_0\leftarrow I_1)}=(X_0{\color{h2color}\leftarrow I_{21}})\circ({\color{h2color}I_{21}\leftarrow }I_1)\\
    {\color{h1color}m_2}&={\color{h1color}(X_1\leftarrow I_2)}=(X_1{\color{h2color} \leftarrow N_{21}})\circ({\color{h2color} N_{21}\leftarrow} I_2)\,,
  \end{aligned}
\]
which entails by definition of the $\Shift$ construction that ${\color{h2color}m_{21}}:=({\color{h2color}I_{21}\rightarrow}X_0)$ and %
${\color{h2color}\bar{m}_{21}}=({\color{h2color}N_{21}\rightarrow} X_1)$ satisfy %
${\color{h2color}m_{21}}\vDash \Shift(I_1{\color{h2color}\rightarrow I_{21}},\ac{c}_{I_1})$ and %
${\color{h2color}\bar{m}_{21}}\vDash \Shift(I_2{\color{h2color}\rightarrow N_{21}},\ac{c}_{I_2})$, respectively. Noting that the rightmost two bottom squares in the back of~\eqref{eq:concurrencyDPOproof} are pushouts, we find in addition that
\[
    {\color{h2color}m_{21}}\vDash \Trans\left({\color{h2color}N_{21}\leftarrow K_1' \rightarrow I_{21}},
      \Shift(I_2{\color{h2color}\rightarrow N_{21}},\ac{c}_{I_1})\right)\,.
\]
Since according to Definition~\ref{def:compACrefined}, ${\color{h2color}R_{21}}:=\comp{R_2}{\mu_{21}}{R_1}={\color{h2color}(r_{21},\ac{c}_{I_{21}})}$, %
with ${\color{h2color}\ac{c}_{I_{21}}}$ as defined in~\eqref{eq:concThmcompAC}, we confirm that ${\color{h2color}m_{21}}\vDash{\color{h2color}\ac{c}_{I_{21}}}$, which concludes the proof of the \emph{``analysis''} part of the theorem.

\paragraph{``Synthesis'' part of the proof:} Supposing that we are given an admissible match ${\color{h2color}m_{21}}$ of %
the composite ${\color{h2color}R_{21}}$ of the rules with application conditions $R_2$ with $R_1$ along the admissible match ${\color{h2color}\mu_{21}}\in \Match{R_1}{R_2}$, the construction of the diagram in~\eqref{eq:concurrencyDPOproof} provides two admissible matches %
${\color{h1color}(m_1:I_1\rightarrow X_0)}\in \Match{r_1}{X_0}$ and ${\color{h1color}(m_2:I_2\rightarrow X_1)}\in \Match{r_2}{{\color{h1color}X_1}}$ with ${\color{h1color}X_1}=r_{1_{{\color{h1color}m_1}}}(X_0)$. It thus remains to verify the claim that these matches satisfy the conditions $\ac{c}_{I_1}$ and $\ac{c}_{I_2}$, respectively, which is demonstrated by running the corresponding arguments of the \emph{``analysis''} part of the proof ``in reverse''.
\end{proof}

\subsection{Associativity of DPO-type composition of rules with conditions}\label{sec:assocDPO}

We will now state one of the main results of this paper, in the form of an associativity property afforded by the DPO-type composition operation on rules with conditions. The case of DPO-type compositions of rules without conditions was studied in~\cite{bdg2016,bp2018,bp2019-ext}, and the following result is an extension to the setting of rules with conditions afforded by our refined framework for conditions as introduced in Section~\ref{sec:cond} and the current section. In contrast to the DPO-type concurrency theorem (which, in a slightly different formulation, had been previously known in the literature), the associativity result presented below is, to the best of our knowledge, the first of its kind. 

\begin{theorem}[DPO-type Associativity Theorem]\label{thm:assocAC}
Let $R_j\equiv(r_j,\ac{c}_{I_j})$ ($j=1,2,3$) be three linear rules with application conditions. Then there exists a \emph{bijection} between the pairs of admissible matches $M_A$ and $M_B$ defined as
\begin{equation}\label{eq:thmAssocRepart}
\begin{aligned}
  M_A&\eqdef\{(\mu_{21},\mu_{3(21)})\vert \mu_{21}\in \Match{R_2}{R_1}\,,\; 
  \mu_{3(21)}\in \Match{R_3}{R_{21}}\}\\
  M_B&\eqdef\{(\mu_{32},\mu_{(32)1})\vert \mu_{32}\in \Match{R_3}{R_2}\,,\; 
  \mu_{(32)1}\in \Match{R_{32}}{R_{1}}\}
\end{aligned}
\end{equation}
with $R_{i,j}\eqdef(\comp{r_i}{\mu_{ij}}{r_j},\ac{c}_{I_{ij}})$ (and $\ac{c}_{I_{ij}}$ defined as in~\eqref{eq:acIcomp}) such that 
\begin{subequations}
\begin{align}
  \forall (\mu_{21},\mu_{3(21)})\in M_A: \exists! (\mu_{32},\mu_{(32)1})&\in M_B:\nonumber\\
  \comp{r_3}{\mu_{3(21)}}{\left(
    \comp{r_2}{\mu_{21}}{r_1}
  \right)}&\cong 
  \comp{\left(
    \comp{r_3}{\mu_{32}}{r_2}
  \right)}{\mu_{(32)1}}{r_1}\label{eq:thmAAC1}\\
  \land \quad\ac{c}_{I_{3(21)}}&\,\dot{\equiv}\, \ac{c}_{I_{(32)1}}\label{eq:thmAAC2}
\end{align}
\end{subequations}
and vice versa. In this particular sense, the operation $\comp{.}{.}{.}$ is \textbf{\emph{associative}}.
\end{theorem}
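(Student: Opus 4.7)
The plan is to reduce the theorem to the already-established associativity of DPO composition for rules \emph{without} conditions, and then verify~\eqref{eq:thmAAC2} via the concurrency theorem together with the compositionality lemmas for $\Shift$ and $\Trans$. First I would invoke the plain-rule associativity of~\cite{bdg2016,bp2018} to obtain both the bijection between pairs of plain admissible matches underlying $M_A$ and $M_B$ and the rule-level isomorphism~\eqref{eq:thmAAC1}. The construction in loc.~cit.\ produces a canonical three-dimensional commutative diagram whose vertices include the three inputs $I_1,I_2,I_3$, the pairwise overlap objects $M_{21},M_{32}$, the triple overlap (a pullback), and a common composite input $I_{321}\cong I_{3(21)}\cong I_{(32)1}$, together with canonical embeddings $I_j\rightarrow I_{321}$ and $I_j\rightarrow N_{ij}$; all faces of this solid are either pushouts along $\cM$-morphisms or pullbacks. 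This is the structural bookkeeping on top of which~\eqref{eq:thmAAC2} will be verified.

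There are two complementary ways to close the gap. The conceptually cleaner route is to fix an admissible composite match $m_{3(21)}$ of $R_{3(21)}$ to some object $X$ and apply Theorem~\ref{thm:conc} twice to decompose it into a three-step sequence of plain applications of $R_1,R_2,R_3$ via $X_1,X_2$, each step of which must satisfy the corresponding $\ac{c}_{I_j}$; repeating the argument on $m_{(32)1}$ yields the same three-step sequence (up to isomorphism, via the bijection from step one), so that admissibility on either side is equivalent to the same triple sequential admissibility. The more explicit alternative is to unfold both $\ac{c}_{I_{3(21)}}$ and $\ac{c}_{I_{(32)1}}$ by two iterated applications of~\eqref{eq:concThmcompAC}, yielding in each case a conjunction of three subconditions, one per $\ac{c}_{I_j}$, and then normalize termwise using Lemma~\ref{lem:shiftComp} (compose nested $\Shift$s), Lemma~\ref{lem:transComp} (compose nested $\Trans$es up to $\dot{\equiv}$), Lemma~\ref{lem:ST} (commute a $\Shift$ past a $\Trans$ through a pushout square of $\cM$-morphisms), and the unit lemmas~\ref{lem:shiftUnit}--\ref{lem:transUnit}. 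Because every auxiliary span and morphism appearing in either unfolding comes from the same 3D diagram, both sides collapse to an identical canonical triple $\Shift(I_1\rightarrow I_{321},\ac{c}_{I_1})\wedge \Trans(\sigma_2,\Shift(I_2\rightarrow N_2,\ac{c}_{I_2}))\wedge \Trans(\sigma_3,\Shift(I_3\rightarrow N_3,\ac{c}_{I_3}))$.

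The hard part will be the diagrammatic bookkeeping required to match the specific pushouts and pullbacks invoked by Lemmas~\ref{lem:ST} and~\ref{lem:transComp} on the two sides: one must ensure that the span $\sigma_2$ synthesized along path $A$ (first composing $R_2$ with $R_1$) and the analogous span synthesized along path $B$ (first composing $R_3$ with $R_2$) agree up to isomorphism, and similarly for $\sigma_3$. This agreement follows from uniqueness up to isomorphism of pushouts and pullbacks together with repeated use of pushout--pullback decomposition from Lemma~\ref{lem:Main} of Appendix~\ref{app:lem}, applied face by face to the 3D diagram exactly as in the proof of Theorem~\ref{thm:conc}. A secondary subtlety is that Lemma~\ref{lem:transComp} provides only $\dot{\equiv}$-equivalence, which forces the statement to use $\dot{\equiv}$ rather than $\equiv$ and requires checking that the bijection of plain matches from step one sends admissible composite matches to admissible composite matches; this is immediate from applying Theorem~\ref{thm:conc} twice on each side.
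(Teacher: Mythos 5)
Your ``more explicit alternative'' is precisely the route the paper takes: after importing the plain-rule bijection and the three-dimensional commutative diagram from the associativity theorem for rules without conditions, the paper unfolds $\ac{c}_{I_{3(21)}}$ and $\ac{c}_{I_{(32)1}}$ into conjunctions of three subconditions (one per $\ac{c}_{I_j}$) and normalizes them termwise to a common canonical form, exactly as you describe. Two small corrections to your bookkeeping: the paper's normalization uses only Lemma~\ref{lem:shiftComp} and Lemma~\ref{lem:ST} (neither Lemma~\ref{lem:transComp} nor the unit lemmas are needed), and the weakening to $\dot{\equiv}$ enters through Lemma~\ref{lem:ST}, not through Lemma~\ref{lem:transComp}. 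Your ``conceptually cleaner'' route via a double application of Theorem~\ref{thm:conc} is not what the paper does and would need the concurrency theorem in a slightly stronger form than stated (one must characterize when a \emph{plain}-admissible match of the triple composite satisfies the composite condition, whereas the theorem's analysis direction presupposes an already condition-admissible match), but as a heuristic it is sound and the explicit route you also give closes the argument.
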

\begin{proof}
Considering first the case of compositions of ``plain'' rules $r_1,r_2,r_3\in \Lin{\bfC}$, i.e.\ of rules without application conditions, we quote from~\cite{bp2019-ext} (Theorem~2.9) an isomorphism of pairs of admissible matches of the form
\begin{equation}
\begin{aligned}
  M_A'&\eqdef\{(\mu_{21},\mu_{3(21)})\vert \mu_{21}\in \Match{r_2}{r_1}\,,\; 
  \mu_{3(21)}\in \Match{r_3}{r_{21}}\}\\
  \cong \quad 
  M_B'&\eqdef\{(\mu_{32},\mu_{(32)1})\vert \mu_{32}\in \Match{r_3}{r_2}\,,\; 
  \mu_{(32)1}\in \Match{r_{32}}{r_1}\}\,.
\end{aligned}
\end{equation}
The isomorphism entails that for each corresponding pair, equation~\eqref{eq:thmAAC1} is verified. Consider then two such isomorphic pairs 
\[
(\mu_{21},\mu_{3(21)})\in M_A'\quad \text{and} \quad (\mu_{32},\mu_{(32)1})\in M_B'
\]
of admissible matches of ``plain'' rules. The isomorphism entails in particular that the following commutative diagram is uniquely constructible (where we also draw the positions of the various application conditions at play for later convenience):
\begin{equation}\label{eq:AsProofDiag}
  \vcenter{\hbox{\includegraphics[width=0.9\textwidth]{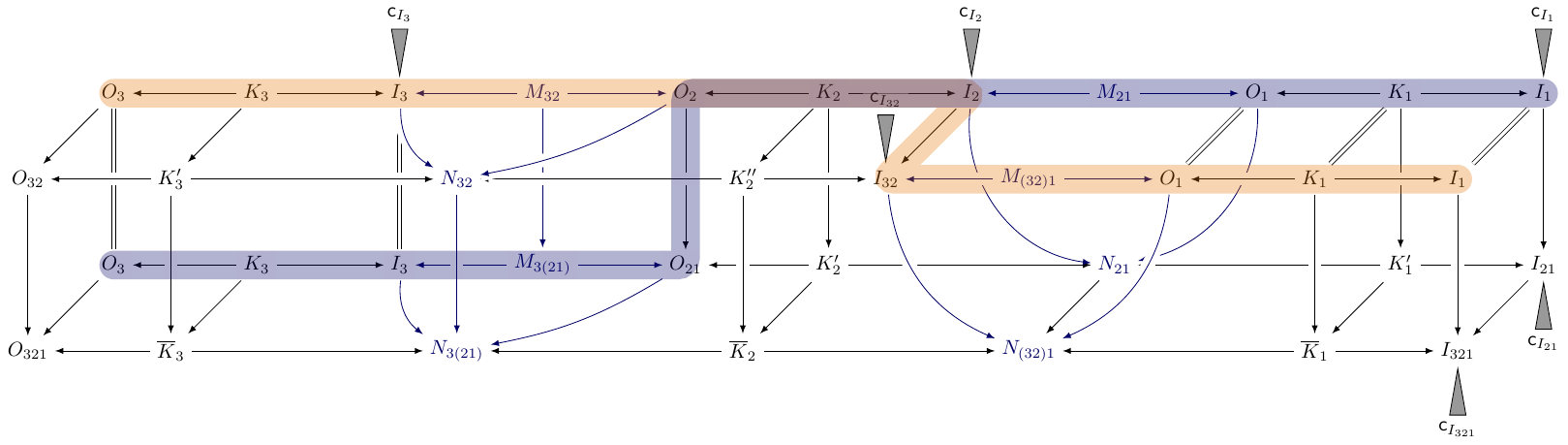}}}
\end{equation}
In order to verify the validity of~\eqref{eq:thmAAC2}, it is sufficient to utilize our various technical lemmas pertaining to the $\Shift$ and $\Trans$ constructions, and to follow the ``paths'' in the diagram depicted in~\eqref{eq:AsProofDiag} along which the three conditions $\ac{c}_{I_j}$ for $j=1,2,3$ have to be shifted and transported in order to form the conditions $\ac{c}_{I_{3(21)}}$ and $\ac{c}_{I_{(32)1}}$, respectively.
\begin{romanenumerate}
\item contribution of $\ac{c}_{I_1}$:
\begin{equation}\label{eq:AssocProofEqI1}
 \Shift\left(I_{321}\leftarrow I_{21},\Shift\left(I_{21}\leftarrow I_1,\ac{c}_{I_1}\right)\right)
\overset{Lem.~\ref{lem:shiftComp}}{\equiv}
\Shift\left(I_{321}\leftarrow I_1,\ac{c}_{I_1}\right)
 \end{equation}
\item contribution of $\ac{c}_{I_2}$:
\begin{equation}\label{eq:AssocProofEqI2}
\begin{aligned}
  &\Shift\left(I_{321}\leftarrow I_{21},
    \Trans\left({\color{h1color}N_{21}\leftarrow} K_1'\rightarrow I_{21},
    \Shift\left({\color{h1color}N_{21}\leftarrow} I_2,\ac{c}_{I_2}\right)\right)\right)\\
  &\overset{Lem.~\ref{lem:ST}}{\dot{\equiv}}
  \Trans\left({\color{h1color}N_{(32)1}\leftarrow} \overline{K}_1\rightarrow I_{321},
    \Shift\left({\color{h1color}N_{(32)1}\leftarrow N_{21}},
    \Shift\left({\color{h1color}N_{21}\leftarrow} I_2,\ac{c}_{I_2}\right)\right)\right)\\
  &\overset{Lem.~\ref{lem:shiftComp}}{\equiv}
  \Trans\left({\color{h1color}N_{(32)1}\leftarrow} \overline{K}_1\rightarrow I_{321},
    \Shift\left({\color{h1color}N_{(32)1}\leftarrow} I_{32},
    \Shift\left(I_{32}\leftarrow I_2,\ac{c}_{I_2}\right)\right)\right)\,.
\end{aligned}
\end{equation}
Here, in the last step, we have made use of the commutativity of the diagram~\eqref{eq:AsProofDiag}, whereby
\begin{equation}
\left({\color{h1color}N_{(32)1}\leftarrow N_{21}}\right)\circ
    \left({\color{h1color}N_{21}\leftarrow} I_2\right)
=
\left({\color{h1color}N_{(32)1}\leftarrow} I_{32}\right)\circ
    \left(I_{32}\leftarrow I_2\right)\,.
\end{equation}
\item contribution of $\ac{c}_{I_3}$:
\begin{equation}\label{eq:AssocProofEqI3}
\begin{aligned}
  &\Trans\bigg({\color{h1color}N_{(32)1}}\leftarrow\overline{K}_1\rightarrow I_{321},
    \Trans\big({\color{h1color}N_{3(21)}}\leftarrow\overline{K}_2\rightarrow{\color{h1color}N_{(32)1}},\\
    &\qquad \qquad \qquad \qquad
    \Shift\big({\color{h1color} N_{3(21)}}\leftarrow I_3,\ac{c}_{I_3}\big)\big)\bigg)\\
    &\overset{Lem.~\ref{lem:shiftComp}}{\equiv}
    \Trans\bigg({\color{h1color}N_{(32)1}}\leftarrow\overline{K}_1\rightarrow I_{321},
    \Trans\big({\color{h1color}N_{3(21)}}\leftarrow\overline{K}_2\rightarrow
    {\color{h1color}N_{(32)1}},\\
    &\qquad \qquad \qquad \qquad
    \Shift\big({\color{h1color}N_{3(21)}\leftarrow N_{32}},
    \Shift\big({\color{h1color}N_{32}\leftarrow }I_3,
    \ac{c}_{I_3}\big)\big)\big)\bigg)\\
     &\overset{Lem.~\ref{lem:ST}}{\dot{\equiv}}
    \Trans\bigg({\color{h1color}N_{(32)1}}\leftarrow\overline{K}_1\rightarrow I_{321},
    \Shift\big({\color{h1color}N_{(32)1}}\leftarrow I_{32},\\
    &\qquad \qquad \qquad \qquad
    \Trans\big({\color{h1color}N_{32}}\leftarrow K_2''\rightarrow I_{32},
    \Shift\big({\color{h1color}N_{32}\leftarrow } I_3,\ac{c}_{I_3}\big)\big)\big)\bigg)\\
\end{aligned}
\end{equation}
\end{romanenumerate}
It is then easy to verify (using the definition of concurrent compositions of rules with application conditions with rules according to Definition~\ref{def:compACrefined}  and eq.~\eqref{eq:acIcomp}) that since
\begin{equation}
  \ac{c}_{I_{3(21)}}= lhs\eqref{eq:AssocProofEqI1} \bigwedge lhs\eqref{eq:AssocProofEqI2} 
   \bigwedge lhs\eqref{eq:AssocProofEqI3}\,,\qquad
   \ac{c}_{I_{(32)1}}= rhs\eqref{eq:AssocProofEqI1} \bigwedge rhs\eqref{eq:AssocProofEqI2} 
   \bigwedge rhs\eqref{eq:AssocProofEqI3} \,,
\end{equation}
where by ``rhs'' we mean the very last equality in each set of equations, we find indeed that
\begin{equation}
 \ac{c}_{I_{3(21)}}\dot{\equiv}\ac{c}_{I_{(32)1}}\,.
\end{equation}
\end{proof}

\begin{example}\label{ex:assocDPO} In order to illustrate the notion of associativity more intuitively, we provide a concrete example of a triple rule composition as depicted in Figure~\ref{fig:exAssoc}, with the application conditions of the three rules given as
\[
	\ac{c}_{I_1}:=\neg\exists\left(
	\inputtikz{TVGB}
	 \hookrightarrow
	\inputtikz{TVEGB}, \ac{true}\right)\,, \; \ac{c}_{I_2}:= \ac{true}\,,\; \ac{c}_{I_3}:=\ac{c}_{I_1}\,.
\]
Note that the application conditions as specified guarantee that  applying the rules to \emph{simple undirected graphs} preserves the constraint of no two vertices being linked by more than one edge. Rather than depicting the relevant data for this computation in the form of a ``commutative tube'' as in~\eqref{eq:AsProofDiag}, we opt here to ``unfold'' the two pair-wise sequential rule composition steps for each branch of the equivalence
\[
	r_{3(21)}\cong r_{(32)1} \quad \land \quad \ac{c}_{I_{3(21)}}\dot{\equiv}\,\ac{c}_{I_{(32)1}}\,,
\]
with the two sides of the equivalences depicted on the top and bottom parts of Figure~\ref{fig:exAssoc}, respectively. The example allows us to highlight a number of interesting phenomena:
\begin{itemize}
\item Upon closer inspection, one may verify that the application condition $\ac{c}_{I_1}$ of the rule $R_1$ is not specified in ``compressed standard form'' in the sense of Definition~\ref{def:csfc}, since in fact $\ac{c}_{I_1}\dot{\equiv}\, \ac{true}$. Intuitively, under DPO-semantics the application condition $\ac{c}_{I_1}$ does not constrain the admissible matches, since in cases where the two vertices in $I_1$ would be matched against two vertices in a host graph linked by an edge, the rule $R_1$ would not be applicable (since edges cannot be deleted implicitly in DPO-rewriting).
\item Utilizing the $\Shift$ and $\Trans$ constructions, followed by applying the ``compression'' according to Definition~\ref{def:csfc}, the application condition of the composite rule can be computed as follows:
\[
	{\color{h1color}\ac{c}_{I_{21}}} \dot{\equiv}\, \ac{true}\,,\; 
	{\color{h1color}\ac{c}_{I_{32}}} \dot{\equiv}\, \neg \exists\left(
	\inputtikz{TVEGB}
	\hookrightarrow 
	\inputtikz{TVGBdouble}, \ac{true}\right)\,,\;
	{\color{h2color}\ac{c}_{I_{3(21)}}}\dot{\equiv}\, {\color{h2color}\ac{c}_{I_{(32)1}}} \dot{\equiv}\,\ac{true}\,.
\]
This is a typical illustration of the way associativity in sequential compositions gives rise to \emph{causal relationships}: while the rule $R_3$ can only be applied to vertices not linked by an edge, the fact that in the triple composition depicted the rule $R_1$ is in a sense ``providing'' the two vertices on its output that are later acted upon by $R_3$, and since by the application of $R_2$ as shown the edge between the two vertices on the output of $R_1$ is removed, it is indeed the case that we find a graph pattern that $R_3$ can be applied to. 
\item Finally, associativity also entails that one may equivalently lead the preceding causality argument from the viewpoint depicted in the bottom part of Figure~\ref{fig:exAssoc}: composing rule $R_3$ with rule $R_2$ as demonstrated, the composite rule has an application condition  ${\color{h1color}\ac{c}_{I_{32}}}$ that demands that the two vertices on the input of the composite rule are not linked to each other by two edges. But then if the input pattern of the composite rule (two vertices linked by an edge) is ``provided'' by the output of $R_1$ as depicted (i.e.\ via computing the composition ${\color{h1color}\comp{r_3}{\mu_{32}}{r_2}}$ and the composite rule's application condition ${\color{h1color}\ac{c}_{I_{32}}}$), and since $R_1$ can only be applied at two vertices not linked to each other, one finds again that the overall triple composite possesses an application condition that is equivalent to a trivial condition $\ac{true}$.
\end{itemize}
\end{example}

We refer the interested readers to~\cite{behr2019tracelets} for further examples of such types of sequential rule compositions and associativity properties, which play a quintessential role in the theory of \emph{tracelets}, where the latter are a form of invariant encoding of length $n$ sequences of rule compositions for general $n\geq1$.  Associativity is moreover at the core of the construction of \emph{rule algebra theories}~\cite{bdg2016,bp2018,bp2019-ext,nbSqPO2019}, and with the case of rule algebras for linear rules with conditions recently covered in~\cite{BK2020}, yet a full discussion of this novel mathematical concept is out of the scope of the present paper.

\begin{figure}[htp]
\centering
\includegraphics[width=0.9\textwidth]{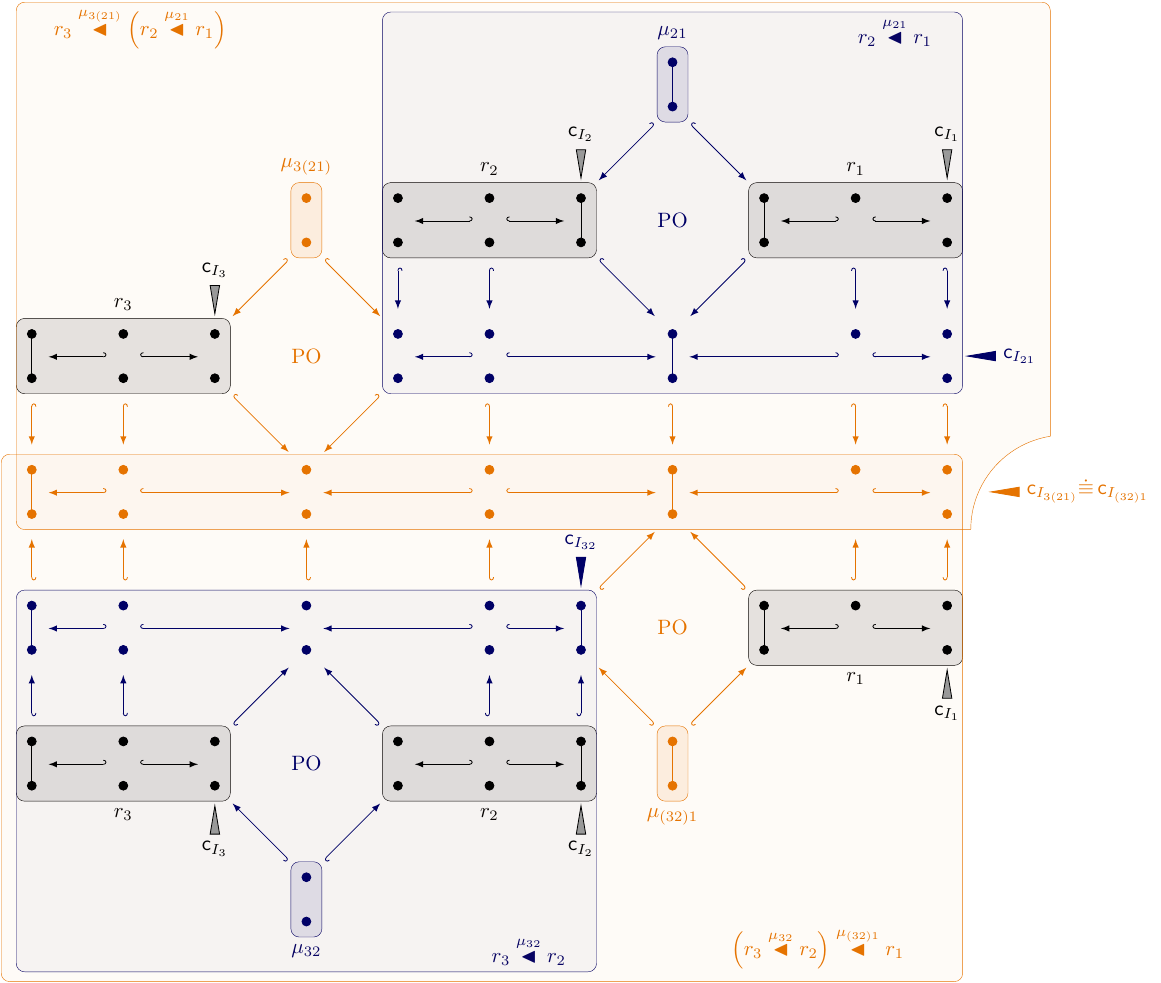}
\caption{\label{fig:exAssoc}Illustration of the notion of associativity in a sequential composition of three rules with conditions. The squares explicitly marked with $\text{PO}$ are pushouts of the admissible matches of rules; all other squares are obtained according to the semantics of DPO rule compositions (and are thus also pushout squares). The top portion of the figure illustrates the order of composition ${\color{h1color}\comp{r_2}{\mu_{21}}{r_1}}$ followed by a composition with $r_3$ along the admissible match ${\color{h2color}\mu_{3(21)}}$. The bottom portion depicts the composition ${\color{h1color}\comp{r_3}{\mu_{32}}{r_2}}$ (with ${\color{h1color}\mu_{32}}$ computed via taking pullback as described in detail in Theorem~\ref{thm:assocAC}), precomposed with $r_1$ along the induced match ${\color{h2color}\mu_{(32)1}}$. Conversely, one could start from providing explicitly the pair of admissible matches $({\color{h1color}\mu_{32}},{\color{h2color}\mu_{(32)1}})$ and compute from this data the pair of admissible matches $({\color{h1color}\mu_{21}},{\color{h2color}\mu_{3(21)}})$. Here, the composite rules' application conditions are indicated at the input interfaces of the composite rules. The second part of the statement of associativity of rule compositions entails that in both orders of pair-wise sequential compositions along the matches as specified, the resulting ``triple composites'' are isomorphic on their plain rule parts and have equivalent application conditions (up to satisfiability of admissible matches, i.e.\ in the sense of $\dot{\equiv}$).}
\end{figure}

\section{Compositional associative Sesqui-Pushout rewriting with conditions}\label{sec:SqPO}

While the previously discussed notion of compositional DPO-type rewriting may be seen as a refinement of pre-existing notions of DPO-rewriting from the literature (apart from the associativity theorem), the corresponding construction for a framework of SqPO-rewriting is almost entirely new. The first framework for a compositional SqPO-rewriting framework for rules without conditions was introduced in~\cite{nbSqPO2019}. The essential technical step in order to extend our framework from DPO- to SqPO-type semantics consists in analyzing the interplay of final pullback complements with application conditions and transformations thereof. We will first provide a brief introduction to this type of rewriting, and then develop our new framework.

\subsection{Definition of SqPO rule applications and rule compositions}\label{sec:SqPOr}

\begin{definition}[compare \cite{Corradini_2006}, Def.~4]
\label{def:SqPOr}
Given an object $X\in \obj{\bfC}$ and a linear rule $r\in \Lin{\bfC}$, we denote the \emph{set of SqPO-admissible matches} $\sqMatch{r}{X}$ as 
\begin{equation}
  \sqMatch{r}{X}\eqdef\{(m:I\rightarrow X)\in\cM\}\,.
\end{equation}
Let $m\in \sqMatch{r}{X}$. Then the diagram below is constructed by taking the final pullback complement marked $\mathsf{FPC}$ followed by taking the pushout marked $\mathsf{PO}$:
\begin{equation}\label{eq:DPOr}
\inputtikz{DPOr}
\end{equation}
We write $r_m(X)\eqdef X'$ for the object ``produced'' by the above diagram. The process is called \emph{(SqPO)-derivation} of $X$ along rule $r$ and admissible match $m$, and denoted $r_m(X)\xLeftarrow[r,m]{{\tiny SqPO}} X$.
\end{definition}

Notably, SqPO-type rewriting thus differs from DPO-type rewriting in the important aspect that final pullback complements as well as pushouts are guaranteed to exist in our base category (cf.\ Assumption~\ref{as:SqPO}), whereas pushout complements may fail to exist in general. A typical example already mentioned in the introduction concerns the application of a vertex-deletion rule to a graph that consists of two vertices linked by an edge:
\begin{equation*}
  \vcenter{\hbox{\includegraphics{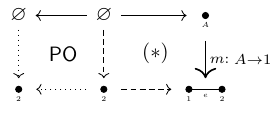}}}
\end{equation*}
In DPO-type rewriting, the deletion rule is not applicable along the match presented, since the square $(*)$ is not constructible as a pushout complement. In SqPO-type rewriting however, since the square $(*)$ is constructible as a final pullback complement as presented, the deletion rule is applicable, resulting in a graph with just a single vertex. This example demonstrates the distinguishing feature of SqPO-type rewriting over DPO-rewriting, in that the former admits ``deletion in unknown context'' (here the implicit deletion of the edge via application of the vertex deletion rule).

It should be noted that one of the additional distinctive features of SqPO-rewriting (see~\cite{Corradini_2006}) consists in the possibility to faithfully model \emph{cloning} of structures via considering \emph{non-linear rules} (i.e.\ rules based upon non-monic spans). While an interesting topic in its own right, we found that our proofs for key properties such as the concurrency and associativity theorems in the forms presented here would not easily carry over to this more general setting, which is why we did not consider the case of non-linear rules in any further detail.

\subsection{The transport construction in the SqPO-type setting}\label{sec:SqPOtransp}

The following theorem demonstrates that the construction $\Trans$ as introduced in Definition~\ref{def:Trans} is precisely the construction needed in order to implement ``transporting'' conditions over linear rules also in the SqPO-rewriting setting. This quintessential result appears to be new.

\begin{theorem}[Transport construction in SqPO rewriting]\label{thm:SqPOtrans}
Let $\bfC$ be an $\cM$-adhesive category satisfying Assumption~\ref{as:SqPO}. Then the \emph{transport construction} $\Trans$ satisfies the following property: for every linear rule $r\equiv(O\xleftarrow{o} K\xrightarrow{i} I)\in \Lin{\bfC}$, for every SqPO-admissible match $(m:I\rightarrow X)\in \sqMatch{r}{X}$ of $r$ into an object $X\in \obj{C}$ and for every application condition $\ac{c}_O$ over $O$, in the commutative diagram below
\begin{equation}
\inputtikz{SqPOtransThm}\,,
\end{equation}
it holds that
\begin{equation}
m^{*}\vDash \ac{c}_O\quad \Leftrightarrow \quad m\vDash \Trans(r,\ac{c}_O)\,.
\end{equation}
\end{theorem}
\begin{proof}
 Due to the recursive nature of the definition of $\Trans$ (Definition~\ref{def:Trans}), it suffices to verify the claim on conditions of the form %
 $\ac{c}_O=\exists({\color{h2color}b^{*}:O\hookrightarrow B},{\color{h2color}\ac{c}_B})$ (for ${\color{h2color}b^{*}}\in \cM$).

\paragraph{``$\Rightarrow$'' direction:} Suppose that we are given an SqPO-type rewriting step and a condition of the form %
$\ac{c}_O=\exists({\color{h2color}b^{*}:O\hookrightarrow B},{\color{h2color}\ac{c}_B})$, which by definition of satisfiability of conditions according to Definition~\ref{def:ac} furnishes a morphism ${\color{h2color}(n^{*}:B\hookrightarrow r_m(X))}\in\cM$ such that %
${\color{h2color}n^{*}}\circ {\color{h2color}b^{*}}={\color{h1color}m^{*}}$:
\begin{equation}
\vcenter{\hbox{\includegraphics[scale=1.0,page=1]{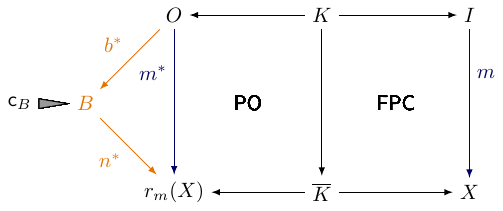}}}
\end{equation}
We then construct the commutative diagram below,
\begin{equation}\label{eq:TransRproofAuxA}
\vcenter{\hbox{\includegraphics[scale=1.0,page=2]{images/transSqPOproof.pdf}}}\,,
\end{equation}
with the following individual steps taken:
\begin{romanenumerate}
\item The square ${\color{h2color}(1)}$ is formed by taking pullback, %
which by the universal property of pullbacks furnishes a morphism $K{\color{h2color}\rightarrow K'}$. %
By stability of $\cM$-morphisms under pullback, the morphisms of the induced span are in $\cM$. %
Since thus in particular ${\color{h2color}K'\rightarrow}\overline{K}$ is in $\cM$, and %
since by stability of $\cM$-morphisms under pullback (and thus under FPC) also $(K\rightarrow \overline{K})\in\cM$, %
we conclude by $\cM$-morphism decomposition that $K{\color{h2color}\rightarrow K'}$ is in $\cM$. %
Moreover, by virtue of pushout-pullback decomposition (Lemma~\ref{lem:Main}\ref{lem:POPBdec}), %
the squares ${\color{h2color}(1)}$ and ${\color{h2color}(2)}$ are both pushouts.
\item The square ${\color{h2color}(3)}$ is formed by taking pushout. %
The universal property of pushouts provides a morphism ${\color{h2color}n:C\rightarrow} X$ such that ${\color{h2color}n}\circ {\color{h2color}b}={\color{h1color}m}$. %
Then invoking vertical FPC-pushout decomposition (Lemma~\ref{lem:Main}\ref{lem:vertFPCpoDec}), %
the resulting square $(4)$ is an FPC and ${\color{h2color}n}\in \cM$.
\end{romanenumerate}

If ${\color{h2color}\ac{c}_B}\equiv {\color{h2color}\ac{true}}$, we have thus demonstrated by virtue of the definition of satisfiability of conditions %
that the $\cM$-morphism ${\color{h2color}n}$ satisfies the condition $\exists({\color{h2color}b:I\hookrightarrow C},{\color{h2color}\Trans(r',\ac{true})})$, since by definition of $\Trans$ we have that ${\color{h2color}\Trans(r',\ac{true})}\equiv{\color{h2color}\ac{true}}$. %
If ${\color{h2color}\ac{c}_B}$ is itself nested, we proceed by induction in the evident fashion, i.e.\ repeat the preceding argument for ${\color{h2color}\ac{c}_B}$ and ${\color{h2color}r'}:={\color{h2color}(B\leftarrow K'\rightarrow C)}$, thus the claim of the ``$\Rightarrow$'' part of the theorem statement follows.

\paragraph{``$\Leftarrow$'' direction} Suppose that we are given the following part of data presented in diagram~\eqref{eq:TransRproofAuxA}:
\begin{romanenumerate}
\item The squares ${\color{h2color}(1)+(2)}$ (a pushout), ${\color{h2color}(2)}$ (a pushout), %
${\color{h2color}(3)+(4)}$ (an FPC) and ${\color{h2color}(3)}$ (a pushout).
\item An arrow ${\color{h2color}n}\in \cM$ that satisfies the condition $\Trans({\color{h2color}r'},{\color{h2color}\ac{c}_B})$, and with ${\color{h2color}n}\circ{\color{h2color}b}={\color{h1color}m}$.
\end{romanenumerate} 
What is somewhat hidden in this set of data is the fact that the square $\cSquare{K,{\color{h2color}K'},X,I}$ is a pullback %
(where the $\cM$-morphism ${\color{h2color}K'}\rightarrow X$ is provided as the composition of the $\cM$-morphisms %
${\color{h2color}K'\rightarrow C}$ and ${\color{h2color}n:C\rightarrow} X$, which are by assumption part of the above data). %
This statement can be verified by constructing the commutative cube below left:  
\begin{equation}\label{eq:SqPOtransProofB}
\vcenter{\hbox{\includegraphics[scale=1]{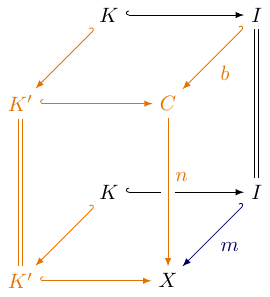}}}
\qquad\qquad 
\inputtikz{SqPOtransRextrastepB}
\end{equation}
By virtue of Lemma~\ref{lem:Main}\ref{lem:idPB}, the right square is a pullback. Since the top square is a pushout and thus also a pullback, by pullback composition the square $\cSquare{K,{\color{h2color}K'},X,I}$ (i.e.\ the composite of the top and right squares) is indeed a pullback.

We then invoke the universal property of FPCs (compare Definition~\ref{def:FPC}) in the form %
presented in the right part of~\eqref{eq:SqPOtransProofB}, which entails that %
there exists a morphism ${\color{h2color}K'\rightarrow} \overline{K}$. %
Then by $\cM$-morphism decomposition, as ${\color{h2color}K'\rightarrow} X$ and %
$\overline{K}\rightarrow X$ are in $\cM$, so is the morphism ${\color{h2color}K'\rightarrow }\overline{K}$. %
Consequently, vertical FPC-pushout decomposition (Lemma~\ref{lem:Main}\ref{lem:vertFPCpoDec}) implies that %
the square ${\color{h2color}(4)}$ in~\eqref{eq:TransRproofAuxA} is an FPC, while %
due to pushout-pushout decomposition (Lemma~\ref{lem:Main}\ref{lem:POPOdec}) %
the square ${\color{h2color}(1)}$ is verified to be a pushout. %
The latter result entails in particular, by virtue of stability of $\cM$-morphisms under pushout, that  the induced morphism ${\color{h2color}n^{*}}$ is in $\cM$.

In summary, if ${\color{h2color}\ac{c}_B}\equiv \ac{true}$, we have verified that ${\color{h2color}n^{*}}\vDash {\color{h2color}\ac{c}_B}$, and thus that %
${\color{h1color}m^{*}}\vDash \exists({\color{h2color}b^{*}:O\rightarrow B},{\color{h2color}\ac{c}_B})$. %
If ${\color{h2color}\ac{c}_B}$ is itself nested, we can prove the statement inductively in the evident fashion.
\end{proof}

The detailed structure of the above proof allows us to clarify that, while the $\Trans$ construction from the DPO rewriting setting carries over to the SqPO setting seemingly verbatim, the precise reasons for why it indeed furnishes an operation of transport of conditions in the desired sense in the two settings are dependent on the semantics (i.e.\ in particular why satisfaction of conditions is ``transported'' against the direction of SqPO linear rules as detailed above). Moreover, since there does not exist a construction that would permit to transport conditions ``with'' the direction of the linear rules (i.e.\ in the direction of rule applications) in SqPO rewriting\footnote{Consider for example the case of a rule $R=(r,\ac{c}_I)$ with ``plain'' rule $r=(\bullet_1\leftarrow \bullet_1\rightarrow \bullet_1\;\bullet_2)$ and application condition $\ac{c}_I =\nexists(\bullet_1\;\bullet_2 \hookrightarrow \bullet_1\!\!-\!\!\bullet_2)$. Applying the ``plain'' rule to a graph with two vertices linked by an arbitrary number of edges would result in a single-vertex graph under SqPO-semantics regardless of the number of edges (which are all implicitly deleted), yet only the case without edges permits admissible matches of $R$, which provides a counter-example to the hypothesis that one could formulate a post-condition ``transport-equivalent'' to $\ac{c}_I$.}, the only degree of freedom in describing linear rules with application conditions in SqPO semantics is to transport any conditions a rule might carry on its output to its input, motivating the following definition:

\begin{definition}[Standard form for SqPO-type linear rules with application conditions and for admissble matches]\label{def:SFsqpo}. Let $\bfC$ be an $\cM$-adhesive category satisfying Assumption~\ref{as:SqPO}. Let $\LinAc{\bfC}$ denote the \emph{set of linear rules with application conditions in standard form} as introduced in Definition~\ref{def:SF}, whence elements of $R\in\LinAc{\bfC}$ are of the form
  \begin{equation}
    R= (r,\ac{c}_I)\,,\quad r=\rSpan{O}{o}{K}{i}{I}\in\Lin{\bfC}\,.
  \end{equation}
We then introduce the notion of \emph{SqPO-admissible matches} for applications of rules with application conditions to objects under SqPO-type semantics as follows: let $X\in \obj{\bfC}$ be an object, $R\in \LinAc{\bfC}$ as above a rule with application conditions, and $(m:I\hookrightarrow X)$ an element of $\cM$. Since according to Assumption~\ref{as:SqPO} FPCs of arbitrary pairs of composable $\cM$-morphisms exist, the diagram below is always constructible,
\begin{equation}\label{eq:RappSqPO}
\inputtikz{RappSqPO}\,,
\end{equation}
the \emph{SqPO-admissibility} of $m$ hinges solely on whether the match satisfies the condition $\ac{c}_{I}$ of the rule $R$. We thus define the \emph{set of SqPO-admissible matches} for the application of the rule $R$ to the object $X$ as
\begin{equation}\label{eq:defAdmRacSqPO}
  \sqMatch{R}{X}\eqdef\{(m:I\rightarrow X)\in \cM\mid m\vDash \ac{c}_I\}\,.
\end{equation} 
\end{definition}

A beneficial side-effect of the former observation is the following result, most statements of which carry over from the DPO-rewriting setting for the $\Trans$ construction:
\begin{theorem}[Properties of $\Trans$ in SqPO-rewriting]\label{thm:TrnasSqPOprops}
Given an $\cM$-adhesive category $\bfC$ satisfying Assumption~\ref{as:SqPO}, the transport construction $\Trans$ has the following properties:
\begin{romanenumerate}
\item \emph{Units for $\Trans$:} Let $X\in \obj{\bfC}$ be an arbitrary object and $\ac{c}_X$ a condition over $X$. Then with $r_{id_X}\equiv\rSpan{X}{id_X}{X}{id_X}{X}\in \Lin{\bfC}$ the ``identity rule on $X$'', we find that
    \begin{equation}
        \Trans(r_{id_X},\ac{c}_X)\equiv \ac{c}_X\,.
    \end{equation} 
\item \emph{Compositionality of $\Trans$:} Given two composable spans of $\cM$-morphisms
\[
  r\equiv \rSpan{C}{b}{B}{a}{A}\; \quad \text{and }\quad
  s\equiv\rSpan{E}{d}{D}{c}{C}\,,
\]
we find that
  \begin{equation}
    \Trans(r,\Trans(s,\ac{c}_E))\equiv \Trans(s\circ r,\ac{c}_E)\,.
  \end{equation}
\item \emph{Compatibility of $\Shift$ and $\Trans$:}
\begin{equation}\label{eq:transR}
\inputtikz{transR}\,,
\end{equation}
letting $r=\rSpan{O}{o}{K}{i}{I}$ and $r'=\rSpan{O'}{o'}{K'}{i'}{I'}$, we have that for all objects $X$ and for all admissible matches $n\in \sqMatch{r'}{X}$ of $r'$ into $X$,
\begin{equation}
n\vDash \Shift(p,\Trans(r,\ac{c}_{O}))
  \Leftrightarrow n\vDash \Trans(r',\Shift(p^{*},\ac{c}_{O}))\,,
\end{equation}
which we write more compactly as\footnote{Note that unlike in the DPO rewriting case, since by Assumption~\ref{as:SqPO} FPCs of arbitrary composable pairs of $\cM$-morphisms exist, it is indeed the case that \emph{any} $\cM$-morphism $(n:I'\rightarrow X)\in \cM$ is an admissible match for the linear rule without application conditions $r'$; consequently, the equivalence takes precisely the form of standard equivalence of application conditions without a constraint of admissibility.}
\begin{equation}
\Shift(p,\Trans(r,\ac{c}_{O}))\,\equiv\,\Trans(r',\Shift(p^{*},\ac{c}_{O}))\,.
\end{equation}
\end{romanenumerate}
\end{theorem}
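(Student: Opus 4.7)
The plan is to establish all three properties by mirroring the corresponding DPO-setting proofs (Lemma~\ref{lem:transUnit}, Lemma~\ref{lem:transComp}, and Lemma~\ref{lem:ST}), with two systematic substitutions: pushout complement squares are replaced by final pullback complement squares, and Lemma~\ref{lem:trans} (the DPO satisfaction characterization of $\Trans$) is replaced by its SqPO counterpart, Theorem~\ref{thm:SqPOtrans}. A crucial simplification is that, under Assumption~\ref{as:SqPO}, FPCs of arbitrary composable pairs of $\cM$-morphisms exist, so every $\cM$-morphism $(m:I\rightarrow X)\in\cM$ lies in $\sqMatch{r}{X}$ for any linear rule $r$ with input $I$. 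This removes the admissibility-restricted equivalence $\dot{\equiv}$ appearing in the DPO statements and yields genuine application-condition equivalences $\equiv$ throughout.

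For part (i), I would specialize Theorem~\ref{thm:SqPOtrans} to $r_{id_X}$. By Lemma~\ref{lem:FPCfacts}, the pair $(id_X,m)$ is the FPC of $(m,id_X)$, while stability of isomorphisms under pushout forces $m^{*}=m$, so the SqPO diagram degenerates into identities. The satisfaction equivalence of Theorem~\ref{thm:SqPOtrans} then reads $m\vDash \ac{c}_X \iff m\vDash \Trans(r_{id_X},\ac{c}_X)$ for every $\cM$-morphism $m$, which establishes the claimed equivalence.

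For part (ii), I would reproduce the bidirectional construction from the proof of Lemma~\ref{lem:transComp} with pushout complements replaced by FPCs throughout. Given composable spans $r$ and $s$, their composite $s\circ r$ is defined via the pullback $F$ of $(c,b)$. For the ``$\Rightarrow$'' direction, I would start from two sequential SqPO steps and invoke the $\cM$-van Kampen property together with pullback-pullback decomposition (Lemma~\ref{lem:Main}\ref{lem:PBPBdec}) to construct the single composite step. For the ``$\Leftarrow$'' direction, starting from a single SqPO step using $s\circ r$, I would apply pushout-pushout decomposition (Lemma~\ref{lem:Main}\ref{lem:POPOdec}) to split the back pushout pattern into sequential pushouts corresponding to $r$ and $s$, and the vertical FPC-pushout decomposition (Lemma~\ref{lem:Main}\ref{lem:vertFPCpoDec}) to split the FPC pattern accordingly. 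Applying Theorem~\ref{thm:SqPOtrans} in each direction then yields the chain $m^{**}\vDash \ac{c}_E \iff m^{*}\vDash \Trans(r,\ac{c}_E) \iff m\vDash \Trans(s,\Trans(r,\ac{c}_E)) \iff m\vDash \Trans(s\circ r,\ac{c}_E)$, which is a genuine equivalence because every $\cM$-morphism is automatically SqPO-admissible.

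For part (iii), I would follow the proof of Lemma~\ref{lem:ST} almost verbatim, replacing the right-hand POC square of~\eqref{eq:transCST} by the FPC square of~\eqref{eq:transR}. Given any $(n:I'\rightarrow X)\in\cM$, applying $r'$ to $X$ along $n$ produces an FPC square which, composed with the FPC square of~\eqref{eq:transR} via FPC composition, yields an FPC square realizing the SqPO application of $r$ along $m\eqdef n\circ p$ with joint comatch $m^{*}=n^{*}\circ p^{*}$. Using Theorem~\ref{thm:SqPOtrans} for $r$ and $r'$ together with Theorem~\ref{thm:Shift} then gives the chain $n\vDash \Shift(p,\Trans(r,\ac{c}_O)) \iff n\circ p \vDash \Trans(r,\ac{c}_O) \iff n^{*}\circ p^{*} \vDash \ac{c}_O \iff n^{*}\vDash \Shift(p^{*},\ac{c}_O) \iff n \vDash \Trans(r',\Shift(p^{*},\ac{c}_O))$. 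I anticipate the main obstacle across the three parts to lie in the FPC composition/decomposition arguments required in (ii) and (iii): FPCs are characterized by a universal property rather than by a colimit, so their composition behaviour is subtler than that of pushouts, and care must be taken to ensure that the $\cM$-stability under FPCs required by Assumption~\ref{as:SqPO} propagates correctly through all intermediate morphisms constructed.
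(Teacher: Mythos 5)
Your proposal is correct and follows essentially the same route as the paper: parts (i) and (ii) are obtained by mirroring the DPO-setting proofs of Lemma~\ref{lem:transUnit} and Lemma~\ref{lem:transComp} (the paper simply asserts they "take precisely the same shape"), and part (iii) adapts the proof of Lemma~\ref{lem:ST} by stacking the assumed PO/FPC squares on the SqPO rewrite step of $r'$ along $n$, composing pushouts on the left and FPCs on the right, and chaining the satisfaction equivalences via Theorem~\ref{thm:SqPOtrans} and Theorem~\ref{thm:Shift}. Your observation that universal SqPO-admissibility of $\cM$-morphisms upgrades $\dot{\equiv}$ to a genuine $\equiv$ matches the paper's footnote exactly.
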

\begin{proof}
  The proofs of the first two statements take precisely the same shape as the corresponding proofs of Lemma~\ref{lem:transUnit} (units for $\Trans$) and Lemma~\ref{lem:transComp} (compositionality of $\Trans$) for the DPO-rewriting setting, as they are independent of the underlying type of rewriting. It remains to prove the third statement, which follows by suitably adapting the  proof strategy of Lemma~\ref{lem:ST} (compatibility of $\Shift$ and $\Trans$ in DPO-rewriting). More precisely, construct the commutative diagram below, where the top part is inserted from the assumption of the lemma, and where the bottom part constitutes an SqPO-type rewrite step of applying the rule $r'$ to $X$ along the match $n$:
\begin{equation}\label{eq:TransProofASqPO}
\inputtikz{TransProofASqPO}
\end{equation}
From hereon, the proof structure is fully analogous to the DPO rewriting case: since 
\[
  n\vDash(\Shift(p,\Trans(r,\ac{c}_O))\,,
\]
we find that $m=n\circ p\vDash\Trans(r,\ac{c}_O)$. %
Since the top and bottom left squares compose into a pushout and %
the top and bottom right squares into an FPC,  $m\vDash\Trans(r,\ac{c}_O)$ implies that %
${\color{h1color}m^{*}}={\color{h1color}n^{*}}\circ p^{*}\vDash \ac{c}_O$. %
Since ${\color{h1color}m^{*}}={\color{h1color}n^{*}}\circ p^{*}$, %
${\color{h1color}m^{*}}\vDash \ac{c}_O$ implies that ${\color{h1color}n^{*}}\vDash \Shift(p^{*},\ac{c}_O)$, and %
since the bottom left and right squares are of the form of an SqPO-type rewriting step, we find that indeed %
$n\vDash\Trans(r',\Shift(p^{*},\ac{c}_O)$. The proof of the converse direction follows by reversing the order of the preceding steps of the proof.
\end{proof}

\subsection{SqPO-type concurrent composition of rules with conditions}\label{sec:SqPOconcurComp}

The second central definition for our rewriting framework is the following notion of concurrent composition, which is an extension of a construction first introduced in~\cite{nbSqPO2019} to the setting of rules with conditions.

\begin{definition}[SqPO-type concurrent composition]\label{def:SqPO}
Let $\bfC$ be an $\cM$-adhesive category satisfying Assumption~\ref{as:SqPO}. %
Let ${\color{blue}R_j\equiv (r_j,\ac{c}_{I_j})}\in \LinAc{\bfC}$ be two linear rules with application conditions ($j=1,2$), and let 
\[
{\color{h2color}\mu_{21}\equiv\rSpan{I_2}{m_{2}}{M_{21}}{m_{1}}{O_1}}
\]
be a span of $\cM$-morphisms (i.e.\ $m_{1},m_{2}\in \cM$). If the diagram below is constructible (if the pushout complement marked $\mathsf{POC}$ exists),
\begin{equation}
\inputtikz{SqPOccDef}
\end{equation}
where (in close analogy to the DPO-type rewriting setting)
\begin{equation}\label{eq:acIcompSqPO}
    {\color{h2color} \ac{c}_{I_{21}}}
   \eqdef\Shift(p_1,{\color{blue}\ac{c}_{I_1}})\;\bigwedge\;
   \Trans\left({\color{h1color}N_{21}}\leftarrow K_1'\rightarrow {\color{h2color}I_{21}},
   \Shift({\color{h1color}m_2'},{\color{blue}\ac{c}_{I_2}})\right)\,,
  \end{equation}
  and if $\ac{c}_{I_{21}}\neq\ac{false}$, then we call $\mu_{21}$ an \emph{SqPO-admissible match} for the rules with conditions $R_2$ into $R_1$, denoted 
  \[
    \mu_{21}\in \sqMatch{R_2}{R_1}\,.
  \]
  In this case, we introduce the notation $\sqComp{R_2}{\mu_{21}}{R_1}$ to denote the composite,
  \begin{equation}\label{eq:defCompRCsqpo}
    {\color{h2color}\sqComp{R_2}{\mu_{21}}{R_1}}\eqdef{\color{h2color}\left(\rSpan{O_{21}}{o_{21}}{K_{21}}{i_{21}}{I_{21}},\ac{c}_{I_{21}}\right)}\,.
  \end{equation}
\end{definition}

As already noted in~\cite{nbSqPO2019} for the case of SqPO-type rewriting for rules without conditions, it might appear surprising at first sight that there is an asymmetry in the above definition (in that the left part of the diagram consists of an FPC and a pushout, while the right part is formed by a pushout complement and a pushout, respectively), but the precise reason for this definition will become apparent when considering the concurrency theorem for SqPO-type rules with conditions in the following subsection.

\subsection{SqPO-type concurrency theorem for rules with conditions}\label{sec:SqPOconcur}

To the best of our knowledge, the following key result is the first of its kind in the SqPO-rewriting setting:

\begin{theorem}[SqPO-type Concurrency Theorem, extended from \cite{nbSqPO2019}, Thm.~2.9]
Let $\bfC$ be an $\cM$-adhesive category satisfying Assumption~\ref{as:SqPO}, let $R_j\equiv(r_j,\ac{c}_{I_j})\in \LinAc{\bfC}$ ($j=1,2$) be two linear rules with application conditions, and let $X_0\in ob(\bfC)$ be an object.
\begin{itemize}
\item \textbf{Synthesis:} Given a two-step sequence of SqPO derivations 
\[
X_2\xLeftarrow[R_2,M_2]{{\tiny SqPO}} X_1\xLeftarrow[R_1,M_1]{{\tiny SqPO}}X_0\,,
\]
with $X_1\eqdef r_{1_{M_1}}(X_0)$ and $X_2\eqdef r_{2_{M_2}}(X_1)$, there exists a SqPO-composite rule $R_{21}=\sqComp{R_2}{\mu_{21}}{R_1}$
for a unique $\mu_{21}\in \sqRMatch{R_2}{R_1}$ as well as an SqPO-admissible match $n\in \sqMatch{R}{X}$ (that is unique up to isomorphisms) such that 
 \[
  R_{21_n}(X_0)\xLeftarrow[R_{21},n]{{\tiny SqPO}} X_0\qquad \text{and}\qquad R_{21_n}(X_0)\cong X_2\,.  
 \]
\item \textbf{Analysis:} Given an SqPO-admissible match $\mu_{21}\in \sqRMatch{R_2}{R_1}$ of $R_2$ into $R_1$ and an SqPO-admissible match $n\in \sqMatch{R_{21}}{X}$ of the SqPO-composite $R_{21}=\sqComp{R_2}{\mu_{21}}{R_1}$ into $X_0$, there exists a pair of SqPO-admissible matches $m_1\in \sqMatch{R_1}{X_0}$ and $m_2\in \sqMatch{R_2}{X_1}$ (with $X_1\eqdef R_{1_{m_1}}(X_0)$, and unique up to isomorphisms) such that
\[
    X_2\xLeftarrow[R_2,m_2]{{\tiny SqPO}} X_1 \xLeftarrow[R_1,m_1]{{\tiny SqPO}} X_0\qquad \text{and}\qquad
    X_2\cong R_{21_n}(X_0)\,.
\]
\end{itemize}
\end{theorem}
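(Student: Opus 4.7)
The plan is to mirror the structure of the DPO concurrency proof (Theorem~\ref{thm:conc}), reducing first to the case of rules without application conditions and then grafting on the verification of the condition components via the compatibility lemmas of Section~\ref{sec:SqPOtransp}. At the level of plain rules, the statement is precisely Theorem~2.9 of~\cite{nbSqPO2019}, which furnishes, for every SqPO two-step derivation of $X_0$ via rules $r_1$ and $r_2$ along matches $m_1,m_2$, a unique span of $\cM$-morphisms $\mu_{21}=(I_2\leftarrow M_{21}\rightarrow O_1)$ and a unique $\cM$-morphism $n:I_{21}\rightarrow X_0$ such that the composed diagram (FPCs on the left, POCs on the right, pushouts above them, and a pullback at the bottom) commutes and witnesses $R_{21_{n}}(X_0)\cong X_2$, and conversely. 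This gives the desired bijection at the level of underlying spans; what remains is to show that the admissibility predicate $\vDash \ac{c}_{I_j}$ on the individual matches $m_1,m_2$ is equivalent to $\vDash \ac{c}_{I_{21}}$ on the composite match $n$.

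For the \emph{Synthesis} direction, I would begin with the standard SqPO concurrency diagram (cf.\ the DPO version in~\eqref{eq:concurrencyDPOproof}, with pushout complements on the right replaced by FPCs). Admissibility of $m_1$ means $m_1\vDash \ac{c}_{I_1}$; since $m_1=n\circ(I_1\rightarrow I_{21})$, Theorem~\ref{thm:Shift} yields $n\vDash \Shift(I_1\rightarrow I_{21},\ac{c}_{I_1})$, which is exactly the first conjunct of~\eqref{eq:acIcompSqPO}. Admissibility of $m_2$ means $m_2\vDash \ac{c}_{I_2}$; factoring $m_2$ through $N_{21}$ and applying Theorem~\ref{thm:Shift} gives that the induced morphism $N_{21}\rightarrow X_1$ satisfies $\Shift(m_2',\ac{c}_{I_2})$, and then Theorem~\ref{thm:SqPOtrans} applied to the SqPO-style rewrite step $N_{21}\leftarrow K_1'\rightarrow I_{21}$ over $X_1\leftarrow K_1''\rightarrow X_0$ transports this condition along the rule to yield satisfaction of the second conjunct by $n$. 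Conjoining the two gives $n\vDash \ac{c}_{I_{21}}$.

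The \emph{Analysis} direction runs the same chain of equivalences in reverse: from $n\vDash \ac{c}_{I_{21}}$, splitting the conjunction and invoking Theorem~\ref{thm:SqPOtrans} and Theorem~\ref{thm:Shift} in the reverse direction of the arrows produces, via the canonical factorizations of $n$ through $I_1\rightarrow I_{21}$ and through the SqPO rewrite of $X_0$, both $m_1\vDash \ac{c}_{I_1}$ and $m_2\vDash \ac{c}_{I_2}$. The key point enabling this reversal (and which is cleaner than in the DPO case) is that, by Assumption~\ref{as:SqPO}, FPCs of composable $\cM$-morphism pairs always exist, so every $\cM$-morphism $m_1:I_1\rightarrow X_0$ is a plain SqPO-admissible match; hence the equivalences of Theorem~\ref{thm:TrnasSqPOprops}(iii) hold strictly (not merely modulo $\dot{\equiv}$), and no side condition on the ambient rewrite needs to be separately verified.

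The main obstacle I anticipate is bookkeeping: ensuring that the morphisms appearing in~\eqref{eq:acIcompSqPO} (the coproduct-style injections $p_1$, $m_2'$, and the span $N_{21}\leftarrow K_1'\rightarrow I_{21}$ used as the ``rule'' for the inner $\Trans$) are precisely the morphisms that arise in the large concurrency diagram when one identifies $X_0,X_1,X_2$ with the corresponding composite-rule objects. This amounts to invoking pushout-pushout decomposition and the FPC-pushout decomposition (Lemma~\ref{lem:Main}) to show that the concurrency diagram can be rearranged so that the upper half computes the composite rule $R_{21}$ and the lower half computes a single SqPO application of $R_{21}$ to $X_0$; once this geometric identification is made, each conjunct in $\ac{c}_{I_{21}}$ matches exactly one of the two admissibility conditions on $m_1,m_2$ via a single application of Theorem~\ref{thm:Shift} or Theorem~\ref{thm:SqPOtrans}, and the asymmetric FPC/POC structure of Definition~\ref{def:SqPO} is seen to be dictated precisely by this matching.
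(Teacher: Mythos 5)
Your overall strategy is the same as the paper's: reduce to the plain-rule SqPO concurrency result for the span/diagram part, then handle the application conditions by observing that the bottom row of the concurrency diagram realizes two consecutive SqPO rewriting steps, so that the first conjunct of $\ac{c}_{I_{21}}$ corresponds to $m_1\vDash\ac{c}_{I_1}$ via Theorem~\ref{thm:Shift} and the second conjunct to $m_2\vDash\ac{c}_{I_2}$ via Theorem~\ref{thm:Shift} followed by Theorem~\ref{thm:SqPOtrans} applied to the rule $N_{21}\leftarrow K_1'\rightarrow I_{21}$. Your remark that the equivalences of Theorem~\ref{thm:TrnasSqPOprops} hold strictly (not merely up to $\dot{\equiv}$) because every $\cM$-morphism is a plain SqPO-admissible match is also exactly the point the paper makes. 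This part of your argument matches the paper's proof.

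The one concrete gap is your treatment of the plain-rule part as a black box. Theorem~2.9 of~\cite{nbSqPO2019} was established for \emph{adhesive} categories, whereas the present theorem is stated for $\cM$-adhesive categories satisfying Assumption~\ref{as:SqPO}, which is a strictly weaker hypothesis; you cannot simply quote it. The paper accordingly does not cite the result but re-runs the entire construction of the concurrency diagram in the weaker setting, and this is where most of the technical work lies: $\cM$-effective unions are needed to get $(N_{21}\rightarrow X_1)\in\cM$ after forming the pushout of the overlap span, vertical FPC-pullback and FPC-pushout decomposition (Lemma~\ref{lem:Main}) are needed to identify which squares are FPCs versus pushouts, the $\cM$-van Kampen property is invoked for the inner ``curvy'' squares built over $K_{21}$ and $\overline{K}_{21}$, and horizontal FPC decomposition assembles the composite rewrite step. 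Your final ``bookkeeping'' paragraph gestures at some of these decompositions, but it presents them as a routine identification of morphisms rather than as the re-derivation, under weaker hypotheses, of the theorem you are relying on. To close the gap you would either need to restrict the statement to adhesive categories or carry out the diagram construction as the paper does.
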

\begin{proof}
  For the part of the proof pertaining to the concurrency of SqPO-type rules without application conditions, we will follow the strategy presented in~\cite{nbSqPO2019} (where slightly stronger conditions than the ones required according to Assumption~\ref{as:SqPO} were made, i.e.\ in~\cite{nbSqPO2019} $\bfC$ was assumed to be adhesive, whence the re-derivation here in the $\cM$-adhesive setting).

\paragraph{``Synthesis'' part of the proof:} %
Suppose we are given rules with application conditions $R_1,R_2\in \LinAc{\bfC}$ and %
SqPO-admissible matches ${\color{h1color}m_1}\in \sqMatch{R_1}{X_0}$ and %
${\color{h1color}m_2}\in \sqMatch{R_2}{X_1}$, with $X_1=R_{1_{{\color{h1color}m_1}}}(X_0)$. %
This data is encoded in the {\color{h1color}blue} part of the diagram in~\eqref{eq:concurrencyDPOproof}. %
Let us begin by constructing the {\color{h2color}orange} parts of~\eqref{eq:concurrencyDPOproof} as follows: %
take the pullback ${\color{h2color}M_{21}}=\pB{I_2{\color{h1color}\rightarrow X_1\leftarrow}O_1}$, and then %
the pushout ${\color{h2color}N_{21}}=\pO{I_2{\color{h2color}\leftarrow M_{21}\rightarrow}O_1}$; %
by the universal property of pushouts, there exists a morphism ${\color{h2color}N_{21}}\rightarrow {\color{h1color}X_1}$, %
which is in $\cM$ since $\bfC$ is assumed to possess $\cM$-effective unions according to Assumption~\ref{as:SqPO}. %
Next, form the pullbacks ${\color{h2color}K_i'}=\pB{{\color{h1color}\overline{K}_i\rightarrow X_1}\leftarrow {\color{h2color}N_{21}}}$ (for $i=1,2$), %
which furnishes morphisms $K_i{\color{h2color}\rightarrow K_i'}$ (for $i=1,2$) that %
are in $\cM$ due to the decomposition property of $\cM$-morphism. %
By virtue of vertical FPC-pullback decomposition (Lemma~\ref{lem:Main}\ref{lem:vertFPCpbDec}), %
the second from the left bottom and top squares  in the back of~\eqref{eq:concurrencyDPOproof} are FPCs, %
while via pushout-pullback decomposition (Lemma~\ref{lem:Main}\ref{lem:POPBdec}) %
the second from the right top and bottom squares in the back of~\eqref{eq:concurrencyDPOproof} are pushouts. %
Let ${\color{h2color}O_{21}}=\pO{O_2\leftarrow K_2{\color{h2color}\rightarrow K_2'}}$, which %
by the universal property of pushouts furnishes a morphism ${\color{h2color}O_{21}\rightarrow}X_2$; %
by pushout-pushout decomposition (Lemma~\ref{lem:Main}\ref{lem:POPOdec}), %
the leftmost bottom square in the back of~\eqref{eq:concurrencyDPOproof} is a pushout, %
which also entails by stability of $\cM$-morphisms under pushouts that $({\color{h2color}O_{21}\rightarrow}X_2)\in \cM$. %
Analogously, let ${\color{h2color}I_{21}}=\pO{{\color{h2color}K_1'\leftarrow}K_1\rightarrow I_1}$, %
which furnishes a morphism ${\color{h2color}I_{21}\rightarrow}X_0$, and %
via vertical FPC-pushout decomposition (Lemma~\ref{lem:Main}\ref{lem:vertFPCpoDec}) that %
the rightmost bottom back square in~\eqref{eq:concurrencyDPOproof} is an FPC and that $({\color{h2color}I_{21}\rightarrow}X_0)\in \cM$.

At this point, note that the bottom row of squares in the back of~\eqref{eq:concurrencyDPOproof} has the shape of two consecutive SqPO rewriting steps. Since ${\color{h1color}(m_1:I_1\rightarrow X_0)\vDash \ac{c}_{I_1}}$ and ${\color{h1color}(m_2:I_2\rightarrow X_1)\vDash \ac{c}_{I_2}}$ by assumption, we  conclude that ${\color{h2color}m_{21}=(I_{21}\rightarrow X_0)}$ satisfies ${\color{h2color}m_{21}\vDash\ac{c}_{21}}$, with
\begin{equation}\label{eq:SqPOconcurAC21}
      {\color{h2color}\ac{c}_{21}}= \Shift(I_1{\color{h2color}\rightarrow I_{21}},\ac{c}_{I_1})\; \bigwedge\;
      \Trans({\color{h2color}N_{21}\leftarrow K_1'\rightarrow I_{21}},
      \Shift(I_2{\color{h2color}\rightarrow N_{21}},\ac{c}_{I_2})\,.
\end{equation}
    
To complete this part of the proof, take the pullbacks
\[
      {\color{h2color}K_{21}}
        =\pB{{\color{h2color}K_2'\leftarrow N_{21}\rightarrow K_1'}}
        \quad \text{and}\quad
      {\color{h2color}\overline{K}_{21}}
      =\pB{{\color{h1color}\overline{K}_2\leftarrow X_1\rightarrow \overline{K}_1}}\,,
\]
which also induces a morphism ${\color{h2color}K_{21}\rightarrow \overline{K}_{21}}$. %
By pullback-pullback decomposition (Lemma~\ref{lem:Main}\ref{lem:PBPBdec}), %
the induced square $\cSquare{{\color{h2color}K_{21}},{\color{h2color}\overline{K}_{21}},{\color{h1color}\overline{K}_1},{\color{h2color}K_{1}'}}$ %
(i.e.\ the inner right ``curvy'' square) is a pullback, %
which entails by stability of $\cM$-morphisms under pullbacks that ${\color{h2color}K_{21}\rightarrow \overline{K}_{21}}$ is in $\cM$. %
Then by the $\cM$-van Kampen property, %
the square $\cSquare{{\color{h2color}K_{21}},{\color{h2color}\overline{K}_{21}},{\color{h1color}\overline{K}_2},{\color{h2color}K_{2}'}}$ %
(i.e.\ the inner left ``curvy'' square) is a pushout. %
The composition of the latter pushout (and thus an FPC) square and %
of the second bottom square from the left in the back of~\eqref{eq:concurrencyDPOproof} (an FPC) yields an FPC square, %
while the third from the left bottom back square is a pushout (and thus an FPC), %
whence, by horizontal FPC decomposition (Lemma~\ref{lem:Main}\ref{lem:horFPCdec}), %
we derive that the inner right ``curvy'' square is an FPC. %
Consequently, forming the front left ``curvy'' square as a composition of pushout squares and %
the front right ``curvy'' square as a composition of FPCs, %
we have in summary exhibited an SqPO rewriting step of $X_0$ along the rule ${\color{h2color}(O_{21}\leftarrow K_{21}\rightarrow I_{21})}$ and admissible match ${\color{h2color}m_{21}}$.

\paragraph{``Analysis'' part of the proof:} %
Suppose that we are given an SqPO-composite $R_{21}=\sqComp{R_2}{\mu_{21}}{R_1}$ of linear rules %
with application conditions $R_2$ with $R_1$ %
along the SqPO-admissible match ${\color{h2color}\mu_{21}}=(I_2{\color{h2color}\leftarrow M_{21}\rightarrow }O_1)$, and that moreover %
${\color{h2color}m_{21}}=({\color{h2color}I_{21}\rightarrow}X_0)$ is an SqPO-admissible match of $R_{21}$ into $X_0$. %
Forming a SqPO-rewrite step by applying $R_{21}$ along ${\color{h2color}m_{21}}$ to $X_0$ yields %
the {\color{h2color}orange} parts of the diagram in~\eqref{eq:concurrencyDPOproof}. %
In order to prove the claim, we have to construct two SqPO-admissible matches %
${\color{h1color}m_1}\in \sqMatch{R_1}{X_0}$ and ${\color{h1color}m_2}\in \sqMatch{R_1}{{\color{h1color}X_1}}$, %
with ${\color{h1color}X_1}=R_{1_{{\color{h1color}m_1}}}(X_0)$, %
and that $X_2\cong R_{2_{{\color{h1color}m_2}}}({\color{h1color}X_1})$ under these assumptions.

We begin by forming the FPC ${\color{h1color}\overline{K}_1}=\FPC{{\color{h2color}K_1'\rightarrow I_{21}\rightarrow}X_0}$, %
which according to Assumption~\ref{as:SqPO} is guaranteed to exist and to yield two $\cM$-morphisms %
${\color{h2color}K_1'}\rightarrow {\color{h1color}\overline{K}_1}$ and ${\color{h1color}\overline{K}_1}\rightarrow X_0$. %
By the universal property of FPCs, there exists a morphism ${\color{h2color}\overline{K}_{21}}\rightarrow {\color{h1color}\overline{K}_1}$, %
which by the $\cM$-morphism decomposition property is in $\cM$. Horizontal FPC decomposition (Lemma~\ref{lem:Main}\ref{lem:horFPCdec}) %
entails that the square $\cSquare{{\color{h2color}K_{21}},{\color{h2color}\overline{K}_{21}},{\color{h1color}\overline{K}_1},{\color{h2color}K_{1}'}}$ %
(inner right ``curvy'' square) is an FPC. %
Next, we take the pushout ${\color{h1color}X_1}=\pO{{\color{h2color}M_{21}\leftarrow K_{1}'\rightarrow}{\color{h1color} \overline{K}_{1}}}$, %
followed by forming the FPC 
\[
    {\color{h1color}\overline{K}_2}=\FPC{{\color{h2color}K_2'\rightarrow N_{21}}\rightarrow {\color{h1color}X_1}}\,.
\]
Since the inner right ``curvy'' square %
$\cSquare{{\color{h2color}K_{21}},{\color{h2color}\overline{K}_{21}},{\color{h1color}\overline{K}_1},{\color{h2color}K_{1}'}}$ is an FPC and %
the second from the right bottom back square in~\eqref{eq:concurrencyDPOproof} a pushout, %
the composition of these two squares is an FPC and thus a pullback. %
Then by the universal property of FPCs, there exists a morphism ${\color{h2color}\overline{K}_{21}}\rightarrow{\color{h1color}\overline{K}_2}$, %
which is by the $\cM$-morphism decomposition property in $\cM$, and %
by the horizontal FPC decomposition property, the left inner ``curvy'' square is an FPC. %
Noting that the square $\cSquare{{\color{h2color}K_{21}},{\color{h2color}K_{2}'},{\color{h2color}N_{21}},{\color{h2color}K_{1}'}}$ %
is by assumption a pullback, %
by pullback-pullback decomposition so is %
$\cSquare{{\color{h2color}\overline{K}_{21}},{\color{h1color}\overline{K}_{2}},{\color{h2color}X_1},{\color{h1color}\overline{K}_{1}}}$; %
thus by the $\cM$-van Kampen property,  the left inner ``curvy'' square is an FPC, noting that the square is in fact a pushout. %
The latter entails by the universal property of pushouts the existence of a morphism ${\color{h1color}\overline{K}_2}\rightarrow X_2$, %
and then by pushout-pushout decomposition that the leftmost bottom back square in~\eqref{eq:concurrencyDPOproof} %
is a pushout (and thus by stability of $\cM$-morphisms under pushouts that $({\color{h1color}\overline{K}_2}\rightarrow X_2)\in \cM$). 

To complete the proof, note that the rightmost and second from right bottom back squares in~\eqref{eq:concurrencyDPOproof} are an FPC and a pushout, respectively, thus %
${\color{h2color}m_{21}\vDash \ac{c}_{I_{21}}}$ (with ${\color{h2color}\ac{c}_{I_{21}}}$ defined as in~\eqref{eq:SqPOconcurAC21}) implies that the $\cM$-morphisms
\[
      {\color{h1color}m_1}
      ={\color{h2color}(X_0\leftarrow I_{21})
          \circ (I_{21}\leftarrow I_1)}\quad \text{and}\quad
      {\color{h1color}m_2}=({\color{h1color}X_1}\leftarrow {\color{h2color}N_{21}})
        \circ ({\color{h2color}N_{21}\leftarrow}I_2)
\]
satisfy $m_1\vDash\ac{c}_{I_1}$ and $m_2\vDash\ac{c}_{I_2}$ according to the properties of the $\Shift$ and $\Trans$ constructions. It then remains to compose pushout squares and FPC squares in the top and bottom back of the diagram~\eqref{eq:concurrencyDPOproof} in order to form the two SqPO rewriting steps claimed to exist by the statement of the theorem, which concludes the proof.
\end{proof}

\subsection{Associativity of SqPO rewriting with conditions}\label{sec:assocSqPO}

Based upon the developments presented thus far and on the central result of~\cite{nbSqPO2019} (the associativity theorem for SqPO rewriting without rules), we may now state another original contribution of this paper. Due to the structural similarities between the strategies of the associativity proof in the DPO- and SqPO-type cases for rules without application conditions, the following statement is almost verbatim equivalent to the corresponding DPO-type statement. More precisely, in both cases, the part of the proof needed on top of the case without conditions consists of a ``diagram chase'' involving the $\Shift$ and $\Trans$ constructions and their properties. 

\begin{theorem}[SqPO-type Associativity Theorem]\label{thm:assocACsqpo}
Let $\bfC$ be an $\cM$-adhesive category satisfying Assumption~\ref{as:SqPO}. 
Let $R_j\equiv(r_j,\ac{c}_{I_j})\in \LinAc{\bfC}$ ($j=1,2,3$) be three linear rules with application conditions. Then there exists a \emph{bijection} between the sets of pairs of admissible matches $M_A$ and $M_B$ defined as
\begin{equation}\label{eq:thmAssocRepartSqPO}
\begin{aligned}
  M_A&\eqdef\{(\mu_{21},\mu_{3(21)})\vert \mu_{21}\in \sqMatch{R_2}{R_1}\,,\; 
  \mu_{3(21)}\in \Match{R_3}{R_{21}}\}\\
  M_B&\eqdef\{(\mu_{32},\mu_{(32)1})\vert \mu_{32}\in \sqMatch{R_3}{R_2}\,,\; 
  \mu_{(32)1}\in \Match{R_{32}}{R_{1}}\}
\end{aligned}
\end{equation}
with $R_{i,j}\eqdef(\sqComp{r_i}{\mu_{ij}}{r_j},\ac{c}_{I_{ij}})$ (and $\ac{c}_{I_{ij}}$ defined as in~\eqref{eq:acIcomp}) such that 
\begin{subequations}
\begin{align}
  \forall (\mu_{21},\mu_{3(21)})\in M_A: \exists! (\mu_{32},\mu_{(32)1})&\in M_B:\nonumber\\
  \sqComp{r_3}{\mu_{3(21)}}{\left(
    \sqComp{r_2}{\mu_{21}}{r_1}
  \right)}&\cong 
  \sqComp{\left(
    \sqComp{r_3}{\mu_{32}}{r_2}
  \right)}{\mu_{(32)1}}{r_1}\label{eq:thmAAC1SqPO}\\
  \land \quad\ac{c}_{I_{3(21)}}&\equiv \ac{c}_{I_{(32)1}}\label{eq:thmAAC2SqPO}
\end{align}
\end{subequations}
and vice versa. In this particular sense, the operation $\sqComp{.}{.}{.}$ is \textbf{\emph{associative}}.
\end{theorem}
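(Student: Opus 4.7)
The plan is to follow the structural parallel with the proof of the DPO-type associativity theorem (Theorem~\ref{thm:assocAC}), exploiting the fact that nearly all ingredients have SqPO analogues already established. First, I would invoke the associativity theorem for SqPO-type rules without conditions from~\cite{nbSqPO2019} to obtain a bijection between the sets $M_A'$ and $M_B'$ of matches of ``plain'' rules underlying $M_A$ and $M_B$, together with the isomorphism of composite plain rules asserted in~\eqref{eq:thmAAC1SqPO}. By the construction in~\cite{nbSqPO2019}, each corresponding pair $(\mu_{21},\mu_{3(21)})\leftrightarrow(\mu_{32},\mu_{(32)1})$ uniquely determines a commutative diagram that is the SqPO analogue of~\eqref{eq:AsProofDiag}: all rows of squares originating from DPO pushout complements are replaced by FPCs, while the pushouts gluing outputs together are preserved, and the combinatorial shape of the diagram (in particular the positions of $I_1$, $I_2$, $I_3$, $I_{21}$, $I_{32}$, $I_{321}$, $N_{21}$, $N_{32}$, $N_{(32)1}$, $N_{3(21)}$, $\overline{K}_1$, $K_2''$) is unchanged.

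The second, and main, step is to show equivalence of the composite conditions~\eqref{eq:thmAAC2SqPO}. I would compute each of the three contributions to $\ac{c}_{I_{3(21)}}$ and $\ac{c}_{I_{(32)1)}}$ separately, following the template of equations~\eqref{eq:AssocProofEqI1}--\eqref{eq:AssocProofEqI3}, with the crucial upgrade that in the SqPO setting Theorem~\ref{thm:TrnasSqPOprops} gives unrestricted equivalence $\equiv$ (rather than the admissibility-restricted $\dot{\equiv}$ needed in DPO) for units of $\Trans$, compositionality of $\Trans$, and compatibility of $\Shift$ with $\Trans$. Concretely, the $\ac{c}_{I_1}$-contribution collapses via Lemma~\ref{lem:shiftComp} by routing the shift along $I_1\to I_{21}\to I_{321}$; the $\ac{c}_{I_2}$-contribution is massaged by Lemma~\ref{lem:shiftComp} and the SqPO $\Shift$--$\Trans$ compatibility of Theorem~\ref{thm:TrnasSqPOprops}(iii) applied to the square $\cSquare{N_{21},N_{(32)1},I_{321},\overline{K}_1}$; the $\ac{c}_{I_3}$-contribution requires two successive applications of the same compatibility, moving a $\Trans$ across a $\Shift$ along the cell $\cSquare{N_{3(21)},N_{(32)1},I_{32},N_{32}}$, and one invocation of Lemma~\ref{lem:shiftComp} to split the shift $I_3\to N_{3(21)}$ as $I_3\to N_{32}\to N_{3(21)}$.

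Combining the three rewritten contributions conjunctively, by definition~\eqref{eq:acIcompSqPO} of the SqPO composition one obtains exactly $\ac{c}_{I_{3(21)}}\equiv\ac{c}_{I_{(32)1}}$. The bijection between $M_A$ and $M_B$ then follows: the underlying plain-rule bijection provides the mapping of matches, and the equivalence of composite conditions guarantees that $\ac{c}_{I_{3(21)}}\not\equiv\ac{false}$ if and only if $\ac{c}_{I_{(32)1)}}\not\equiv\ac{false}$, so that admissibility (in the sense of~\eqref{eq:defAdmRacSqPO}) transfers bijectively between the two iterated compositions.

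The main obstacle I anticipate is the careful verification that the SqPO analogue of the diagram~\eqref{eq:AsProofDiag} indeed exists in the required form, i.e.\ that the FPCs and pushouts are placed in the correct cells so that each application of Theorem~\ref{thm:TrnasSqPOprops}(iii) has precisely the configuration of a pushout-plus-FPC square needed for the $\Shift$--$\Trans$ swap; this is where the adhesivity hypothesis is used, since the proof in~\cite{nbSqPO2019} relies on it to guarantee commutativity and uniqueness of the FPC constructions that glue the triple composition. The remainder is bookkeeping of shift and transport operations along fixed morphisms of the adhesive diagram, and, unlike the DPO case, no admissibility side-conditions clutter the equivalences thanks to universal existence of FPCs along $\cM$-morphisms.
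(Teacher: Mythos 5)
Your proposal follows essentially the same route as the paper: quote the plain-rule SqPO associativity result of~\cite{nbSqPO2019} to obtain the bijection of underlying matches and the diagram analogous to~\eqref{eq:AsProofDiag} (with FPCs in place of the relevant pushout complements), then rerun the three condition computations~\eqref{eq:AssocProofEqI1}--\eqref{eq:AssocProofEqI3} with the SqPO-type $\Shift$--$\Trans$ compatibility of Theorem~\ref{thm:TrnasSqPOprops} substituted for Lemma~\ref{lem:ST}. Your observation that the SqPO compatibility yields unrestricted equivalence $\equiv$ rather than $\dot{\equiv}$ is exactly the point the paper exploits, so the proposal is correct.
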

\begin{proof}
Referring the interested readers to~\cite{nbSqPO2019} for the precise details of the proof for the part of the above statement pertaining to rules without application conditions, and to~\cite{bp2019-ext} for the generalization of the requisite technical lemmas from the adhesive to the $\cM$-adhesive setting, suffice it here to quote the following result: the diagram (where we have also indicated the relevant conditions on rules for later convenience) is constructible starting from either of the sets of pairs of SqPO-admissible matches in~\eqref{eq:thmAssocRepartSqPO}, and thereby proves the bijection for the sets of matches of rules without conditions:
\begin{equation}\label{eq:AsProofDiagSqPO}
  \vcenter{\hbox{\includegraphics[width=0.9\textwidth]{images/AsProofDiag.pdf}}}
\end{equation}
In contrast to the analogous diagram~\eqref{eq:AsProofDiag} of the proof of the DPO-type associativity theorem, the nature of the various squares in~\eqref{eq:AsProofDiagSqPO} differs at several positions. For the purpose of proving the part of the SqPO-type associativity theorem pertaining to the conditions on rules, we thus quote from~\cite{nbSqPO2019} that
\begin{itemize}
  \item the third and fourth squares in the front (counting from the left) are a pushout and an FPC, respectively
  \item the second to right and rightmost bottom squares are pushouts.
\end{itemize}
Consequently, it suffices to replace the various applications of the DPO-type compatibility of $\Shift$ and $\Trans$ (Lemma~\ref{lem:ST}) in the proof of Theorem~\ref{thm:assocAC} with its SqPO-type variant (Theorem~\ref{thm:TrnasSqPOprops}) in order to obtain a proof of the statements of the present Theorem~\ref{thm:assocACsqpo} pertaining to the conditions on rules.
\end{proof}

\begin{example}[Ex.~\ref{ex:assocDPO} continued]\label{ex:assocSqPO}
It is instructive to compare the associativity property in the SqPO-type variant directly with its counterpart in the DPO-type setting by considering once more the triple rule composition depicted in Figure~\ref{fig:exAssoc}. Note first that since pushout complements along $\cM$-morphisms are also FPCs, the triple composition at the level of ``plain'' rules is a valid composition both in DPO- and in SqPO-semantics. However, the two semantics differ at the level of the application conditions. More precisely, since the SqPO-type $\Trans$ construction is in a sense ``asymmetric'' (in that conditions can be transported from the output to the input of a rule, but not vice versa), the ``compression'' as described in Definition~\ref{def:csfc} for the DPO-setting is not available in the SqPO-setting. Indeed, the application condition
\[
\ac{c}_{I_1}:=\neg\exists\left(
	\inputtikz{TVGB}
	 \hookrightarrow
	\inputtikz{TVEGB}, \ac{true}\right)
\]
yields a \emph{non-trivial} test on candidate matches of rule $R_1$ in the SqPO-rewriting, since an application of $R_1$ at two vertices linked by an arbitrary number of vertices is possible for the ``plain'' rule $r_1$ in SqPO-rewriting (leading to the implicit deletion of all edges incident to the deleted vertex). Applying the SqPO-type $\Trans$ construction as well as $\Shift$, one may compute that the application conditions $\ac{c}_{I_{3(21)}}$ and $\ac{c}_{I_{(32)1}}$ are again equivalent to each other, and are found to evaluate to $\ac{c}_{I_1}$. 
\end{example}

Having scratched but the very surface of the intricate matter of \emph{causality} in SqPO-type rewriting, and referring to~\cite{behr2019tracelets,BK2020} for a more in-depth discussion, suffice it here to mention that it is precisely this type of causality that is of key importance to the analysis of biochemical reaction systems in the \textsc{Kappa} framework~\cite{danos2012graphs,Murphy_2010,Boutillier:2018aa}.

\section{Conclusion and Outlook}

This paper provides a self-contained account of a class of rewriting theories that possess the special property of \emph{compositionality}. Based upon the rich theory of ``traditional'' Double-Pushout (DPO)~\cite{ehrig1973, ehrig1991parallelism, CorradiniMREHL97,DBLP:conf/gg/1997handbook,ehrig:2006aa,ehrig2008graph,ehrig2010categorical,Habel:2012aa,ehrig2014mathcal} and %
Sesqui-Pushout (SqPO)~\cite{Corradini_2006,Loewe_2015} rewriting in the setting of $\cM$-adhesive categories~\cite{lack2005adhesive,ehrig2006adhesive,Braatz:2010aa}, %
we lift our earlier results on \emph{compositional concurrency} and \emph{compositional associativity} as developed in~\cite{bdg2016,bp2018,nbSqPO2019,bp2019-ext} to the realm of rewriting systems with conditions on objects and morphisms. %

A key original contribution of the present paper is a derivation of the precise technical conditions under which compositionality in the aforementioned sense is attainable for a given rewriting theory. Concretely (cf.\ Section~\ref{sec:CTP}), it has proved essential to suitably adapt the requirements on the host categories for both DPO- and SqPO-type rewriting with application conditions on objects and %
morphisms~\cite{Pennemann:aa,habel2009correctness,ehrig2014mathcal,hermann2014analysis,GOLAS2014}, concluding that these categories should be $\cM$-adhesive and possess certain additional properties such as the existence of epi-$\cM$-factorizations and $\cM$-effective unions as described in Assumptions~\ref{as:DPO} (DPO-case) and~\ref{as:SqPO} (SqPO-case). 

Our second original contribution consists of a refinement of the theory of conditions in $\cM$-adhesive categories as presented in Section~\ref{sec:cond}. For categories satisfying either of the assumptions mentioned earlier, it is possible to refine a central construction of the theory of conditions, the so-called \emph{shift construction}, into a form that leads to new results on the interplay of rewriting rules and conditions. In particular, the \emph{transport construction} is shown to possess a compatibility property in interaction with the \emph{shift construction}, which is required to ensure compositionality of rewriting with conditions. 

The main results of this paper are the \emph{compositional concurrency} and \emph{associativity} of rewriting with conditions in both the DPO (Section~\ref{sec:DPO}) and SqPO (Section~\ref{sec:SqPO}) cases, demonstrated to hold under the aforementioned assumptions. Our proof strategy and techniques rely heavily on earlier developments in the setting of rewriting without conditions~\cite{bp2018,nbSqPO2019} in conjunction with the aforementioned results on the properties of the refined shift and transport constructions. While admittedly a rather technical work, we believe that our results can serve as a starting point for a new generation of developments in the field of rewriting, in particular in view of static analysis tasks. Indeed, in most applications of practical interest, idealized data structures such as multigraphs must be restricted to more rigid structures (such as e.g.\ site graphs in the \textsc{Kappa} framework~\cite{Danos:2004aa,Boutillier:2018aa}) in order to obtain tractable algorithms of sufficient predictive power. First results in the fields of stochastic mechanics of continuous-time Markov chains based upon stochastic rewriting systems~\cite{bdg2016,bp2018,nbSqPO2019,bdg2019} hint at a great potential of the framework of compositional rewriting with conditions as presented here.  We recently demonstrated that based upon the results of the present paper,  it is possible to develop a faithful encoding of both bio- and organo-chemical reaction systems via typed undirected simple graphs with suitable sets of additional structure constraints~\cite{BK2020}. Together with the respective SqPO- and DPO-type stochastic mechanics frameworks introduced in~\cite{BK2020}, %
a direct and in-detail comparison of the sophisticated rule-based modeling frameworks \textsc{Kappa}~\cite{Boutillier:2018aa} and \textsc{BioNetGen}~\cite{Harris:2016aa} (SqPO/biochemistry) as well as \textsc{M{\O}D}~\cite{Andersen_2016} (DPO/organic chemistry) with the general theory of categorical rewriting over $\cM$-adhesive categories has become possible. As a first hint at the potential of such an approach in view of the development of algorithmic implementations of rewriting-theoretical concepts, we have recently presented a first prototype of an implementation of SqPO-type compositional rewriting systems for rules with conditions based upon the present theory as well as on the \textsc{Microsoft Z3} SMT-solver in~\cite{behr2020commutators}. Work in progress further includes the theory of \emph{tracelets} as introduced in~\cite{behr2019tracelets}, whereby the classical concept of \emph{derivation traces} (i.e.\ of sequences of applications of rewriting rules) is analyzed via exploiting the associativity theorems in order to characterize a derivation trace of length $n\geq1$ in terms of a \emph{minimal} derivation trace of the same length (a so-called \emph{tracelet} of length $n$). As briefly touched upon in the discussion of Examples~\ref{ex:assocDPO} and~\ref{ex:assocSqPO} (which consider the case of a ``triple'' composite of rules, i.e.\ the case $n=3$), compositional associativity of the operations of DPO- and SqPO-type rule compositions (Theorems~\ref{thm:assocAC} and~\ref{thm:assocACsqpo}) opens the possibility to \emph{statically generate} such minimal derivation traces \emph{without} the need to first obtain generic derivation traces (either from simulations or from unfoldings), posing thus considerable potential in terms of analyzing \emph{pathways} and their dynamics in chemical reaction systems.

\appendix\label{sec:appendix}

\section{Collection of technical lemmata for $\cM$-adhesive categories}\label{app:lem}

In many practical computations in the framework of $\cM$-adhesive categories, one may take advantage of a number of technical results, some of which elementary, some of which rather specialized (such as in particular the lemmata pertaining to final pullback complements (FPCs)). For the readers' convenience, we provide here the full list of results used in the framework of this paper. The list is an adaptation of the list provided in~\cite{nbSqPO2019} from the setting of adhesive to $\cM$-adhesive categories. Note also that while the category-theoretical constructions of objects and morphisms via pullbacks, pushouts, pushout complements and FPCs are by definition unique only up to universal isomorphisms, we follow standard practice in simplifying our notations by employing a convention whereby e.g.\ the pushout of an isomorphism as in~\eqref{eq:lemMain11}(A) is denoted by ``equality arrows'' (rather than keeping a notation with generic labels and $\cong$ decorations on arrows). 

\begin{lemma}\label{lem:Main}
Let $\bfC$ be a category.
\begin{romanenumerate}
\item \emph{``Single-square'' lemmas~(see e.g.\ \cite{bp2019-ext}, Lem.~1.7):}
In any category, given commutative diagrams of the form
 \begin{equation}\label{eq:lemMain11}
 \inputtikz{lemMain11}\qquad \inputtikz{lemMain11B}\qquad \inputtikz{lemMain11C}\qquad \inputtikz{lemMain11D}\,,
  \end{equation}
\begin{enumerate}
\item $(A)$ is a pushout for arbitrary morphisms $f$\label{lem:fPOPB}, 
\item $(B)$ is a pullback if and only if the morphism $g$ is a monomorphism\label{lem:monoPB},
\item $(C)$ is a pullback for arbitrary morphisms $f$ if $g$ is a monomorphism\label{lem:idPB},
\item If $a,c\in \mono{\bfC}$ and $(D)$ is a pullback, then $A\cong B$\label{lem:PBiso}.
\end{enumerate}
\item \emph{special $\cM$-adhesivity corollaries} (cf.\ e.g.\ \cite{ehrig2014mathcal}, Lemma~2.6): in any adhesive category,
\begin{enumerate}
  \item pushouts along $\cM$-morphisms are also pullbacks,
  \item (\emph{uniqueness of pushout complements}) given an $\cM$-morphism $A\hookrightarrow C$ and a generic morphism $C\rightarrow D$, the respective pushout complement $A\rightarrow B \xhookrightarrow{b} D$ (if it exists) is unique up to isomorphism, and with $b\in \cM$ (due to stability of $\cM$-morphisms under pushouts).
\end{enumerate}
\item \emph{``Double-square lemmas''}: given commutative diagrams of the shapes
\begin{equation}
\inputtikz{DSLa}\qquad
\inputtikz{DSLb}
\end{equation}
then in any category $\bfC$ (cf.\ e.g.\ \cite{lack2005adhesive}):
\begin{enumerate}
\item \emph{Pullback-pullback (de-)composition}: If $(1)$ is a pullback, then $(1)+(2)$ is a pullback if and only if $(2)$ is a pullback.\label{lem:PBPBdec}
\item \emph{Pushout-pushout (de-)composition}: If $(2)$ is a pushout, then $(1)+(2)$ is a pushout if and only if $(1)$ is a pushout.\label{lem:POPOdec}
\end{enumerate}
If the category is $\cM$-adhesive: 
\begin{enumerate}\setcounter{enumii}{2}
\item \emph{pushout-pullback decomposition} (\cite{ehrig2014mathcal}, Lemma~2.6): If $(1)+(2)$ is a pushout, $(1)$ is a pullback, and if $d'\in \cM$ and ($c\in \cM$ or $e\in \cM$), then $(1)$ and $(2)$ are both pushouts (and thus also pullbacks).\label{lem:POPBdec}
\item \emph{pullback-pushout decomposition} (\cite{GOLAS2014}, Lem.~B.2): if $(1)+(2)$ is a pullback, $(2)$ a pushout, $(1)$ commutes and $a\in \cM$, then $(1)$ is a pullback.\label{lem:PBPOdec}
\item \emph{Horizontal FPC (de-)composition} (cf.\ \cite{Corradini_2006}, Lem.~2 and Lem.~3, compare~\cite{Loewe_2015}, Prop.~36):\footnote{It is worthwhile emphasizing that in these FPC-related lemmas, the ``orientation'' of the diagrams plays an important role. Moreover, the precise identity of the pair of morphisms that plays the role of the final pullback complement in a given square can be inferred from the ``orientation'' specified in the condition part of each statement.} If $(1)$ is an FPC (i.e.\ if $(d',b)$ is FPC of $(a,d)$), then $(1)+(2)$ is an FPC if and only if $(2)$ is an FPC.\label{lem:horFPCdec}
\item \emph{Vertical FPC (de-)composition} (ibid):\label{lem:vertFPCdec} if $(3)$ is an FPC (i.e.\ if $(v.w')$ is FPC of $(w,z)$), then 
\begin{enumerate}
\item if $(4)$ is an FPC (i.e.\ if $(x,v')$ is FPC of $(v,y)$), then $(3)+(4)$ is an FPC (i.e.\ $(x,v'\circ w')$ is FPC of $(v\circ w,z)$);
\item if $(3)+(4)$ is an FPC (i.e.\ if $(x,v'\circ w')$ is FPC of $(v\circ w,z)$), and if $(4)$ is a pullback, then $(4)$ is an FPC (i.e.\ $(x,v')$ is FPC of $(v,y)$).
\end{enumerate}
\item \emph{Vertical FPC-pullback decomposition} (compare~\cite{Loewe_2015}, Lem.~38): If $v\in \cM$, if $(4)$ is a pullback and if $(3)+(4)$ is an FPC (i.e.\ if $(x,v'\circ w')$ is FPC of $(v\circ w,z)$), then $(3)$ and $(4)$ are FPCs.\label{lem:vertFPCpbDec}
\end{enumerate}
If the category is $\cM$-adhesive and in addition possesses an epi-$\cM$-factorization and $\cM$-effective unions: 
\begin{enumerate}\setcounter{enumii}{7}
\item \emph{Vertical FPC-pushout decomposition}: If all morphisms of the squares $(3)$ and $(4)$ except $v$ are in $\cM$, if $v\circ w\in \cM$, if $(3)$ is a pushout and if $(3)+(4)$ is an FPC (i.e.\ if $(x,v'\circ w')$ is FPC of $(v\circ w,z)$), then $(4)$ is an FPC and $v\in \cM$.\label{lem:vertFPCpoDec}
\end{enumerate}
\end{romanenumerate}
\begin{proof}
Proofs of all but the very last statement are found in the references quoted in each statement. It thus remains to prove our novel vertical FPC-pushout decomposition result (in the setting of $\cM$-adhesive categories). To this end, we first invoke pullback-pushout decomposition (Lemma~\ref{lem:Main}\eqref{lem:PBPOdec}): since $(3)+(4)$ is an FPC and thus in particular a pullback, since $(3)$ is a pushout, $(4)$ commutes, and finally since $x\in \cM$, we find that $(4)$ is a pullback. %

In order to demonstrate that $v\in \cM$, construct the commutative cube below left:
\begin{equation}
\vcenter{\hbox{\includegraphics[scale=0.9]{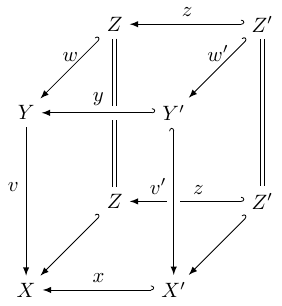}}}\qquad \qquad
\inputtikz{commCubeAuxLemA2}
\end{equation}
Since the bottom square is the FPC (and thus pullback) $(3)+(4)$, and since the right square is a pullback via Lemma~\ref{lem:Main}\eqref{lem:idPB} (because $v'\in \cM\subset \mono{\bfC}$), by pullback composition the square $\cSquare{Z',Z,X,Y'}$ (the right plus the bottom square) is a pullback. Thus assembling the commutative diagram as shown above right, since by assumption $(3)$ is a pushout and all arrows except $v$ are in $\cM$, invoking the property of $\cM$-effective unions (Definition~\ref{def:MeffUn}) allows us to prove that also $v\in \cM$. %
Finally, by applying vertical FPC-pullback decomposition, we conclude that $(4)$ is an FPC. 
\end{proof}
\end{lemma}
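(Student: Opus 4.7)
The novel content to establish is item (viii), the vertical FPC--pushout decomposition; the remaining items are either elementary or quoted verbatim from the cited literature. My plan has two largely independent halves: first, show that $(4)$ is a pullback and in fact an FPC; second, show that $v \in \cM$. The tools available are pullback--pushout decomposition (Lemma~\ref{lem:Main}\ref{lem:PBPOdec}), vertical FPC--pullback decomposition (Lemma~\ref{lem:Main}\ref{lem:vertFPCpbDec}), and, crucially for getting $v\in \cM$, the property of $\cM$-effective unions (Definition~\ref{def:MeffUn}), which is the reason Assumption~\ref{as:DPO} was imposed in the first place.

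For the pullback half, the plan is to observe that $(3)+(4)$ is an FPC, hence in particular a pullback; that $(3)$ is a pushout by assumption; that $(4)$ commutes; and that the top edge of $(4)$, namely $x$, is in $\cM$. These are exactly the hypotheses of pullback--pushout decomposition, so $(4)$ is a pullback. Once $v\in \cM$ has been established (see the next paragraph), I can then invoke vertical FPC--pullback decomposition on the data $(3)+(4)$ being an FPC, $(4)$ being a pullback, and $v\in \cM$, to conclude that $(4)$ is an FPC. So the real content of the proof is concentrated in showing $v \in \cM$; the FPC conclusion is then essentially automatic.

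To show $v \in \cM$, I would set up an auxiliary commutative cube whose bottom square is $(3)+(4)$ (an FPC, hence a pullback) and whose right face is the identity--style pullback $\cSquare{Z',Z,X',Y'}$ (a pullback by Lemma~\ref{lem:Main}\ref{lem:idPB}, since $v'\in \cM$ is a monomorphism). Composing the right square with the bottom gives that $\cSquare{Z',Z,X,Y'}$ is a pullback. Then I assemble the data $v\circ w\in \cM$, the $\cM$-morphisms $w,y,v\circ y\in \cM$, plus $(3)$ being a pushout, into the shape required by Definition~\ref{def:MeffUn} of $\cM$-effective unions, whose conclusion is precisely that the induced mediating morphism---namely $v$---lies in $\cM$.

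The main obstacle I anticipate is the second half: locating the right commutative diagram into which to feed the $\cM$-effective unions axiom. The shape demanded by Definition~\ref{def:MeffUn} is quite specific (a span whose pushout sits above a codomain with a monomorphism into it, and whose exterior is a pullback), so I have to carefully orient the cube so that $(3)$ plays the role of the pushout square and the exterior pullback is the one obtained by composing $(3)+(4)$ with the $\cM$-identity pullback on the right face. Once this matching is carried out cleanly, the rest of the argument is just reusing the standard $\cM$-adhesive toolkit; the key creative step is recognising that $\cM$-effective unions is exactly the axiom bridging ``$v$ behaves like an $\cM$-morphism cohomologically'' to ``$v$ is an $\cM$-morphism.''
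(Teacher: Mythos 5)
Your proposal is correct and follows essentially the same route as the paper's proof: pullback--pushout decomposition to show $(4)$ is a pullback, an identity-style pullback (via Lemma~\ref{lem:Main}\ref{lem:idPB} applied to the mono $v'$) composed with the pullback $(3)+(4)$ to obtain the exterior pullback $\cSquare{Z',Z,X,Y'}$ feeding into the $\cM$-effective-unions axiom to get $v\in\cM$, and finally vertical FPC--pullback decomposition to conclude $(4)$ is an FPC. The only blemishes are cosmetic (e.g.\ $x$ is the bottom, not the top, edge of $(4)$, and the right cube face is mislabelled), neither of which affects the argument.
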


\section{Closure of $\cM$ under isomorphisms}\label{app:isoM}

As mentioned in the main text, the definition of $\cM$-adhesive categories as given in Definition~\ref{def:Madh} is well known to be overcomplete. With the $\cM$-decomposition property derivable directly via stability of $\cM$-morphisms under pullbacks, we present below a derivation of the fact that $\cM$ contains all isomorphisms.
\begin{romanenumerate}
	\item Given an arbitrary $\cM$-morphism $(m:A\hookrightarrow B)\in \cM$, since $m= m\circ id_A$,  by the decomposition property of $\cM$-morphisms (Definition~\ref{def:Madh}$(i)(a)$), we find that $id_A\in \cM$, for every $A\in \obj{\bfC}$.
	\item Let $(\phi:A\rightarrow B)\in \iso{\bfC}$ be an isomorphism, and take the pullback of the cospan $(\phi, id_A)$, resulting in a span $(id_A,\phi^{-1})$. Then by stability of $\cM$-morphisms under pullback (Definition~\ref{def:Madh}$(iii)$), we conclude that $\phi^{-1}\in \cM$ (i.e. by $\cM$-decomposition also $\phi\in \cM$).
\end{romanenumerate}

\section{Some useful consequences of epi-$\cM$-factorizations}
\label{app:epiMcons}

\begin{lemma}\label{lem:pbEpiMonoId}
  Let $\bfC$ be an $\cM$-adhesive category satisfying Assumption~\ref{as:DPO}. Then given a commutative diagram as below left where $a,b\in \cM$ and $c$ is an arbitrary morphism,
  \begin{equation}\label{eq:commCubeAuxB}
     \inputtikz{commCubeAuxB1}\qquad  \inputtikz{commCubeAuxB2}\qquad \vcenter{\hbox{\includegraphics[scale=0.9]{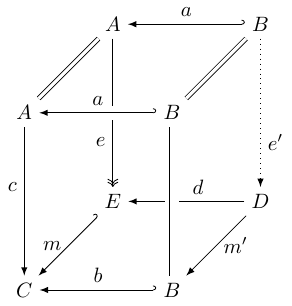}}}
  \end{equation}
then if the epi-$\cM$-factorization of the morphism $c$ reads $c=m\circ e$, with $(m:E\hookrightarrow C)\in \cM$ and  $(e:A\rightarrow E)\in \epi{\bfC}$, and if we form the pullback $D=\pB{E\hookrightarrow C\hookleftarrow B}$ (thus inducing by the universal property a morphism $e':B\rightarrow D$), then $B\cong D$.
\end{lemma}
\begin{proof}
 Note first that by virtue of stability of $\cM$-morphisms under pullbacks and by the decomposition property of $\cM$-morphisms, respectively, one may conclude that $d,m'\in\cM$ and $e'\in \cM$ (since $(id_B:B\rightarrow B)\in \cM$). Next, form the commutative cube diagram as depicted in the right part of~\eqref{eq:commCubeAuxB}. The left square is a pullback (Lemma~\ref{lem:Main}\ref{lem:idPB}), and so is the top square (Lemma~\ref{lem:Main}\ref{lem:fPOPB}). Since by construction also the bottom square is a pullback, by virtue of pullback-pullback decomposition (Lemma~\ref{lem:Main}\ref{lem:PBPBdec}), the right square is a pullback. Thus by stability of isomorphisms under pullbacks (Lemma~\ref{lem:Main}\ref{lem:PBiso}), we obtain that $D\cong B$.
\end{proof}

\section{Some useful consequences of (strict) $\cM$-initiality}\label{app:sMinit}

In the case that an $\cM$-adhesive category possesses a (strict) $\cM$-initial object, the following useful properties can be derived.

\begin{lemma}\label{lem:POCsepcial}
  Let $\bfC$ be an $\cM$-adhesive category with $\cM$-initial object $\mIO$. Then the commutative diagram of $\cM$-morphisms below is both a pushout and a pullback:
\begin{equation}
\inputtikz{POCsepcial}\,.
\end{equation}
\begin{proof}
 Consider the following commutative diagram:
\begin{equation}
\inputtikz{POCsepcialProof}
\end{equation}
Since the outer square and the left square are pushouts, according to the pushout-pushout decomposition property stated in Lemma~\ref{lem:Main}\ref{lem:POPOdec}, the right inner square is also a pushout (and thus a pullback).
\end{proof}
\end{lemma}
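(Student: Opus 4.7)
The plan is to reduce the statement to an application of pushout-pushout decomposition, using the characterisation of binary coproducts as pushouts over the $\cM$-initial object provided by Lemma~\ref{lem:binaryCoproducts}. Concretely, I would assemble the following commutative diagram, where $i_A, i_B, i_C$ denote the (unique) $\cM$-initial morphisms, $\iota_A, \iota_B, \iota_C$ denote the coproduct injections, and $f:A\hookrightarrow B$ is the given top edge of the square (while the bottom edge is the induced morphism $f+id_C$):
\begin{equation*}
\begin{mycd}
\mIO \ar[r,hook,"i_A"] \ar[d,hook,"i_C"'] & A \ar[r,hook,"f"] \ar[d,hook,"\iota_A"] & B \ar[d,hook,"\iota_B"]\\
C \ar[r,hook,"\iota_C"'] & A+C \ar[r,hook,"f+id_C"'] & B+C
\end{mycd}
\end{equation*}

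First, I would observe that the left inner square is a pushout directly from Lemma~\ref{lem:binaryCoproducts} (it is the defining pushout of the coproduct $A+C$). Next, I would argue that the outer rectangle is also a pushout: indeed, its top and left composites are the initial morphisms $i_B: \mIO \hookrightarrow B$ and $i_C: \mIO \hookrightarrow C$ respectively (by uniqueness of morphisms out of the $\cM$-initial object), so the outer rectangle presents $B+C$ as the pushout of the span $B \xhookleftarrow{i_B} \mIO \xhookrightarrow{i_C} C$, which is again an instance of Lemma~\ref{lem:binaryCoproducts}.

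With these two pushouts in place, the right inner square is a pushout by pushout-pushout decomposition (Lemma~\ref{lem:Main}\ref{lem:POPOdec}). This gives the ``pushout'' part of the claim. For the ``pullback'' part, I would then note that all four morphisms of the right inner square lie in $\cM$: $\iota_A$ and $\iota_B$ are coproduct injections (in $\cM$ by Lemma~\ref{lem:binaryCoproducts}), while $f$ and $f+id_C$ are in $\cM$ by the hypothesis of the lemma (and the latter is moreover guaranteed to be in $\cM$ by stability of $\cM$-morphisms under pushouts). Since the underlying category is $\cM$-adhesive, pushouts along $\cM$-morphisms are also pullbacks (Lemma~\ref{lem:Main}(ii)(a)), so the right inner square is a pullback as well.

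I do not anticipate a serious obstacle here; the only point that requires care is verifying that the outer rectangle truly has the form of a coproduct pushout, which rests on the uniqueness of morphisms out of $\mIO$ and hence on the $\cM$-initiality assumption. Everything else is a direct book-keeping application of the decomposition and $\cM$-adhesivity lemmas collected in Appendix~\ref{app:lem}.
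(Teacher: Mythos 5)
Your proposal is correct and follows essentially the same route as the paper: both identify the left inner square and the outer rectangle as the defining coproduct pushouts over $\mIO$ (Lemma~\ref{lem:binaryCoproducts}), apply pushout-pushout decomposition (Lemma~\ref{lem:Main}\ref{lem:POPOdec}) to conclude the right square is a pushout, and obtain the pullback property from the fact that pushouts along $\cM$-morphisms are pullbacks. The extra care you take in justifying that the outer rectangle is the coproduct pushout (via uniqueness of morphisms out of $\mIO$) is left implicit in the paper but is exactly the right observation.
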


\begin{lemma}\label{lem:PBsepcial}
  Let $\bfC$ be an $\cM$-adhesive category with a strict $\cM$-initial object $\mIO$, and consider the commutative diagrams of $\cM$-morphisms below,
\begin{equation}
\inputtikz{PBsepcial1}\qquad \inputtikz{PBsepcial2}\,.
\end{equation}
If $(1)$ is a pullback, then $A\cong B$, and if $(2)$ is a pullback, then $B\cong F$.
\end{lemma}
\begin{proof}
Consider the following auxiliary commutative diagrams:
\begin{equation}
\vcenter{\hbox{\includegraphics[scale=1]{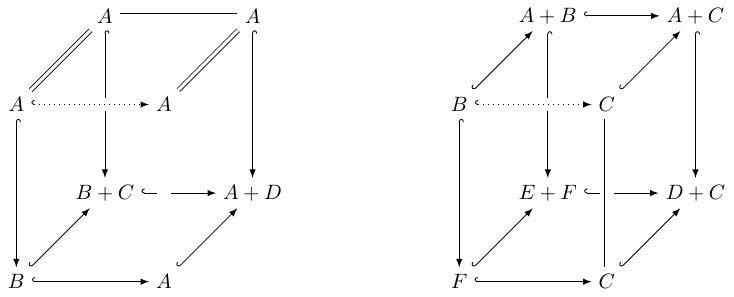}}}\,.
\end{equation}
The left diagram is formed by taking a pullback to obtain the bottom square, followed by taking the appropriate pullbacks to form the left and right squares (which induces the dotted arrow by the universal property of pullbacks). Since the left, back and right squares are pullbacks, by pullback-pullback decomposition (see Lemma~\ref{lem:Main}\ref{lem:PBPBdec}) so is the front square. Then by stability of isomorphisms under pullbacks (see Lemma~\ref{lem:Main}\ref{lem:PBiso}), $A\cong B$. The right diagram is constructed in precisely the same fashion, thus proving that $B\cong F$.
\end{proof}

The notion of \emph{strict} $\cM$-initiality also plays an interesting role in our framework of conditions on objects and rewriting rules studied in the main part of this paper due to the following result:

\begin{lemma}[$\cM$-morphisms into binary coproducts]\label{lem:MintoCop}
  Let $\bfC$ be an $\cM$-adhesive category (for $\cM$ a class of monomorphisms) that possesses a strict $\cM$-initial object $\mIO\in \obj{\bfC}$. Then for all objects $X,Y,Z\in \obj{\bfC}$, if there exists an $\cM$-morphism $X+Y\xhookleftarrow{f}Z$, then $Z\cong V+W$ with
  \begin{equation}\label{eq:lemMintoCopVW}
    V=\pB{X\hookrightarrow X+Y\xhookleftarrow{f} Z}\quad \text{and}\quad 
    W=\pB{Z\xhookrightarrow{f} X+Y\hookleftarrow Y}\,,
  \end{equation}
  and consequently (by the universal property of binary coproducts) $f=[v,w]$ (with $v:V\hookrightarrow X$ and $w:W\hookrightarrow Y$ both in $\cM$).
\end{lemma}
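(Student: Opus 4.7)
The approach will be to apply the $\cM$-van Kampen property to a commutative cube whose bottom face is the coproduct pushout $X+Y=\pO{X\hookleftarrow\mIO\hookrightarrow Y}$ (a pushout along the $\cM$-morphism $\mIO\hookrightarrow X$ by Lemma~\ref{lem:binaryCoproducts}) and whose ``front right'' vertical is the given $\cM$-morphism $f$. First, I would construct $V$ and $W$ exactly as in~\eqref{eq:lemMintoCopVW}; stability of $\cM$-morphisms under pullback together with the $\cM$-decomposition property immediately yields four $\cM$-morphisms $v:V\hookrightarrow X$, $v':V\hookrightarrow Z$, $w:W\hookrightarrow Y$, $w':W\hookrightarrow Z$, so that the two front faces of the cube are pullbacks by the very definitions of $V$ and $W$.

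The central task is to choose the ``top-back'' corner $A'$ so that both back faces become pullbacks. I would take $A'=\mIO$, with the top-back edges $\mIO\hookrightarrow V$ and $\mIO\hookrightarrow W$ supplied by $\cM$-initiality and the back vertical $\mIO\to\mIO$ taken to be the identity. Verifying that the back-left face realizes $\mIO$ as $\mIO\times_X V$ (and analogously $\mIO\cong\mIO\times_Y W$ for the back-right face) is the heart of the argument: each such pullback comes equipped with a canonical morphism to $\mIO$, which by the \emph{strictness} assumption on the $\cM$-initial object must be an isomorphism. This is the only step where strictness is genuinely used and is the single non-routine ingredient of the proof.

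Once both back faces are verified to be pullbacks, the $\cM$-VK property (with $b=v$, $c=w$, $d=f$ all in $\cM$, and the bottom face a pushout along the $\cM$-morphism $\mIO\hookrightarrow X$) immediately turns the two front-face pullbacks into the conclusion that the top face is a pushout, i.e.\ $Z\cong\pO{V\hookleftarrow\mIO\hookrightarrow W}\cong V+W$ by another application of Lemma~\ref{lem:binaryCoproducts}. The decomposition $f=[v,w]$ then follows at once from the universal property of the coproduct, since commutativity of the two pullback squares defining $V$ and $W$ encodes precisely $f\circ v'=\iota_X\circ v$ and $f\circ w'=\iota_Y\circ w$. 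Beyond the careful set-up of the cube and the strictness argument identified above, I do not expect any further obstacle.
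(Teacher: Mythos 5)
Your proof is correct and follows essentially the same route as the paper's: both build the commutative cube over the coproduct pushout $X+Y=\pO{X\hookleftarrow \mIO\hookrightarrow Y}$ and conclude via the $\cM$-van Kampen property, with strictness of $\mIO$ used to control the top-back corner. The only cosmetic difference is that the paper obtains that corner as the pullback $Z'=\pB{V\hookrightarrow Z\hookleftarrow W}$ and then invokes strictness to get $Z'\cong\mIO$, whereas you place $\mIO$ there directly and use strictness to verify that the back faces are pullbacks.
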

\begin{proof}
Construct the following commutative cube, with the bottom square a pushout, and where the front and left faces are formed by taking pullbacks as described in~\eqref{eq:lemMintoCopVW} in order to obtain $V$ and $W$, followed by forming $Z'=\pB{V\hookrightarrow Z\hookleftarrow W}$, which provides an arrow $Z'\rightarrow \mIO$ by the universal property of pushouts (since the bottom square is a PO):
\begin{equation}
\vcenter{\hbox{\includegraphics[scale=1]{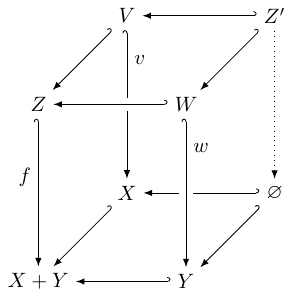}}}
\end{equation}
By virtue of strict $\cM$-initiality of $\mIO$, the existence of the arrow $Z'\rightarrow \mIO$ entails that $Z'\cong \mIO$. Since the bottom square is a pushout, all vertical squares are pullbacks and all vertical morphisms are in $\cM$, by the $\cM$-van Kampen property the top square is a pushout. Thus $Z\cong V+W$ by uniqueness of pushouts up to isomorphisms, and the claim follows.
\end{proof}

\paragraph{Acknowledgments.}

The work of NB was supported by a \emph{Marie Sk\l{}odowska-Curie Individual fellowship} (Grant Agreement No.~753750 -- RaSiR). We would like to thank P.A.M.~Melli\`{e}s, P.~Soboci\'{n}ski and N.~Zeilberger for fruitful discussions.



\end{document}